\newlength{\hatchspread}
\newlength{\hatchthickness}
\newlength{\hatchshift}
\newcommand{\hatchcolor}{}
\tikzset{hatchspread/.code={\setlength{\hatchspread}{#1}},
	hatchthickness/.code={\setlength{\hatchthickness}{#1}},
	hatchshift/.code={\setlength{\hatchshift}{#1}},
	hatchcolor/.code={\renewcommand{\hatchcolor}{#1}}}
\tikzset{hatchspread=3pt,
	hatchthickness=0.3pt,
	hatchshift=0pt,
	hatchcolor=black}
\newtheorem{assumption}{Assumption}
\newcommand{\Q}{\mathbb{Q}}
\newcommand{\N}{\mathbb{N}}
\newcommand{\gap}{\textsc{Gap}}
\newcommand{\constrained}{\textsc{Constrained}}
\newcommand{\restrict}{\textsc{Restrict}}
\newcommand{\dualrestrict}{\textsc{DualRestrict}}
\newcommand{\shortestpath}{\textsc{ShortestPath}}
\newcommand{\spanningtree}{\textsc{SpanningTree}}
\newcommand{\partition}{\textsc{Partition}}
\newcommand{\opt}{\textsc{opt}}
\newcommand{\xnext}{x^{\textnormal{next}}}
\newcommand{\xleft}{x^{\textnormal{left}}}
\begin{document}

\title{One-Exact Approximate Pareto Sets \thanks{This work was supported by the bilateral cooperation project ``Approximation methods for multiobjective optimization problems'' funded by the German Academic Exchange Service (DAAD, Project-ID 57388848) and by Campus France, PHC PROCOPE 2018 (Project no. 40407WF) as well as by the DFG grants RU 1524/6-1 and TH 1852/4-1.}
}


\author{Arne Herzel \and
        Cristina Bazgan         \and
        Stefan Ruzika			\and
        Clemens Thielen			\and
        Daniel Vanderpooten
}

\authorrunning{A.~Herzel, C.~Bazgan, S.~Ruzika, C.~Thielen, D.~Vanderpooten} 

\institute{Cristina Bazgan \and Daniel Vanderpooten \at
			Universit\'e Paris-Dauphine, PSL Research University, CNRS,  LAMSADE, 75016 Paris, France, \email{\{bazgan,daniel.vanderpooten\}@lamsade.dauphine.fr} \\
			\and
			Arne Herzel \and Stefan Ruzika \and Clemens Thielen \at
			University of Kaiserslautern, Department of Mathematics,
			Paul-Ehrlich-Str.~14,\\ D-67663~Kaiserslautern, Germany,
			\email{\{herzel,ruzika,thielen\}@mathematik.uni-kl.de}
}


\maketitle

\begin{abstract}
  Papadimitriou and Yannakakis~\cite{Papadimitriou+Yannakakis:multicrit-approx} show that the polynomial-time solvability of a certain singleobjective problem determines the class of multiobjective optimization problems that admit a polynomial-time computable $(1+\varepsilon, \dots , 1+\varepsilon)$-approximate Pareto set (also called an \emph{$\varepsilon$-Pareto set}). Similarly, in this article, we characterize the class of problems having a polynomial-time computable approximate $\varepsilon$-Pareto set that is \emph{exact in one objective} by the efficient solvability of an appropriate singleobjective problem. This class includes important problems such as multiobjective shortest path and spanning tree, and the approximation guarantee we provide is, in general, best possible. Furthermore, for biobjective problems from this class, we provide an algorithm that computes a one-exact $\varepsilon$-Pareto set of cardinality at most twice the cardinality of a smallest such set and show that this factor of~$2$ is best possible. For three or more objective functions, however, we prove that no constant-factor approximation on the size of the set can be obtained efficiently.
\keywords{multiobjective optimization \and approximation algorithm \and approximate Pareto set \and scalarization}
\end{abstract}

\newpage

\section{Introduction}\label{sec:introduction}

In many cases, real-world optimization problems involve several conflicting objectives, e.g., the minimization of cost and time in transportation systems or the maximization of profit and security in investments. In this context, solutions optimizing all objectives simultaneously usually do not exist. Therefore, in order to support decision making, so-called \emph{efficient} (or \emph{Pareto optimal}) solutions achieving a good compromise among the objectives are considered. More formally, a solution is said to be efficient if any other solution that is better in some objective is necessarily worse in at least one other objective. The image of an efficient solution in the objective space is  called a \emph{nondominated point}. 

When no prior preference information is available, one main goal of multiobjective optimization is to determine the set of all nondominated points and provide, for each of them, one corresponding efficient solution. 

\smallskip

Several results in the literature, however, show that multiobjective optimization problems are hard to solve exactly~\cite{Serafini:Complexity,Ehrgott:book} and, in addition, the cardinalities of the set of nondominated points (the \emph{nondominated set}) and the set of efficient solutions (the \emph{efficient set}) may be exponentially large for discrete problems (and are typically infinite for continuous problems). 

This impairs the applicability of exact solution methods to real-life problems and provides a strong motivation for studying \emph{approximations of multiobjective optimization problems}.

\smallskip


\subsection{Related Work}\label{subsec:related-work}

The systematic study of generally applicable approximation methods for multiobjective optimization problems started with the seminal work of Papadimitriou and Yannakakis~\cite{Papadimitriou+Yannakakis:multicrit-approx}. They show that, for any $\varepsilon>0$, any multiobjective optimization problem with a constant number of positive-valued, polynomially computable objective functions admits a $(1+\varepsilon,\dots,1+\varepsilon)$-approximate Pareto set (also called an \emph{$\varepsilon$-Pareto set}) with cardinality polynomial in the encoding length of the input and $\frac{1}{\varepsilon}$. Moreover, they show that such a set is computable in (fully) polynomial time if and only if the following auxiliary problem called the \emph{gap problem} ($\gap$) can be solved in (fully) polynomial time:\footnote{The definition of $\gap$ provided here is for minimization problems. The definition for maximization problems is completely analogous.}

	Given an instance of a $p$-objective minimization problem, a vector~$b\in \mathbb{R}^p$, and $\delta>0$, either return a feasible solution~$x$ whose objective value $f(x)\in\mathbb{R}^p$ satisfies $f_j(x)\leq b_j$ for all~$j$ or answer correctly that there is no feasible solution~$x'$ with $f_j(x')\leq \frac{b_j}{1+\delta}$ for all~$j$.

Thus, the result of Papadimitriou and Yannakakis~\cite{Papadimitriou+Yannakakis:multicrit-approx} shows that the po\-ly\-no\-mi\-al-time solvability of $\gap$ provides a complete characterization of the class of problems for which $\varepsilon$-Pareto sets can be computed in polynomial time.

\smallskip

More recent articles building upon the results of~\cite{Papadimitriou+Yannakakis:multicrit-approx} present methods that additionally yield bounds on the cardinality of the computed $\varepsilon$-Pareto set relative to the cardinality of a smallest $\varepsilon$-Pareto set possible~\cite{Vassilvitskii+Yannakakis:trade-off-curves,Diakonikolas+Yannakakis:approx-pareto-sets,Bazgan+etal:min-pareto,Koltun+Papadimitriou:approx-dom-repr}.

\smallskip

Koltun and Papadimitriou~\cite{Koltun+Papadimitriou:approx-dom-repr} show that, if all feasible solutions of a biobjective problem are given explicitly in the input (which is usually not the case for combinatorial problems, where the feasible set is in most cases given implicitly, and its cardinality is exponentially large in the input size), it is possible to compute an $\varepsilon$-Pareto set of minimum cardinality in polynomial time using a greedy procedure. This greedy procedure can be generalized to the case that the budget-constrained problem associated with the given biobjective problem can be solved exactly in polynomial time~\cite{Diakonikolas+Yannakakis:approx-pareto-sets}. For three or more objectives, however, computing a minimum-cardinality $\varepsilon$-Pareto set is $\textsf{NP}$-hard even if all feasible solutions are given explicitly.

\smallskip

Again for biobjective problems, Vassilvitskii and Yannakakis~\cite{Vassilvitskii+Yannakakis:trade-off-curves} show that, using a polynomial-time algorithm for $\gap$ as a subroutine, it is possible to compute an $\varepsilon$-Pareto set whose cardinality is at most $3$~times larger than the cardinality of a smallest $\varepsilon$-Pareto set. Moreover, this factor of~$3$ is shown to be best possible in two different ways: (1) No \emph{generic} algorithm that uses only a routine for $\gap$ can obtain a factor smaller than~$3$ without solving $\gap$ for exponentially large values of~$\frac{1}{\delta}$ (even if $\textsf{P}=\textsf{NP}$), and (2) for some biobjective problems for which $\gap$ is polynomially solvable, it is $\textsf{NP}$-hard to obtain a factor smaller than~$3$. An alternative, simpler algorithm based on $\gap$ that also obtains a factor of~$3$ is presented in~\cite{Bazgan+etal:min-pareto}. For three or more objectives, however, Vassilvitskii and Yannakakis~\cite{Vassilvitskii+Yannakakis:trade-off-curves} show that no generic algorithm based on solving $\gap$ can obtain any constant factor with respect to the cardinality of a smallest $\varepsilon$-Pareto set.

\smallskip

Diakonikolas and Yannakakis~\cite{Diakonikolas+Yannakakis:approx-pareto-sets} show that, for a broad class of biobjective problems including $\shortestpath$ and $\spanningtree$, a factor of~$2$ can be obtained with respect to the cardinality of a smallest $\varepsilon$-Pareto set. To achieve this, they use subroutines for two different 
singleobjective auxiliary problems called $\restrict$ and $\dualrestrict$. Both of these problems are harder to solve than $\gap$, but $\restrict$ and $\dualrestrict$ are polynomially equivalent to each other for biobjective problems. The factor of~$2$ is again shown to be best possible in~\cite{Diakonikolas+Yannakakis:approx-pareto-sets} in the sense that no generic algorithm based on $\restrict$ and $\dualrestrict$ can obtain a smaller factor and, for some biobjective problems for which $\restrict$ and $\dualrestrict$ are polynomially solvable, it is $\textsf{NP}$-hard to obtain a smaller factor. For three or more objectives, however, it is not known whether $\restrict$ and $\dualrestrict$ can be used to improve upon the results obtained via $\gap$ with respect to the computation of (small) $\varepsilon$-Pareto sets. Moreover, these two subproblems are not polynomially equivalent anymore in the case of three or more objectives.

\smallskip

There are also many specialized approximation algorithms for particular multiobjective problems available. Among those, there are two algorithms that actually yield approximations that are exact in one objective: For multiobjective $\shortestpath$, Tsaggouris and Zaroliagis~\cite{Tsaggouris+Zaroliagis:mult-shortest-path} present a dynamic-programming-based algorithm that yields a $(1,1+\varepsilon,\dots,1+\varepsilon)$-approximate Pareto set for any number of objective functions. For the min-cost-makespan scheduling problem, Angel et al.~\cite{Angel+etal:min-cost-makespan,Angel+etal:bicriteria-scheduling} present an algorithm computing a $(1,1+\varepsilon)$-approximate Pareto set. Neither of these algorithms, however, yields any worst-case guarantee on the cardinality of the computed approximate Pareto set.

\subsection{Our Contribution} \label{subsec:our-contribution}

We consider general multiobjective optimization problems with an arbitrary, fixed number of objectives and show that, for any such problem, there exist poly\-nomially-sized $\varepsilon$-Pareto sets that are \emph{exact in one objective function}. Assuming without loss of generality that the first objective function is the one to be optimized exactly, we refer to such $(1,1+\varepsilon,\dots,1+\varepsilon)$-approximate Pareto sets as \emph{one-exact} $\varepsilon$-Pareto sets. Consequently, we improve upon the existence result for polynomially-sized $\varepsilon$-Pareto sets of Papadimitriou and Yannakakis~\cite{Papadimitriou+Yannakakis:multicrit-approx} using the same assumptions. Moreover, the approximation guarantee of $(1,1+\varepsilon,\dots,1+\varepsilon)$ we provide is best possible in the sense that polynomially-sized approximate Pareto sets that are exact in more than one objective do, in general, not exist.

\smallskip

We then consider the class of problems for which the $\dualrestrict$ subproblem considered in~\cite{Diakonikolas+Yannakakis:approx-pareto-sets} can be solved in polynomial time. We first show that, for any constant number of objective functions, the polynomial-time solvability of $\dualrestrict$ characterizes the class of problems for which one-exact $\varepsilon$-Pareto sets can be computed in polynomial time. Consequently, even for more than two objective functions, our result provides a complete characterization of the approximation quality achievable for the class of problems studied in~\cite{Diakonikolas+Yannakakis:approx-pareto-sets}, which includes $\shortestpath$, $\spanningtree$, and many more.

\smallskip

Moreover, we provide results about the cardinality of the computed one-exact $\varepsilon$-Pareto sets compared to the cardinality of a smallest such set. While we show that the cardinality of a smallest one-exact $\varepsilon$-Pareto set (i.e., a one-exact $\varepsilon$-Pareto set having minimum cardinality among all one-exact $\varepsilon$-Pareto sets) can be much larger than the cardinality of a smallest $\varepsilon$-Pareto set (i.e., an $\varepsilon$-Pareto set having minimum cardinality among all $\varepsilon$-Pareto sets) even for biobjective problems, we also prove that the cardinality of a smallest one-exact $\varepsilon$-Pareto set can again be approximated up to a factor of~$2$ in the biobjective case by using a generic algorithm based on solving $\dualrestrict$, and we show that this factor is best possible given our assumptions. For three or more objectives, however, it is again impossible to obtain any constant factor approximation on the cardinality efficiently by using only routines for $\dualrestrict$.

\smallskip

For multiobjective $\spanningtree$, our generic algorithms yield the first poly\-nomial-time methods for computing one-exact $\varepsilon$-Pareto sets when using the algorithm provided in~\cite{Grandoni+etal:new-approaches} to solve $\dualrestrict$. For multiobjective $\shortestpath$, using the algorithm from~\cite{Horvath+Kis:Dual-RSPP} to solve $\dualrestrict$, our algorithms have running times competitive with the running time of the specialized algorithm for computing one-exact $\varepsilon$-Pareto sets for $\shortestpath$ provided in~\cite{Tsaggouris+Zaroliagis:mult-shortest-path}. This is particularly noteworthy since the algorithm from~\cite{Tsaggouris+Zaroliagis:mult-shortest-path} is currently the algorithm with the best worst-case running time for computing $\varepsilon$-Pareto sets for $\shortestpath$ even when no objective function is to be optimized exactly. Moreover, for the case of two objectives, our biobjective algorithm additionally provides a worst-case guarantee on the cardinality of the computed one-exact $\varepsilon$-Pareto set (while the algorithm from~\cite{Tsaggouris+Zaroliagis:mult-shortest-path} provides no such guarantee).

	\section{Preliminaries}\label{sec:preliminaries}

We consider general multiobjective minimization and maximization problems formally defined as follows:

\begin{definition} [Multiobjective Minimization/Maximization Problem]
	A \emph{multiobjective optimization problem}~$\Pi$ is given by a set of instances. Each instance $I$ consists of a (finite or infinite) set~$X^I$ of feasible solutions and a vector~$f^I = (f^I_1,\ldots, f^I_{p})$ of $p$ objective functions~$f^I_i : X^I \to \Q$ for $i = 1\ldots,p$. In a minimization problem, all objective functions~$f^I_i$ should be minimized, in a maximization problem, they should be maximized. The feasible set~$X^I$ might not be given explicitly. 
\end{definition}

Here, the number~$p$ of objective functions in a multiobjective optimization problem~$\Pi$ is assumed to be constant. The solutions of interest are those for which it is not possible to improve the value of one objective function without worsening the value of at least one other objective. Solutions with this property are called \emph{efficient solutions}:

\begin{definition}
	For an instance~$I$ of a minimization (maximization) problem, a solution~$x \in X^I$ \emph{dominates} another solution~$x' \in X^I$ if $f^I_i(x) \leq f^I_i(x')$ ($f^I_i(x) \geq f^I_i(x')$) for $i = 1,\ldots, p$ and $f^I_i(x) < f^I_i(x')$ ($f^I_i(x) > f^I_i(x')$) for at least one~$i$. A solution~$x \in X^I$ is called \emph{efficient} if it is not dominated by any other solution~$x' \in X^I$. In this case, we call the corresponding image~$f^I(x) \in f^I(X^I) \subseteq \Q^p$ a \emph{nondominated point}. The set~$X^I_E\subseteq X^I$ of all efficient solutions is called the \emph{efficient set} (or \emph{Pareto set}) and the set~$Y^I_N\colonequals f^I(X^I_E)$ of nondominated points is called the \emph{nondominated set}.
\end{definition}

The goal in multiobjective optimization typically consists of computing the nondominated set~$Y^I_N$ and, for each nondominated point~$y\in Y^I_N$, one corresponding efficient solution~$x\in X^I_E$ with $f(x)=y$.

\smallskip

Throughout the paper, we make the standard assumption of rational, positive-valued, polynomially computable objective functions used in the context of approximation of multiobjective problems (cf.~\cite{Papadimitriou+Yannakakis:multicrit-approx,Vassilvitskii+Yannakakis:trade-off-curves,Diakonikolas+Yannakakis:approx-pareto-sets}), which is formalized in Assumption~\ref{assum:M}.


\begin{assumption} \label{assum:M}
	For any multiobjective optimization problem~$\Pi$, there exists a polynomial~$\mathcal{P}$ such that, for any instance~$I$ of~$\Pi$, there exists a constant~$M^I \leq \mathcal{P}(\textnormal{enc}(I))$ such that $\textnormal{enc}(f^I_i(x)) \leq M^I$ for any $x \in X^I$ and any $i \in \{1,\ldots, p\}$, where $\textnormal{enc}(I)$ denotes the encoding length of instance~$I$ and $\textnormal{enc}(f^I_i(x))$ denotes the encoding length of the value~$f^I_i(x) \in \Q_{>0}$ in binary. This, in particular, implies that, for any instance~$I$ and any objective function value~$f_i^I(x)$, we have $2^{-M^I} \leq f_i^I(x) \leq 2^{M^I}$. Also, any two values~$f_i^I(x)$ and $f_i^I(x')$ differ by at least~$2^{-2M^I}$ if they are not equal.
\end{assumption}


In the following, we blur the distinction between the problem~$\Pi$ and a concrete instance $I = (X^I,f^I)$ and usually drop the superscript~$I$ indicating the dependence on the instance in~$X^I$, $f^I$, $M^I$, etc. 

\smallskip

Multiobjective optimization problems consist of objective functions that are to be minimized or objectives that are to be maximized (or even a combination of both). However, we only consider minimization objectives in this article. This is without loss of generality here since all our arguments can be straightforwardly adapted to maximization problems.

\smallskip

One of the main issues in multiobjective optimization problems is that the nondominated set often consists of exponentially many points, which renders the problem intractable (see, e.g.,~\cite{Ehrgott:hard-to-say}). One way to overcome this obstacle is the concept of approximation. 
Instead of computing at least one corresponding efficient solution for each point in the nondominated set, we only require each image point in the objective space to be ``almost'' dominated by the image of a solution from the computed set.



\begin{definition}[Approximate Pareto Set]
	Let $(X,f)$ be a multiobjective optimization problem and let $\alpha_i \geq 1$ for $i = 1,\ldots,p$. We say that a feasible solution~$x \in X$ $(\alpha_1,\ldots,\alpha_p)$-\emph{approximates} another feasible solution~$x' \in X$ if $f(x)$ $(\alpha_1,\ldots,\alpha_p)$-\emph{dominates} $f(x')$, i.e., for minimization problems, if $f_i(x) \leq \alpha_i \cdot f_i(x')$ and, for maximization problems, if  $ \alpha_i \cdot f_i(x) \geq f_i(x')$, for all $i = 1,\ldots,p$. A set~$P_{(\alpha_1,\ldots,\alpha_p)} \subseteq X$ of feasible solutions is called an $(\alpha_1,\ldots,\alpha_p)$\emph{-approximate Pareto set} if, for every feasible solution $x' \in X$, there exists a solution $x \in P_{(\alpha_1,\ldots,\alpha_p)}$ that $(\alpha_1,\ldots,\alpha_p)$-approximates $x'$. For $\varepsilon > 0$, a $(1+\varepsilon,\ldots,1+\varepsilon)$-approximate Pareto set is called an $\varepsilon$\emph{-Pareto set}.
\end{definition}

\begin{remark}\label{rmk:no-two-ones}
	If $\alpha_i = 1$ for two or more indices~$i$, there exist problems for which any $(\alpha_1,\ldots,\alpha_p)$-approximate Pareto set requires exponentially many solutions. This follows since even many biobjective optimization problems (e.g., the biobjective $\shortestpath$ problem) admit instances with exponentially many different nondominated points (see, e.g.,~\cite{Ehrgott:hard-to-say}). Thus, using the two given objective functions as the objectives~$f_i$ for two positions~$i$ with $\alpha_i=1$ (and arbitrary objective functions for all other positions) yields an instance with $p$~objectives where exponentially many solutions are required in any $(\alpha_1,\ldots,\alpha_p)$-approximate Pareto set.
\end{remark}



In contrast to this, Papadimitriou and Yannakakis~\cite{Papadimitriou+Yannakakis:multicrit-approx} show that if $\alpha_i > 1$ for all~$i$, there always exists an $(\alpha_1,\ldots,\alpha_p)$-approximate Pareto set of polynomial cardinality. In this paper, we focus on the case where $\alpha_i = 1$ for exactly one~$i$. Thus, we study approximate Pareto sets where, for any feasible solution~$x$, there exists a solution in the approximate Pareto set that has value no worse than~$f_i(x)$ in objective~$f_i$ and simultaneously achieves an approximation factor of~$1+\varepsilon$ in all other objective functions for some $\varepsilon>0$. For simplicity, we assume that the first objective~$f_1$ is to be optimized exactly, i.e., that $\alpha_1 = 1$ and $\alpha_j=1+\varepsilon$ for $j = 2,\ldots,p$. 


\begin{definition}[One-Exact $\varepsilon$-Pareto Set]
	For $\varepsilon > 0$, a $(1,1+\varepsilon,\ldots,1+\varepsilon)$-approximate Pareto set is called a \emph{one-exact $\varepsilon$-Pareto set}.
\end{definition}


A common way of dealing with multiobjective optimization problems are scalarizations, which turn the multiple objective functions into one objective function in some useful way. The resulting singleobjective optimization problem can be solved using known methods from singleobjective optimization and the obtained solution can then be used in the process of solving the multiobjective problem. One of the most common scalarization methods consists of putting some upper bound/budget on all objective functions but one, which is then minimized subject to the resulting budget constraints on the other objectives (see, e.g.,~\cite{Ehrgott:book}). 

\begin{definition}[Budget-Constrained Problem ({\normalfont$\constrained$})]
	Given an instance~$(X,f)$ of a multiobjective minimization problem and bounds $B_i>0$, $i = 2,\ldots,p$, for all objective functions except the first one, the subproblem $\constrained(B_2\ldots,B_p)$ is the following: Either answer that there does not exist a feasible solution $x' \in X$ with $$f_i(x') \leq B_i, \qquad i = 2,\ldots, p,$$ or return a feasible solution that minimizes $f_1$ among all such solutions, i.e., return $x \in X$ with \begin{align*}f_1(x) &=  \opt_1(B_2,\ldots,B_p) \colonequals  \min_{x'\in X} \{f_1(x'):\,f_i(x') \leq B_i \textnormal{ for } i = 2,\ldots,p\},\\
	f_i(x) &\leq  B_i, \qquad  i = 2,\ldots, p.
	\end{align*}
\end{definition}

Even though this scalarization via budget constraints is widely used both in practice and in the theoretical literature on multiobjective optimization, the $\constrained$ problem is hard to solve even for the biobjective case of many relevant problems such as $\shortestpath$. However, there often exists a PTAS, i.e., a polynomial-time algorithm that finds an arbitrarily good approximation. The problem of finding a $(1+\delta)$-approximation for $\constrained$ for any given value $\delta > 0$ is called the \emph{restricted problem}~\cite{Diakonikolas+Yannakakis:approx-pareto-sets}:

\begin{definition}[Restricted Problem ({\normalfont$\restrict$})]
		Given an instance~$(X,f)$ of a multiobjective minimization problem, bounds $B_i>0$, $i = 2,\ldots,p$, for all objective functions except the first one, and $\delta > 0$, the subproblem $\restrict_\delta(B_2,\ldots, B_p)$ is the following: Either answer that there does not exist a feasible solution $x' \in X$ with $$f_i(x') \leq B_i, \qquad i = 2,\ldots,p,$$ or return $x \in X$ with 
	\begin{align*}f_1(x) &\leq (1+\delta) \cdot \opt_1(B_2,\ldots,B_p),&\\
	f_i(x) &\leq B_i,&&   i = 2,\ldots, p.\end{align*}	
\end{definition}

An alternative way of circumventing the hardness of the budget-constrained problem is to consider solutions that violate the given bounds slightly, while requiring an objective value that is at least as good as the objective value of any solution that respects the bounds~\cite{Diakonikolas+Yannakakis:approx-pareto-sets}:

\begin{definition}[{\normalfont$\dualrestrict$}]
		Given an instance~$(X,f)$ of a multiobjective minimization problem, bounds $S_i>0$, $i = 2,\ldots,p$, for all objectives except the first one, and $\delta > 0$, the subproblem $\dualrestrict_\delta(S_2,\ldots, S_p)$ is the following: Either answer that there does not exist a feasible solution $x' \in X$ with $$f_i(x') \leq S_i, \qquad i = 2,\ldots,p,$$ or return $x \in X$ with 
	\begin{align*}f_1(x) &\leq \opt_1(S_2,\ldots,S_p),&\\
	f_i(x) &\leq (1+\delta) \cdot S_i,&&   i = 2,\ldots, p.\end{align*}	
\end{definition}

Note that, in an instance of $\dualrestrict$, the case might occur where there does not exist any feasible solution $x' \in X$ with $f_i(x') \leq S_i$ for $i = 2,\ldots, p$, but there exists a solution~$x \in X$ with $f_i(x) \leq (1+\delta) \cdot S_i$ for $i = 2,\ldots, p$. In this case, $\textnormal{NO}$ is a correct answer to the $\dualrestrict$ instance, but, since $\opt_1(S_2, \ldots, S_p) = +\infty > f_1(x)$, also~$x$ is a correct answer. Thus, for $\dualrestrict$, there are situations where both of the distinguished cases apply, whereas the two considered cases are always disjoint for $\constrained$ and $\restrict$. Also note that $\constrained$ can be viewed as the limit case $\delta = 0$ for both $\restrict$ and $\dualrestrict$. 

\smallskip

All three of the above subproblems can also be defined such that, instead of the first one, some other objective is to be optimized subject to budgets on the rest of the objectives. In the following, we sometimes use a superscript to indicate which objective is to be  optimized. For example, $\restrict^i$ denotes the restricted problem with a bound on all objectives but the $i$-th one.

\section{Existence and Cardinality of One-Exact $\varepsilon$-Pareto Sets}
Papadimitriou and Yannakakis~\cite{Papadimitriou+Yannakakis:multicrit-approx} show that, for any instance of a multiobjective optimization problem, there exists an $\varepsilon$-Pareto set whose cardinality is polynomial in the encoding length of the instance and in $\frac 1 \varepsilon$. Similarly, we now show the existence of one-exact $\varepsilon$-Pareto sets of polynomial cardinality. 

\begin{theorem}\label{thm:existence}
	For any $p$-objective optimization problem~$(X,f)$ and any given $\varepsilon > 0$, there exists a one-exact $\varepsilon$-Pareto set of cardinality $\mathcal{O}((\frac M \varepsilon)^{p-1})$.
\end{theorem}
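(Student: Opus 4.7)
The plan is to adapt the geometric partitioning argument of Papadimitriou and Yannakakis~\cite{Papadimitriou+Yannakakis:multicrit-approx}, but restrict the partitioning to the last $p-1$ objectives and treat the first objective exactly by selecting, within each cell of the partition, a solution that minimizes $f_1$.

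More precisely, I would proceed as follows. First, by Assumption~\ref{assum:M}, for every $i \in \{2,\ldots,p\}$ the values $f_i(x)$ lie in $[2^{-M}, 2^M]$. I would partition this range into geometric intervals of the form $[(1+\varepsilon)^{k}, (1+\varepsilon)^{k+1}]$ for integers $k$ with $-M \leq k\log(1+\varepsilon) \leq M$, yielding $\mathcal{O}(M/\varepsilon)$ intervals per objective (using $\log(1+\varepsilon) = \Theta(\varepsilon)$). Taking the product over $i = 2,\ldots,p$ gives a collection $\mathcal{R}$ of at most $\mathcal{O}((M/\varepsilon)^{p-1})$ hyperrectangles in the projection of the objective space onto coordinates $2,\ldots,p$.

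Next, for every hyperrectangle $R \in \mathcal{R}$ such that $\{x \in X : (f_2(x),\ldots,f_p(x)) \in R\}$ is nonempty, I would select one solution $x_R$ from this set with minimum value of $f_1$, and define $P \colonequals \{x_R : R \in \mathcal{R},\ x_R \text{ exists}\}$. The minimum over the set is actually attained (even when $X$ is infinite), because Assumption~\ref{assum:M} bounds objective values in $[2^{-M}, 2^M]$ and forces any two distinct values to differ by at least $2^{-2M}$, so the image $f(X)$ contains only finitely many distinct vectors. By construction, $|P| \leq |\mathcal{R}| = \mathcal{O}((M/\varepsilon)^{p-1})$.

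Finally, I would verify that $P$ is a one-exact $\varepsilon$-Pareto set: given any $x' \in X$, let $R \in \mathcal{R}$ be the hyperrectangle containing $(f_2(x'),\ldots,f_p(x'))$; then $x_R$ exists and satisfies $f_1(x_R) \leq f_1(x')$ by the choice of $x_R$, while for every $i \in \{2,\ldots,p\}$ both $f_i(x_R)$ and $f_i(x')$ lie in the same interval $[(1+\varepsilon)^{k_i}, (1+\varepsilon)^{k_i+1}]$, yielding $f_i(x_R) \leq (1+\varepsilon)\, f_i(x')$. Thus $x_R$ $(1,1+\varepsilon,\ldots,1+\varepsilon)$-approximates $x'$. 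The proof is essentially a direct modification of the Papadimitriou--Yannakakis partitioning; the only subtlety is justifying that the minimum defining $x_R$ is attained, which is exactly where Assumption~\ref{assum:M} is used.
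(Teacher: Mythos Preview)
Your proposal is correct and follows essentially the same approach as the paper: partition the last $p-1$ coordinates geometrically into $\mathcal{O}((M/\varepsilon)^{p-1})$ cells (the paper calls these ``hyperstripes''), and in each nonempty cell pick a solution minimizing $f_1$. The only difference is that you explicitly justify attainment of the minimum via the discreteness implied by Assumption~\ref{assum:M}, a point the paper leaves implicit.
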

\begin{proof}
	Figure~\ref{fig:existence} illustrates the proof.
	
	Consider the hypercube $[2^{-M},2^M] \times \ldots \times [2^{-M},2^M]$, in which all the feasible points are contained, and cover this hypercube by hyperstripes of the form
	$[2^{-M},2^M] \times [(1+\varepsilon)^{i_2},(1+\varepsilon)^{i_2+1}] \times \ldots \times [(1+\varepsilon)^{i_p},(1+\varepsilon)^{i_p+1}]$, for all~$i_2,\ldots,i_p \in \{-\lceil\frac{M}{\log (1+\varepsilon)}\rceil, \ldots, -1,0,1, \ldots, \lceil\frac{M}{\log (1+\varepsilon)}\rceil-1\}$. Note that, for this covering, we use $(2\cdot \lceil\frac{M}{\log (1+\varepsilon)}\rceil)^{p-1} = \mathcal{O}((\frac M \varepsilon)^{p-1})$ many hyperstripes.
	
	For each hyperstripe~$H$ containing feasible points, we choose one feasible point $y = f(x)\in H$ with minimum $f_1$-value among all feasible points in~$H$. Then all points in~$H$ are $(1,1+\varepsilon,\ldots,1+\varepsilon)$-dominated by~$y$. Thus, keeping one solution~$x \in X$ for each chosen point $y = f(x)$ (where points that are dominated by other chosen points can be discarded) yields a one-exact $\varepsilon$-Pareto set.
	Since at most one solution is chosen for each hyperstripe, the cardinality of the constructed set is in $\mathcal{O}((\frac M \varepsilon)^{p-1})$.	
	\qed
\end{proof}

\begin{figure}[ht!]
	\begin{center}
		\begin{tikzpicture}[scale=1]
		\draw[->] (-0.2,0) -- (7.5,0) node[below right] {$f_1$};
		\draw[->] (0,-0.2) -- (0,7.4) node[above left] {$f_2$};

		\foreach \x in {0.6, 0.9, 1.35, 2.03, 3.04, 4.56, 6.83}{
			\draw[-,line width=0.3pt] (0.6,\x) -- (6.83,\x);
			
		}
		\foreach \x in {0.6, 6.83}{
			\draw[-,line width=0.3pt] (\x,0.6) -- (\x,6.83);	
		}
		\foreach \x in {0.7, 6}{
			\draw[dashed, line width=0.3pt] (\x,0.7) -- (\x,6);
			\draw[dashed, line width=0.3pt] (0.7,\x) -- (6,\x);
		}
		\fill (0.84,5.90) circle (2pt);
		\fill (1.52,2.23) circle (2pt);
		\fill (1.72,1.63) circle (2pt);
		\draw (1.95,4.43) circle (2pt);
		\fill (3.24,2.63) circle (1pt);
		
		\fill (1.03,4.67) circle (1pt);
		\fill (1.44,5.63) circle (1pt);
		\fill (1.62,2.83) circle (1pt);
		\fill (1.83,5.46) circle (1pt);
		\fill (1.92,2.60) circle (1pt);
		\fill (1.93,1.77) circle (1pt);
		\fill (2.17,1.49) circle (1pt);
		\fill (2.23,3.46) circle (1pt);
		\fill (2.23,4.73) circle (1pt);
		\fill (2.53,2.15) circle (1pt);
		\fill (2.73,2.58) circle (1pt);
		\fill (2.83,4.06) circle (1pt);
		\fill (3.14,1.42) circle (1pt);
		\fill (3.27,0.99) circle (2pt);
		\fill (3.29,4.23) circle (1pt);
		\fill (3.34,3.37) circle (1pt);
		\fill (3.51,2.17) circle (1pt);
		\fill (3.60,1.13) circle (1pt);
		\fill (3.69,3.63) circle (1pt);
		\fill (3.90,3.85) circle (1pt);
		\fill (4.07,3.23) circle (1pt);
		\fill (4.29,2.79) circle (1pt);
		\fill (4.46,1.93) circle (1pt);
		
		\draw[-] (0.70,0.1) -- (0.70,-0.1) node[below] {$2^{-\!M}$};
		\draw[-] (6.0,0.1) -- (6.0,-0.1) node[below] {$2^M$};
		\draw[-] (0.1,0.7) -- (-0.1,0.7) node[left] {$2^{-M}$};
		\draw[-] (0.1,1.35) -- (-0.1,1.35) node[left] {$(1+\varepsilon)^{-1}$};
		\draw[-] (0.1,2.03) -- (-0.1,2.03) node[left] {$1$};
		\draw[-] (0.1,3.04) -- (-0.1,3.04) node[left] {$(1+\varepsilon)^{1}$};
		\draw[-] (0.1,6.0) -- (-0.1,6.0) node[left] {$2^M$};
		\draw[-] (0.1,6.83) -- (-0.1,6.83) node[left] {$(1\!+\!\varepsilon)^{\lceil\!\frac{M}{\log (1+\varepsilon)}\!\rceil}$};
		\end{tikzpicture}
		\caption{Proof of existence of polynomial-size one-exact $\varepsilon$-Pareto sets. Choose a feasible point minimizing~$f_1$ in each hyperstripe that contains at least one feasible point. One can discard points that are dominated by other chosen points. In the picture, feasible points are marked by dots, chosen points are indicated by thick dots, and discarded points are drawn as circles.\label{fig:existence}}
	\end{center}
\end{figure}
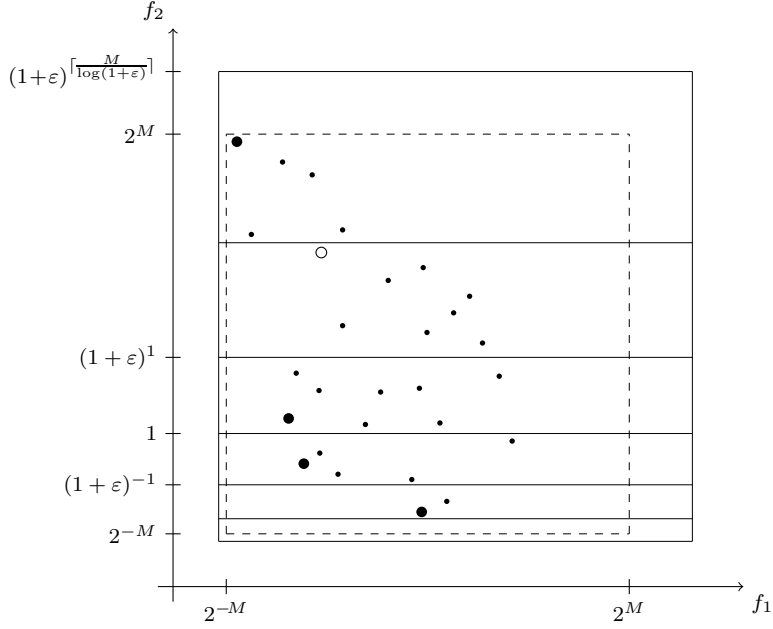

It is easy to see that $(1,1,1+\varepsilon)$-approximate Pareto sets of polynomial cardinality do not exist in general since this would imply the existence of polynomial (exact) Pareto sets for biobjective problems (see Remark~\ref{rmk:no-two-ones}). This means that, in this sense, an approximation factor of $(1,1+\varepsilon,\ldots,1+\varepsilon)$ is the best one achievable with polynomially many solutions.

In general, even a smallest $\varepsilon$-Pareto set may require $\Omega((\frac M \varepsilon)^{p-1})$ many solutions~\cite{Papadimitriou+Yannakakis:multicrit-approx}, which equals the worst-case bound on the cardinality of a smallest one-exact $\varepsilon$-Pareto set obtained from Theorem~\ref{thm:existence}. For a distinct instance, however, the two can be very different in size. Any one-exact $\varepsilon$-Pareto set is, in particular, an $\varepsilon$-Pareto set, so, for any instance, a one-exact $\varepsilon$-Pareto set of minimum cardinality is at least as large as an $\varepsilon$-Pareto set of minimum cardinality. In the other direction, the following holds:


\begin{theorem}\label{thm:unbounded}
	For any $\varepsilon>0$ and any positive integer~$n\in\mathbb{N}_+$, there exist instances of biobjective optimization problems such that $|P^*| > n\cdot |P^*_{\varepsilon}|$, where~$P^*$ denotes a smallest one-exact $\varepsilon$-Pareto set and $P^*_{\varepsilon}$ denotes a smallest $\varepsilon$-Pareto set.
\end{theorem}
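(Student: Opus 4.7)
The plan is to exhibit, for any $\varepsilon > 0$ and any $n \in \mathbb{N}_+$, an explicit biobjective instance that already realises the desired ratio on a very small feasible set. I would take $X = \{x^{(0)}, \ldots, x^{(n)}\}$ consisting of $n+1$ abstract feasible solutions and fix the objective values by
\[
   f(x^{(k)}) \;=\; \bigl(1 + k \cdot \tfrac{\varepsilon}{n},\; (1+2\varepsilon)^{-k}\bigr), \qquad k = 0, 1, \ldots, n.
\]
The $f_1$-values are pairwise distinct and all lie in the interval $[1, 1+\varepsilon]$, while the $f_2$-values form a strictly decreasing geometric sequence whose consecutive quotient $1+2\varepsilon$ strictly exceeds $1+\varepsilon$. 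All positivity, rationality, and encoding requirements of Assumption~\ref{assum:M} are clearly satisfied.

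The first step is to verify that the singleton $\{x^{(n)}\}$ is an $\varepsilon$-Pareto set, so that $|P^*_\varepsilon| = 1$. This is immediate: for every $k$, one has $f_1(x^{(n)}) = 1+\varepsilon \leq (1+\varepsilon)\cdot f_1(x^{(k)})$, and $f_2(x^{(n)}) \leq f_2(x^{(k)})$ since the $f_2$-sequence is monotone decreasing.

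The second step is to argue that every one-exact $\varepsilon$-Pareto set must contain every single $x^{(k)}$, so that $|P^*| = n+1$. Fix $k \in \{0,\ldots,n\}$ and suppose some $x^{(j)}$ one-exact $\varepsilon$-approximates $x^{(k)}$. The exact condition in the first objective forces $1 + j \cdot \tfrac{\varepsilon}{n} \leq 1 + k \cdot \tfrac{\varepsilon}{n}$, i.e., $j \leq k$. If $j < k$, then
\[
  \frac{f_2(x^{(j)})}{f_2(x^{(k)})} \;=\; (1+2\varepsilon)^{k-j} \;\geq\; 1 + 2\varepsilon \;>\; 1+\varepsilon,
\]
which violates the second-objective approximation requirement. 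Hence $j = k$, so $x^{(k)}$ itself must belong to every one-exact $\varepsilon$-Pareto set.

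Combining the two bounds gives $|P^*|/|P^*_\varepsilon| = n+1 > n$, as claimed. The only real design choice is the simultaneous balancing of the two objectives: the $f_1$-values must be clustered tightly enough (within a multiplicative factor of $1+\varepsilon$) so that a single solution serves as an $\varepsilon$-Pareto set, yet the $f_2$-values must be spread by a factor strictly greater than $1+\varepsilon$ so that the exact-in-$f_1$ requirement prevents any two solutions from sharing a common approximator. I do not anticipate any further obstacle.
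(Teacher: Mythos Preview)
Your proof is correct and follows essentially the same approach as the paper: both constructions place $n+1$ points whose $f_1$-values all lie within a multiplicative factor of $1+\varepsilon$ (so a single solution yields an $\varepsilon$-Pareto set) while consecutive $f_2$-values differ by a factor strictly exceeding $1+\varepsilon$ (forcing every point into any one-exact $\varepsilon$-Pareto set). The only differences are cosmetic---the paper uses $(1+\varepsilon)^{2}$ instead of $1+2\varepsilon$ as the $f_2$-ratio and orients the two coordinates in the opposite direction---so the arguments are interchangeable.
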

\begin{proof}
	Given $\varepsilon > 0$ and $n \in \mathbb N_{+}$, we construct an instance of a biobjective minimization problem with $|P^*| = n+1$ and $|P^*_{\varepsilon}| = 1$.
	
	\noindent
	Let $X \colonequals \{x_0,\ldots, x_{n}\}$ and, for $i = 0,\ldots, n$, let $$f(x_i) = \left(f_1(x_i),f_2(x_i)\right) \colonequals \left(1+\frac{n-i} n \cdot \varepsilon, (1+\varepsilon)^{2i}\right).$$	
	Then, we have
	$$f_1(x_0) > f_1(x_1) > \ldots > f_1(x_n)$$ and
	$$f_2(x_0) < \frac 1 {1+\varepsilon} \cdot f_2(x_1) < \ldots < \frac 1 {(1+\varepsilon)^n} \cdot f_2(x_n),$$
	so no solution $(1,1+\varepsilon)$-approximates any other solution and, thus, $P^* = X$. However, we also have $$f_1(x_0) = 1+\frac {n-0} n \cdot \varepsilon = 1+\varepsilon = (1+\varepsilon) \cdot f_1(x_n),$$
	so $x_0$ $(1+\varepsilon,1+\varepsilon)$-approximates all other solutions and, thus,  $\{x_0\}$ is a one-exact $\varepsilon$-Pareto set. This construction is depicted in Figure~\ref{fig:unbounded}.
	\qed
\end{proof}

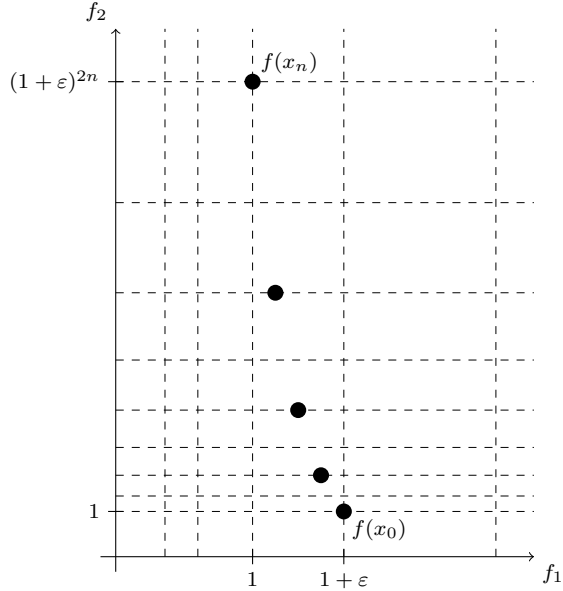
\begin{figure}[ht!]
	\begin{center}
		\begin{tikzpicture}[scale=1][decoration=brace]

		\fill (1.8,6.299) circle (3pt) node[above right] {$f(x_n)$};
		\fill (2.1,3.5) circle (3pt) node[above right] {};
		\fill (2.4,1.944) circle (3pt) node[above right] {};
		\fill (2.7,1.08) circle (3pt) node[above right] {};
		\fill (3,0.6) circle (3pt) node[below right] {$f(x_0)$};
		
		\foreach \x in {0.648, 1.08, 1.8, 3, 5}{
			\draw[dashed, line width=0.3pt] (\x,0) -- (\x,7);	
		}
		\foreach \x in {0.6, 0.805, 1.08, 1.449, 1.944, 2.608, 3.5, 4.695, 6.299}{
			\draw[dashed, line width=0.3pt] (0,\x) -- (5.5,\x);	
		}
		
		\draw[->] (-0.2,0) -- (5.5,0) node[below right] {$f_1$};
		\draw[->] (0,-0.2) -- (0,7) node[above left] {$f_2$}; 
		
		\draw[-] (1.8,0.1) -- (1.8,-0.1) node[below] {$1$};
		\draw[-] (3,0.1) -- (3,-0.1) node[below] {$1+\varepsilon$};
		%
		%
		\draw[-] (0.1,0.6) -- (-0.1,0.6) node[left] {$1$};
		\draw[-] (0.1,6.299) -- (-0.1,6.299) node[left] {$(1+\varepsilon)^{2n}$};
		
		\end{tikzpicture}
		\caption{Illustration of the instance in the proof of Theorem~\ref{thm:unbounded}. The solution~$x_0$ $(1+\varepsilon,1+\varepsilon)$-approximates all other solutions, but no solution $(1,1+\varepsilon)$-approximates any other solution.\label{fig:unbounded}}
	\end{center}
\end{figure}

Note that, in the instance constructed in the proof of Theorem~\ref{thm:unbounded}, we even have $|P^*| = \Omega(\frac M \varepsilon)$, i.e., a smallest one-exact $\varepsilon$-Pareto set, in fact, has the worst-case size, while a smallest one-exact $\varepsilon$-Pareto set~$P^*_{\varepsilon}$ consists of only one solution.

Moreover, the statement of Theorem~\ref{thm:unbounded} also holds for problems with three or more objectives. For any $p\geq 2$, one can similarly construct instances of $p$-objective problems where a smallest one-exact $\varepsilon$-Pareto set has the worst-case size of $\Omega((\frac M \varepsilon)^{p-1})$, while a smallest $\varepsilon$-Pareto set has cardinality one.

\section{Polynomial-Time Computability of One-Exact Pareto Sets}
The proof of Theorem~\ref{thm:existence} can easily be turned into a method for computing one-exact $\varepsilon$-Pareto sets that runs in fully polynomial time if and only if $\constrained^1$ is solvable in polynomial time. In the appendix, we present a method that, for biobjective problems, even computes a smallest one-exact $\varepsilon$-Pareto set using a subroutine for $\constrained$. However, $\constrained$ is \textsf{NP}-hard to solve for many relevant problems (a notable exception being multiobjective linear programming).

Instead, we now provide a method that computes a one-exact $\varepsilon$-Pareto set in (fully) polynomial time if a (fully) polynomial method for $\dualrestrict^1_\delta$ is available. This is the case for a significantly larger class of (relevant) problems including important problems such as multiobjective $\shortestpath$ and multiobjective $\spanningtree$. The method is based on the following lemma, which is visualized in Figure~\ref{fig:dualrestrict}.

\begin{lemma}\label{lem:duallemma}
	Let $x\in X$ be a solution to $\dualrestrict^1_\delta(S_2,\ldots,S_p)$, where $0 < \delta < \varepsilon$ for some $\varepsilon > 0$. Then any feasible point in the hyperstripe $$H = \left[2^{-M},2^M\right] \times \left[ \frac {1+\delta}{1+\varepsilon}S_2,S_2 \right] \times \ldots \times \left[ \frac {1+\delta}{1+\varepsilon}S_p,S_p \right] $$ is $(1,1+\varepsilon, \ldots,1+\varepsilon)$-dominated by f(x).
\end{lemma}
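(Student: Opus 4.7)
The plan is to take an arbitrary feasible point $y = f(x') \in H$ and verify the two parts of the definition of $(1, 1+\varepsilon, \ldots, 1+\varepsilon)$-domination directly from the guarantees given by $\dualrestrict^1_\delta$. Recall that the output $x$ satisfies $f_1(x) \leq \opt_1(S_2, \ldots, S_p)$ and $f_i(x) \leq (1+\delta) \cdot S_i$ for $i = 2, \ldots, p$, while membership of $f(x')$ in $H$ gives $\tfrac{1+\delta}{1+\varepsilon} S_i \leq f_i(x') \leq S_i$ for $i = 2, \ldots, p$.

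First I would handle the exact-domination coordinate. Since $f_i(x') \leq S_i$ for all $i \geq 2$, the solution $x'$ is feasible for the budget-constrained problem $\constrained^1(S_2, \ldots, S_p)$. Hence $\opt_1(S_2, \ldots, S_p) \leq f_1(x')$, and combining with the $\dualrestrict$ guarantee yields $f_1(x) \leq \opt_1(S_2, \ldots, S_p) \leq f_1(x')$, which is the required first-coordinate inequality.

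Next I would handle the approximate coordinates $i = 2, \ldots, p$ by a one-line bound: from the upper bound on $f_i(x)$ and the lower bound on $f_i(x')$ coming from $H$,
\begin{equation*}
f_i(x) \;\leq\; (1+\delta) \cdot S_i \;=\; (1+\varepsilon) \cdot \frac{1+\delta}{1+\varepsilon} S_i \;\leq\; (1+\varepsilon) \cdot f_i(x').
\end{equation*}
Combining this with the first coordinate bound gives precisely that $f(x)$ $(1,1+\varepsilon,\ldots,1+\varepsilon)$-dominates $f(x')$.

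There is no real obstacle here; the statement is essentially designed so that the width $\tfrac{1+\delta}{1+\varepsilon}$ of the hyperstripe in each approximate coordinate exactly cancels the slack $(1+\delta)$ in the $\dualrestrict$ guarantee, leaving a factor of $1+\varepsilon$. The only mild care needed is to note why the assumption $\delta < \varepsilon$ is harmless: it merely ensures $\tfrac{1+\delta}{1+\varepsilon} < 1$, so that $H$ is a genuine, nonempty hyperstripe (and, in particular, contains $f(x)$ itself).
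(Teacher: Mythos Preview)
Your proof is correct and essentially identical to the paper's: both verify $f_1(x)\le f_1(x')$ via $\opt_1(S_2,\ldots,S_p)$ and then bound the remaining coordinates by the one-line chain $(1+\delta)S_i=(1+\varepsilon)\cdot\tfrac{1+\delta}{1+\varepsilon}S_i\le(1+\varepsilon)f_i(x')$. One minor inaccuracy in your closing parenthetical remark: $H$ need not contain $f(x)$ itself, since the $\dualrestrict$ guarantee only gives $f_i(x)\le(1+\delta)S_i$, which may exceed the upper endpoint~$S_i$ of the $i$-th interval of~$H$.
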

\begin{proof}
	By definition of $\dualrestrict$, we know that, in the first objective, $f_1(x) \leq \opt_1(S_2,\ldots,S_p)$, so there does not exist a feasible solution $x' \in X$ that satisfies $f_1(x') < f_1(x)$ and $f_i(x') \leq S_i$ for all $i = 2,\ldots,p$. We also know that $f_i(x) \leq (1+\delta) \cdot S_i$ for $i = 2,\ldots,p$.
	
	Not let $f(x') \in H$ be a feasible point in the hyperstripe. Then, since $f_i(x') \leq S_i$ for $i = 2,\ldots, p$, we must have $f_1(x') \geq f_1(x)$. Moreover, we have $f_i(x') \geq \frac {1+\delta}{1+\varepsilon} S_i$, which yields that
	$$(1+\varepsilon) \cdot f_i(x') \geq (1+\varepsilon) \cdot \frac {1+\delta} {1+\varepsilon} \cdot S_i = (1+\delta) \cdot S_i \geq f_i(x),$$
	so $f(x')$ is $(1,1+\varepsilon,\ldots,1+\varepsilon)$-dominated by~$f(x)$.\qed
\end{proof}	

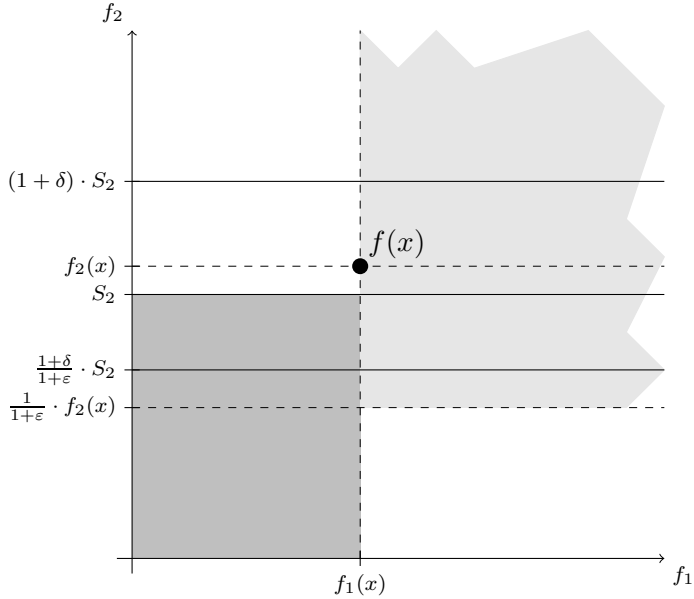
\begin{figure}[ht!]
	\begin{center}
		\begin{tikzpicture}[scale=1][decoration=brace]
		
		
		\fill[gray!50] (0,0) rectangle (3,3.5);
		\fill[gray!20] (3,2) rectangle (6.5,6.5);
		\draw[fill,gray!20] (6.5,6.5) -- (7,6) -- (6.5,4.5)--(7,4) -- (6.5,3) -- (7,2.5) -- (6.5,2);
		
		\draw[fill,gray!20] (6.5,6.5) -- (6,7) -- (4.5,6.5)--(4,7) -- (3.5,6.5) -- (3,7) -- (3,6.5);

		\fill (3,3.875) circle (3pt) node[above right] {\large{$f(x)$}};
		
		\foreach \x in {2.5,3.5,5}{
			\draw[-,thin] (0,\x) -- (7,\x) ;
		}	
		\draw[dashed, line width=0.3pt] (0,3.875) -- (7,3.875);
		\draw[dashed, line width=0.3pt] (0,2) -- (7,2);
		\draw[dashed, line width=0.3pt] (3,0) -- (3,7);
		

		\draw[->] (-0.2,0) -- (7,0) node[below right] {$f_1$};
		\draw[->] (0,-0.2) -- (0,7) node[above left] {$f_2$};
		
		\draw[-] (3,0.1) -- (3,-0.1) node[below] {$f_1(x)$};

		\draw[-] (0.1,2) -- (-0.1,2) node[left] {$\frac 1 {1+\varepsilon} \cdot f_2(x)$};
		\draw[-] (0.1,2.5) -- (-0.1,2.5) node[left] {$\frac {1+\delta}{1+\varepsilon} \cdot S_2$};	
		\draw[-] (0.1,3.5) -- (-0.1,3.5) node[left] {$S_2$};
		\draw[-] (0.1,3.875) -- (-0.1,3.875) node[left] {$f_2(x)$};
		\draw[-] (0.1,5) -- (-0.1,5) node[left] {$(1+\delta) \cdot S_2$};
		
		\end{tikzpicture}
		\caption{Illustration of Lemma~\ref{lem:duallemma}. The point~$f(x)$ is the image of a solution $x \in X$ to $\dualrestrict^1_\delta(S_2)$ with $0 < \delta < \varepsilon$. The dark gray area does not contain any feasible point. Every feasible point in the light gray area is $(1,1+\varepsilon)$-dominated by~$f(x)$. \label{fig:dualrestrict}}
	\end{center}
\end{figure}

Note that, if $\textnormal{NO}$ is a solution to $\dualrestrict^1_\delta(S_2,\ldots,S_p)$, then the hyperstripe~$H$ considered in Lemma~\ref{lem:duallemma} does not contain any feasible point. Thus, we know a priori that solving $\dualrestrict^1_\delta$ takes care of~$H$ in the sense that, if~$H$ contains feasible points, then $\dualrestrict^1_\delta(S_2,\ldots,S_p)$ is guaranteed to yield a feasible solution~$x \in X$ that $(1,1+\varepsilon,\ldots,1+\varepsilon)$-approximates every feasible solution~$x' \in X$ with $f(x') \in H$.

If~$\delta$ is chosen such that $\delta \in \Omega(\varepsilon)$, e.g., such that $(1+\delta)^2 = 1+\varepsilon$, then the hypercube $[2^{-M},2^M]^p$, in which all feasible points are contained, can be covered by $\mathcal{O}((\frac M \varepsilon)^{p-1})$ many such hyperstripes, each of which, in turn, can be taken care of by one solution of $\dualrestrict^1_{\delta}$.

This idea of covering the range of possible objective values by polynomially many solutions of $\dualrestrict$ is used by Angel et al.~\cite{Angel+etal:min-cost-makespan,Angel+etal:bicriteria-scheduling} for the biobjective min-cost-makespan scheduling problem. We formalize the idea for general multiobjective optimization problems in Algorithm~\ref{alg:basicAlgo}.

\begin{algorithm}[!ht]
	\SetKw{Compute}{compute}
	\SetKw{Break}{break}
	\SetKwInOut{Input}{input}\SetKwInOut{Output}{output}
	\SetKwComment{command}{right mark}{left mark}
	
	\Input{an instance $(X,f)$ of a $p$-objective minimization problem, $\varepsilon>0$, an algorithm for $\dualrestrict^1_{\delta}$, where $\delta = \sqrt{1+\varepsilon} -1$}
	
	\Output{a one-exact $\varepsilon$-Pareto set for $(X,f)$}
	
	\BlankLine
	
	$P \leftarrow \emptyset$
	
	$\delta \leftarrow \sqrt{1+\varepsilon} -1$
	
	$u \leftarrow \lceil\frac{M}{\log (1+\delta)}\rceil$

	\ForEach {$(i_2,\ldots,i_p)$ such that  $i_{\ell} \in \{-u+1,\ldots,u\}$, $l = 1,\ldots,p$}{
		
		$x \leftarrow \dualrestrict^1_\delta((1+\delta)^{i_2},\ldots,(1+\delta)^{i_p})$
		
		\If{$x \neq \textnormal{NO}$}{
			
			
				
					
			%
			%
			
				
				$P \leftarrow P \cup \{x\}$
		}
	}
	\Return $P$
	\caption{A $(1,1+\varepsilon,\ldots,1+\varepsilon)$-approximation for multiobjective optimization problems}\label{alg:basicAlgo}
\end{algorithm}

\begin{theorem}
	For a given instance $(X,f)$ of a $p$-objective minimization problem and a given $\varepsilon > 0$, Algorithm~\ref{alg:basicAlgo} computes a one-exact $\varepsilon$-Pareto set. The algorithm solves $\mathcal{O}((\frac M \varepsilon)^{p-1})$ instances of $\dualrestrict^1_\delta$, where $(1+\delta)^2 = 1+\varepsilon$. The returned set~$P$ has (polynomial) cardinality $|P| = \mathcal{O}((\frac M \varepsilon)^{p-1})$.
\end{theorem}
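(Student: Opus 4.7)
The plan is to verify the theorem's three claims separately: (i) correctness, i.e.\ $P$ is a one-exact $\varepsilon$-Pareto set; (ii) the number of $\dualrestrict^1_\delta$ calls is $\mathcal{O}((M/\varepsilon)^{p-1})$; and (iii) $|P| = \mathcal{O}((M/\varepsilon)^{p-1})$. Claims (ii) and (iii) are essentially a counting step that I would handle first: the loop ranges over $(p-1)$-tuples in $\{-u+1,\ldots,u\}^{p-1}$, a set of cardinality $(2u)^{p-1}$. Since $(1+\delta)^2 = 1+\varepsilon$ yields $\delta = \sqrt{1+\varepsilon}-1 \in \Theta(\varepsilon)$, a standard estimate gives $\log(1+\delta) \in \Theta(\varepsilon)$ and hence $u = \lceil M/\log(1+\delta)\rceil \in \Theta(M/\varepsilon)$. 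This gives the $\mathcal{O}((M/\varepsilon)^{p-1})$ bound on the number of subproblem calls, and since each iteration adds at most one element to $P$, the same bound applies to $|P|$.

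The main work is correctness. I would show that for every $x' \in X$ there is some loop iteration whose returned solution $(1,1+\varepsilon,\ldots,1+\varepsilon)$-dominates $x'$. The key algebraic observation is that $(1+\delta)^2 = 1+\varepsilon$ implies $\tfrac{1+\delta}{1+\varepsilon} = \tfrac{1}{1+\delta}$, so the hyperstripe $H$ from Lemma~\ref{lem:duallemma} associated with $S_j = (1+\delta)^{i_j}$ simplifies to
\[
H = [2^{-M},2^M] \times \prod_{j=2}^{p} \bigl[(1+\delta)^{i_j-1},\,(1+\delta)^{i_j}\bigr].
\]
By Assumption~\ref{assum:M}, $f_j(x') \in [2^{-M},2^M]$ for every $j$, so I would define $i_j := \lceil \log f_j(x')/\log(1+\delta)\rceil$ for $j=2,\ldots,p$ and check that each $i_j$ lies in $\{-u+1,\ldots,u\}$, placing $f(x')$ inside the hyperstripe $H$ of the iteration indexed by $(i_2,\ldots,i_p)$. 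In that iteration, the mere existence of $x'$ rules out a \textnormal{NO} answer, so the call to $\dualrestrict^1_\delta((1+\delta)^{i_2},\ldots,(1+\delta)^{i_p})$ returns some $x \in X$ that is added to $P$; Lemma~\ref{lem:duallemma} then gives that $f(x)$ $(1,1+\varepsilon,\ldots,1+\varepsilon)$-dominates $f(x')$, completing the argument.

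The only mildly delicate point, and the step I would watch most carefully, is confirming that the index $i_j$ always falls in the iterated range $\{-u+1,\ldots,u\}$, especially at the extreme values $f_j(x') = 2^{-M}$ and $f_j(x') = 2^M$. I would dispatch this by the explicit estimates $(1+\delta)^{u} \geq (1+\delta)^{M/\log(1+\delta)} = 2^M$ and $(1+\delta)^{-u+1} \geq (1+\delta)\cdot 2^{-M} \geq 2^{-M}$, which show that both boundary cases are captured by some index in the loop's range. Everything else is routine bookkeeping on top of Lemma~\ref{lem:duallemma}, so no further obstacle is expected.
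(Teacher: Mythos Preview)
Your proposal is correct and follows essentially the same approach as the paper: cover $[2^{-M},2^M]^{p}$ by the $(2u)^{p-1}$ hyperstripes $[2^{-M},2^M]\times\prod_{j=2}^{p}[(1+\delta)^{i_j-1},(1+\delta)^{i_j}]$, use the identity $\frac{1+\delta}{1+\varepsilon}=\frac{1}{1+\delta}$ to recognize these as precisely the stripes of Lemma~\ref{lem:duallemma}, and invoke that lemma to conclude that each $\dualrestrict^1_\delta$ call either certifies emptiness of its stripe or returns a solution $(1,1+\varepsilon,\ldots,1+\varepsilon)$-dominating everything in it. Your treatment is in fact more explicit than the paper's (which simply asserts the covering), including the boundary check for the index range; the only cosmetic point is that your ceiling formula for $i_j$ can return $-u$ at the single point $f_j(x')=2^{-M}$ when $M/\log(1+\delta)$ is an integer, but since the stripes are closed and overlap at their endpoints this is harmless.
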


\begin{proof}
	Algorithm~\ref{alg:basicAlgo} (implicitly) covers the hypercube $[2^{-M},2^M]\times ... \times [2^{-M},2^M]$ in the objective space by $(2u)^{p-1} = (2 \cdot \lceil\frac{M}{\log (1+\delta)}\rceil)^{p-1} = \mathcal{O}((\frac M \varepsilon)^{p-1})$ hyperstripes of the form
	\begin{align*}
	H &= \left[2^{-M},2^M\right] \times \left[(1+\delta)^{i_2-1},(1+\delta)^{i_2}\right] \times \ldots \times \left[(1+\delta)^{i_p-1},(1+\delta)^{i_p}\right]\\		
	&=\left[2^{-M},2^M\right] \times \left[ \frac {1\!+\!\delta} {1\!+\!\varepsilon}(1+\delta)^{i_2},(1+\delta)^{i_2} \right] \times \!\ldots\! \times \left[ \frac {1\!+\!\delta} {1\!+\!\varepsilon}(1+\delta)^{i_p},(1+\delta)^{i_p} \right]\!.
	\end{align*}
	Lemma~\ref{lem:duallemma} implies that, for each of these hyperstripes, solving the subproblem $\dualrestrict^1_\delta((1+\delta)^{i_2},\ldots,(1+\delta)^{i_p})$ either yields a feasible solution~$x \in X$ such that all feasible points in the hyperstripe are $(1,1+\varepsilon,\ldots,1+\varepsilon)$-dominated by~$f(x)$, or it yields $\textnormal{NO}$, which guarantees that the hyperstripe does not contain any feasible point.
	Hence, the set of all feasible solutions produced by solving $\dualrestrict^1_\delta$ for these hyperstripes is a one-exact $\varepsilon$-Pareto set of cardinality $\mathcal{O}((\frac M \varepsilon)^{p-1})$.
	\qed
\end{proof}

We note that the one-exact $\varepsilon$-Pareto set returned by Algorithm~\ref{alg:basicAlgo} may contain solutions that are dominated by other solutions in the set. Such solutions can be removed without influencing the obtained approximation quality. However, filtering out dominated solutions might actually require more time than computing the set itself in situations where $\dualrestrict$ can be solved very efficiently.


\smallskip

Papadimitriou and Yannakakis~\cite{Papadimitriou+Yannakakis:multicrit-approx} show that there is an equivalence between solving the $\gap$ problem associated with a multiobjective optimization problem and finding an $\varepsilon$-Pareto set in the sense that one can compute an $\varepsilon$-Pareto set in (fully) polynomial time if and only if one can solve $\gap$ in (fully) polynomial time.

We now prove an analogous result for $\dualrestrict$ and one-exact $\varepsilon$-Pareto sets. This demonstrates that $\dualrestrict$ is, in fact, exactly the right auxiliary problem to consider for computing one-exact $\varepsilon$-Pareto sets.

\begin{theorem}
	A one-exact $\varepsilon$-Pareto set for an instance~$I$ of a multiobjective optimization problem can be found for any~$\varepsilon > 0$ in time polynomial in the encoding length of~$I$ (and in $\frac 1 \varepsilon$) if and only if $\dualrestrict^1_\delta$ can be solved for any $\delta > 0$ in time polynomial in the encoding length of $I$ (and in $\frac 1 \delta$).
\end{theorem}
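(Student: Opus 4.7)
The forward direction ($\dualrestrict^1_\delta$ polynomial-time solvable $\Rightarrow$ one-exact $\varepsilon$-Pareto set computable in polynomial time) is already established by Algorithm~\ref{alg:basicAlgo} together with the preceding theorem: setting $\delta=\sqrt{1+\varepsilon}-1\in\Omega(\varepsilon)$ turns a polynomial routine for $\dualrestrict^1_\delta$ into a polynomial routine that outputs a one-exact $\varepsilon$-Pareto set. The work is in the converse direction, which I sketch below.

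For the converse, suppose we are given a polynomial-time procedure $\mathcal{A}$ that, for any $\varepsilon>0$, produces a one-exact $\varepsilon$-Pareto set. Given an instance of $\dualrestrict^1_\delta(S_2,\ldots,S_p)$, I would simply call $\mathcal{A}$ with $\varepsilon\colonequals\delta$ to obtain a one-exact $\delta$-Pareto set~$P$, then scan $P$ for solutions respecting the relaxed bounds: set
\[
P'\colonequals\bigl\{x\in P:f_i(x)\leq(1+\delta)\,S_i\text{ for }i=2,\ldots,p\bigr\}.
\]
If $P'=\emptyset$, return $\textnormal{NO}$; otherwise return an $x\in P'$ minimizing $f_1$. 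The running time is clearly polynomial in $\textnormal{enc}(I)$ and $\frac1\delta$, so the only thing to check is correctness.

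For correctness, consider first the case that there exists some feasible $x^\ast\in X$ with $f_i(x^\ast)\leq S_i$ for $i=2,\ldots,p$; without loss of generality take $x^\ast$ to attain $\opt_1(S_2,\ldots,S_p)$. Since $P$ is a one-exact $\delta$-Pareto set, there is some $x\in P$ with $f_1(x)\leq f_1(x^\ast)=\opt_1(S_2,\ldots,S_p)$ and $f_i(x)\leq(1+\delta)f_i(x^\ast)\leq(1+\delta)S_i$ for $i\geq2$. Hence $x\in P'$ and the returned minimizer $\hat x$ satisfies $f_1(\hat x)\leq f_1(x)\leq\opt_1(S_2,\ldots,S_p)$ and $f_i(\hat x)\leq(1+\delta)S_i$, which is a valid answer to $\dualrestrict^1_\delta$. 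In the remaining case, no feasible solution respects the bounds~$S_i$, so $\opt_1(S_2,\ldots,S_p)=+\infty$; then $\textnormal{NO}$ is always a valid answer, and if $P'\neq\emptyset$ then any $\hat x\in P'$ is also valid because the constraint $f_1(\hat x)\leq+\infty$ is vacuous and the constraints $f_i(\hat x)\leq(1+\delta)S_i$ hold by construction of $P'$.

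The main subtlety — and the only place where I expect the argument to require care — is isolating exactly where the ``one-exactness'' of~$P$ (as opposed to ordinary $\varepsilon$-approximation) is used: it is precisely the inequality $f_1(x)\leq f_1(x^\ast)$ in the first case that ensures $f_1(\hat x)\leq\opt_1(S_2,\ldots,S_p)$ rather than only $f_1(\hat x)\leq(1+\delta)\opt_1(S_2,\ldots,S_p)$. This is what makes the auxiliary problem $\dualrestrict$ (with its exact bound on $f_1$), rather than $\restrict$ or $\gap$, the correct counterpart to one-exact $\varepsilon$-Pareto sets, and it mirrors the Papadimitriou--Yannakakis equivalence between $\gap$ and ordinary $\varepsilon$-Pareto sets.
\qed
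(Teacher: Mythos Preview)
Your proposal is correct and follows essentially the same approach as the paper: both directions match, with the converse constructed by computing a one-exact $\delta$-Pareto set~$P$, filtering by the relaxed bounds $f_i(x)\leq(1+\delta)S_i$, and returning either $\textnormal{NO}$ or an $f_1$-minimizer from the filtered set. Your direct correctness argument is slightly cleaner than the paper's contradiction-based one, and your closing remark pinpointing where one-exactness is actually used is a nice addition.
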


\begin{proof}
	If $\dualrestrict^1_\delta$ can be solved in (fully) polynomial time, a one-exact $\varepsilon$-Pareto set can be found in (fully) polynomial time using Algorithm~\ref{alg:basicAlgo}.
	
	\smallskip
	
	Conversely, suppose that we can compute a one-exact $\varepsilon$-Pareto set in (fully) polynomial time. Then, given bounds $S_2,\ldots,S_p > 0$, we can solve the problem  $\dualrestrict^1_\delta(S_2,\ldots,S_p)$ as follows:
	
	We start by computing a one-exact $\delta$-Pareto set~$P$. Then, if there is no solution~$x \in P$ with $f_i(x) \leq (1+\delta) \cdot S_i$ for $i = 2,\ldots, p$, we return $\textnormal{NO}$. This is a correct answer since, if there was a solution~$x'$ with $f_i(x) \leq S_i$ for $i = 2,\ldots, p$, there would be no solution~$x \in P$ that $(1,1+\delta,\ldots,1+\delta)$-approximates~$x'$ in contradiction to~$P$ being a one-exact $\delta$~Pareto set. If there exist solutions~$x \in P$ with $f_i(x) \leq (1+\delta) \cdot S_i$ for $i = 2,\ldots, p$, we return one of them with minimum value in~$f_1$. Assume that, for the returned solution~$x$, we have $f_1(x) > \opt_1(S_2,\ldots,S_p)$. Then this means that there is some $x' \in X$ with $f_i(x') \leq S_i$ for $i = 1,\ldots,p$ and $f_1(x') < f_1(x) = \min\{f_1(x''):x'' \in P,\; f_i(x'') \leq (1+\delta) \cdot S_i, i = 2,\ldots ,p\}$. Thus, $x'$ is not $(1,1+\delta,\ldots,1+\delta)$-approximated by any solution in~$P$, which again contradicts~$P$ being a one-exact $\delta$-Pareto set.
	\qed
\end{proof}

\section{Computing Small One-Exact $\varepsilon$-Pareto Sets}
In this section, we consider the question if and how we can compute one-exact $\varepsilon$-Pareto sets that are not only of polynomial size, but also guarantee some bound on the cardinality compared to the cardinality of a smallest one-exact $\varepsilon$-Pareto set~$P^*$.

The worst-case cardinality of a one-exact $\varepsilon$-Pareto set computed by Algorithm~1 is $(2 \cdot \lceil\frac{M}{\log (1+\delta)}\rceil)^{p-1}$ for $(1+\delta)^2 = 1+ \varepsilon$, which is a factor of~$2^{p-1}$ larger than the upper bound of $(2 \cdot\lceil\frac{M}{\log (1+\varepsilon)}\rceil)^{p-1}$ for $\varepsilon$-Pareto sets constructed in the proof of Theorem~\ref{thm:existence}. However, even when adding a filtering step that removes solutions dominated by other solutions in the computed set, Algorithm~\ref{alg:basicAlgo} does not provide an upper bound on the ratio $\frac {|P|}{|P^*|}$ for any fixed instance, where $P$ is a one-exact $\varepsilon$-Pareto set computed by Algorithm~\ref{alg:basicAlgo} and $P^*$ is a one-exact $\varepsilon$-Pareto set of minimum cardinality. 

With such an additional filtering step, it is possible to show that, for biobjective problems, we have an upper bound of~$4$ on the ratio~$\frac{|P|}{|P^*|}$. When using $\delta = \sqrt[3]{1+\varepsilon} -1$ instead of $\delta = \sqrt{1+\varepsilon} -1$ in Algorithm~\ref{alg:basicAlgo} and replacing $1+\delta$ by $(1+\delta)^2$ in lines~3 and~5, we can improve this to a ratio of~$3$. We can even achieve a ratio of~$2$ when setting $\delta = \sqrt[4]{1+\varepsilon} -1$ and using a more sophisticated elimination technique than simply filtering out dominated solutions.

Here, however, we derive a different algorithm for biobjective problems that does not operate on a predefined grid but instead uses adaptive steps in order to decrease the number of solved instances of $\dualrestrict$ while still ensuring a size guarantee of $\frac {|P|}{|P^*|} \leq 2$ even without an additional (potentially time-consuming) filtering step.

We first give some results that substantiate the hardness of computing one-exact $\varepsilon$-Pareto sets that are smaller than twice the minimum size. Then, we formulate our algorithm and prove its correctness. We additionally consider the cases that a efficient routine for solving $\restrict$ or $\constrained$ is given. Finally, we prove a result that indicates the hardness of achieving similar results for more than two objectives.

\subsection{Lower Bounds for Biobjective Problems}
The following result shows that, for biobjective optimization problems, any generic algorithm based on solving $\dualrestrict^1$ that computes a one-exact $\varepsilon$-Pareto set~$P$ of cardinality $|P| < 2 \cdot |P^*|$ needs to solve an instance of $\dualrestrict^1_\delta$ for some $\delta > 0$ for which $\frac 1 \delta$ is exponential in the encoding length of the input. Since the running time of a method for solving $\dualrestrict^1_{\delta}$ is typically at least linear in $\frac 1 \delta$, this implies that it is unlikely for such an algorithm to run in polynomial time. Note that, for problems where the running time of a routine for $\dualrestrict^1_\delta$ is at most logarithmic in $\frac 1 \delta$, we can solve $\constrained^1$ efficiently by setting $\delta < 2^{-2M}$ in $\dualrestrict^1_\delta$. Thus, in this case, we can even compute a smallest one-exact $\varepsilon$-Pareto set in polynomial time (see Corollary~\ref{cor:constrained}).

\begin{theorem} \label{thm:generictwo}
	For any $\varepsilon>0$, there does not exist an algorithm that computes a one-exact $\varepsilon$-Pareto set~$P$ such that $|P| < 2 \cdot |P^*|$ for every biobjective optimization problem and generates feasible solutions only via solving $\dualrestrict^1_\delta$ for values of~$\delta$ such that $\frac 1 \delta$ is polynomial in the encoding length of the input.
\end{theorem}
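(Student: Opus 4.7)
The plan is to give an adversary argument. Fixing any algorithm $A$ that computes $P$ via polynomially many $\dualrestrict^1_\delta$ queries with $\frac 1 \delta$ polynomial in the input encoding, I would construct a biobjective instance on which $A$'s output is forced to satisfy $|P|\geq 2\cdot|P^*|$.

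The construction uses a collection of ``slots'' in the objective space. Each slot $k$ hosts a potential center solution $c_k$ together with two boundary solutions $\ell_k$ and $r_k$, arranged so that neither of $\ell_k,r_k$ $(1,1+\varepsilon)$-dominates the other while $c_k$ $(1,1+\varepsilon)$-dominates both; consequently, a slot containing $c_k$ contributes only $1$ to $|P^*|$, whereas a slot without $c_k$ contributes $2$. Slots are placed so that solutions from different slots never $(1,1+\varepsilon)$-dominate each other, and so that the narrow ``sensitive'' $S_2$-range---those values of $S_2$ for which $\dualrestrict^1_\delta(S_2)$ would be forced to reveal $c_k$ if it existed---has log-width bounded below by an inverse polynomial factor (using that $\frac 1 \delta$ is polynomial) and is pairwise disjoint across slots. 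The adversary then responds to each query $\dualrestrict^1_\delta(S_2)$ as follows: if $S_2$ lies outside every slot's sensitive range, return $\ell_k$ or $r_k$ (as appropriate) in a way that remains a valid $\dualrestrict^1_\delta$-answer both when $c_k$ is ultimately declared to exist and when it is not; if $S_2$ lies in the sensitive range of some slot, commit that this slot's $c_k$ is absent and respond accordingly. After all queries, the adversary declares an instance in which $c_k$ is present precisely in the slots that were never sensitively queried. A counting argument then gives $|P^*|$ equal to the number of untouched slots plus twice the number of touched ones, while $|P|$ is forced to contain both $\ell_k$ and $r_k$ for every slot (since $c_k$ is never revealed in untouched slots and no solution from another slot $(1,1+\varepsilon)$-dominates $\ell_k$ or $r_k$). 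Choosing an instance (for example one with an implicit encoding such as a polynomial-size graph) with many more slots than $A$ can touch with its polynomial query budget drives $|P|/|P^*|$ to at least~$2$.

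The main difficulty is showing that for every non-sensitive query the adversary can return an answer which is \emph{simultaneously consistent} with both ``$c_k$ exists'' and ``$c_k$ does not exist'' in every slot~$k$, so that the adversary's final commitment does not retroactively invalidate any earlier response; this requires a delicate case analysis of the set of valid $\dualrestrict^1_\delta$-answers in each $(S_2,\delta)$-regime, exploiting the $(1+\delta)$-slack in the $f_2$-bound to simultaneously justify each returned $\ell_k$ or $r_k$ under both possible commitments. A secondary obstacle is the fine-grained parameter bookkeeping needed to obtain a strict factor $2$ rather than only an asymptotic one, which involves carefully balancing the log-width of the sensitive ranges, the separation of slots, and the total number of slots against the algorithm's query budget.
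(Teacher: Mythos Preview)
Your multi-slot adversary has a genuine counting gap. With $N$ slots of which $T$ are ``touched'' (i.e., sensitively queried, so the adversary commits $c_k$ absent there), you correctly obtain $|P^*| = (N-T)\cdot 1 + T\cdot 2 = N+T$, while the algorithm---never having seen any $c_k$---is forced to output $|P| = 2N$. But then
\[
\frac{|P|}{|P^*|} \;=\; \frac{2N}{N+T},
\]
which is \emph{strictly less than} $2$ whenever $T\ge 1$. Since a single sensitive query suffices to make $T\ge 1$, the algorithm achieves $|P| < 2\,|P^*|$ on your final instance and the adversary loses. Making $N$ large only pushes the ratio toward $2$ asymptotically; it never reaches it. You flag this as a ``secondary obstacle'', but it is in fact the central obstruction, and it is not repaired by any amount of parameter bookkeeping within this slot structure: every touched slot raises $|P^*|$ by one without raising the forced $|P|$. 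Relatedly, your argument relies on bounding the \emph{number} of queries, whereas the theorem only constrains the values of~$\delta$; the statement must be defeated using the $\delta$-restriction alone.

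The paper's proof avoids this entirely by using a \emph{single} gadget and pushing all the work into the $\delta$-constraint rather than into a query budget. It exhibits two concrete instances $I_1=\{x_1,x_2\}$ and $I_2=\{x_1,x_2,x_3\}$ with $|P^*|=1$ for $I_1$ and $|P^*|\ge 2$ for $I_2$, where $x_3$ differs from $x_2$ only by an additive $1$ in $f_2$. Any call $\dualrestrict^1_\delta(S_2)$ with $\frac 1 \delta$ polynomial admits the \emph{same} valid answer in both instances, because distinguishing them would require $\delta < 1/f_2(x_3)$ and $f_2(x_3)$ can be made exponential in the input encoding. Hence the algorithm's output is identical on $I_1$ and $I_2$; if it has size $1$ it is not a one-exact $\varepsilon$-Pareto set for $I_2$, and if it has size $\ge 2$ it violates $|P| < 2\,|P^*|$ on $I_1$. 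No slot-counting is needed, and no bound on the number of queries is assumed.
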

\begin{proof}
	Given $\varepsilon>0$, consider the two instances $I_1=(\{x_1,x_2\},f)$ and $I_2=(\{x_1,x_2,x_3\},f)$ in which $f(x_1) = (f_1(x_2)-1,(1+\varepsilon)\cdot f_2(x_2))$ and $f(x_3) = (f_1(x_2),f_2(x_2)-1)$ (the values~$f_1(x_2)$ and~$f_2(x_2)$ are defined later). Then~$\{x_1\}$ is a one-exact $\varepsilon$-Pareto set for~$I_1$. On the other hand, any one-exact $\varepsilon$-Pareto set for~$I_2$ needs at least two solutions since neither~$x_2$ nor~$x_3$ $(1,1+\varepsilon)$-approximates~$x_1$, and~$x_1$ does not $(1,1+\varepsilon)$-approximate~$x_3$.
	An algorithm that computes a one-exact $\varepsilon$-Pareto set~$P$ with $|P|<2 \cdot |P^*|$ would, therefore, have to be able to distinguish between~$I_1$ and~$I_2$, i.e., detect the existence of~$x_3$.
	
	Note that, for $S_2 < f_2(x_3)$, $\textnormal{NO}$ is a solution to $\dualrestrict^1_\delta(S_2)$ in both instances~$I_1$ and~$I_2$ for any~$\delta$. If $f_2(x_1) > S_2 \geq f_2(x_3)$ and $\delta \geq \frac 1  {f_2(x_3)}$, we have
	\begin{align*}
	f_2(x_2) = f_2(x_3) + 1 = (1 + \frac 1 {f_2(x_3)}) \cdot f_2(x_3) \leq (1+\delta) \cdot f_2(x_3) \leq (1+\delta) \cdot S_2,
	\end{align*}
	so~$x_2$ is a solution to $\dualrestrict_\delta^1(S_2)$ in both instances. For $S_2 \geq f_2(x_1)$, $x_1$ is a solution to $\dualrestrict^1_\delta(S_2)$ in both instances for any~$\delta$. Therefore, in order to tell the difference between~$I_1$ and~$I_2$, an algorithm using only $\dualrestrict^1$ to generate feasible solutions would have to solve $\dualrestrict_\delta^1(S_2)$ for~$S_2$ and~$\delta$ with $f_2(x_1) > S_2 \geq f_2(x_3)$ and $\delta < \frac 1  {f_2(x_3)}$, i.e., $\frac 1 \delta > f_2(x_3) = f_2(x_2) - 1$. Since the value~$f_2(x_2)$ might be exponentially large in the encoding length of the input, this proves the claim.
	\qed
\end{proof}

While Theorem~\ref{thm:generictwo} shows that generic algorithms based on $\dualrestrict$ cannot obtain a factor smaller than~$2$ with respect to the cardinality of a one-exact Pareto set without using exponentially large values of~$\frac{1}{\delta}$ (even if $\textsf{P}=\textsf{NP}$), we now show that, for certain problems, \emph{no} algorithm (whether based only on $\dualrestrict$ or not) can obtain a factor smaller than~$2$ under the assumption that $\textsf{P}\neq\textsf{NP}$. To this end, we consider the following biobjective scheduling problem: We are given a set~$J$ of $|J| = n$ independent jobs, which are to be scheduled on $m$~parallel machines. Performing job~$j$ on machine~$i$ takes processing time~$p_{ij}\geq 0$ and causes cost~$c_{ij}\geq 0$. The goal is to minimize the makespan (i.e., the maximum completion time of a job) and the total cost (i.e., the sum of the costs resulting from assigning jobs to machines). We call this problem, where the cost-objective is the first objective~$f_1$ and the makespan-objective is the second objective~$f_2$, the \emph{min-cost-makespan scheduling problem}.

The min-cost-makespan scheduling problem is known to have a fully po\-ly\-no\-mial-time algorithm for $\dualrestrict^1$ and a fully polynomial-time one-exact approximation algorithm due to Angel et al.~\cite{Angel+etal:min-cost-makespan,Angel+etal:bicriteria-scheduling}. For this problem, however, we can show the following additional hardness result regarding the computation of small one-exact $\varepsilon$-Pareto sets.

\begin{theorem}
	For the min-cost-makespan scheduling problem, if $0< \varepsilon< \frac 1 2$, it is \textsf{NP}-hard to compute a one-exact $\varepsilon$-Pareto set~$P$ of cardinality $|P| < 2 \cdot |P^*|$.
\end{theorem}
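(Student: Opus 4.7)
The plan is to reduce from \partition{}: given a \partition{} instance with positive integers $a_1,\ldots,a_n$ summing to $2B$, I would construct a min-cost-makespan scheduling instance on $m=2$ machines whose one-exact $\varepsilon$-Pareto structure reveals the YES/NO answer to \partition{}. The construction mimics the three-point configuration used in the proof of Theorem~\ref{thm:generictwo}: the instance contains the $n$ partition items as jobs with $p_{1j}=p_{2j}=a_j$ and zero cost on both machines, together with a small number of auxiliary ``control'' jobs designed so that only two total-cost levels $c_A<c_B$ are achievable. The auxiliary processing times and costs are calibrated so that (i)~a low-cost schedule $\tilde x_A$ is always feasible with image $(c_A,m_A)$; (ii)~a high-cost schedule $\tilde x_B$ is always feasible with image $(c_B,m_B)$ satisfying $m_A=(1+\varepsilon)\,m_B$, so $\tilde x_A$ already $(1,1+\varepsilon)$-approximates $\tilde x_B$; and (iii)~a balanced high-cost schedule $\tilde x_C$ with image $(c_B,m_C)$ where $m_C<m_B/(1+\varepsilon)$ is feasible exactly when the partition jobs can be split into two subsets of sum~$B$ each, i.e.\ exactly in YES instances.

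Given this correspondence, the rest is direct. In a NO instance the only candidate nondominated points are $f(\tilde x_A)$ and $f(\tilde x_B)$, and $\{\tilde x_A\}$ is already a one-exact $\varepsilon$-Pareto set since $m_A\leq (1+\varepsilon)\,m_B$, so $|P^*|=1$. In a YES instance, the additional point $f(\tilde x_C)$ is neither $(1,1+\varepsilon)$-approximated by $f(\tilde x_A)$ (since $m_A>(1+\varepsilon)\,m_C$) nor by $f(\tilde x_B)$, and $\tilde x_C$ cannot match $\tilde x_A$'s lower cost exactly; hence $|P^*|\geq 2$. Consequently, an algorithm producing $P$ with $|P|<2\cdot|P^*|$ returns $|P|=1$ exactly in NO instances and $|P|\in\{2,3\}$ in YES instances, and checking the cardinality of the returned set would decide \partition{} in polynomial time, contradicting $\textsf{P}\neq\textsf{NP}$.

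The hypothesis $0<\varepsilon<\tfrac{1}{2}$ enters when ensuring that the three target ratios can be realized simultaneously with \emph{integer} partition data: it keeps the factor-$(1+\varepsilon)$ gap between $m_B$ and $m_C$ large enough to strictly exceed the additive ``$+1$'' slack provided by integrality of subset sums in the NO case, so that the approximation relations between the three template points flip in the intended direction precisely when \partition{} flips from NO to YES.

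The main obstacle, and the real technical content of the proof, is engineering the auxiliary control jobs so that the instance has exactly the intended three-point nondominated skeleton and no spurious intermediate Pareto points that would let a single solution $(1,1+\varepsilon)$-approximate everything even in the YES case. Concretely, this requires choosing the control jobs' processing times and costs as specific rational expressions in $B$ and $\varepsilon$, so that the only feasible way to drive the makespan strictly below $m_B/(1+\varepsilon)$ while keeping the cost at $c_B$ is through an exact balanced partition of the $a_j$, while all other schedules are dominated by one of $\tilde x_A,\tilde x_B,\tilde x_C$.
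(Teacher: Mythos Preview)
Your high-level strategy---reduce from \partition{} so that the instance has $|P^*|=1$ in the NO case and $|P^*|\geq 2$ in the YES case---matches the paper's approach. However, the proposal has a concrete gap that would break the argument, and you yourself flag the construction as ``the main obstacle'' without resolving it.

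The specific problem is your choice to give the partition jobs \emph{zero cost on both machines}. With that choice, the total cost is determined entirely by the control jobs, so every placement of the partition jobs realizes the low cost~$c_A$. In particular, in a YES instance you could split the partition jobs evenly across the two machines while still paying only~$c_A$, obtaining a low-cost schedule whose makespan is essentially the balanced makespan~$m_C$. That single schedule would then $(1,1+\varepsilon)$-approximate everything, so $|P^*|=1$ even in the YES case, and the reduction collapses. The paper avoids this by making the partition-job costs \emph{asymmetric}: $c_{1j}=a_j$ and $c_{2j}=0$. This forces the unique minimum-cost schedule to put all partition jobs on machine~2, which in turn forces its makespan to be large ($K+A$). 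The two auxiliary jobs, with costs $(1,2)$ and $(2,1)$, pin the minimum cost to~$2$ uniquely. The parameter~$K$ is then chosen to satisfy three calibrated inequalities so that $\frac{1}{1+\varepsilon}(K+A)$ lies strictly between $K+\tfrac{A}{2}$ and $K+\tfrac{A}{2}+1$ (and also below $2K$); this is exactly where the hypothesis $0<\varepsilon<\tfrac12$ and a mild lower bound on~$A$ are used to guarantee such a~$K$ exists. Your proposal correctly anticipates that integrality and the $\varepsilon$-bound interact here, but does not supply the mechanism.

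In short: the outline is right, but the ``zero cost on both machines'' detail is not a harmless simplification---it is precisely what the construction must avoid, and the paper's asymmetric-cost trick is the missing idea.
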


\begin{proof}
	We use a reduction from $\partition$. Given an instance $a_1,\ldots, a_n \in \mathbb N$ of $\partition$, where, without loss of generality, $A \colonequals \sum_{i=1}^n a_i \geq \frac 4 {1-2\varepsilon}$, define an instance of the min-cost-makespan scheduling problem as follows: We have $m = 2$ machines and $|J| = n+2$ jobs. For $j = 1,\ldots, n$, we have a job~$j$ with processing times $p_{1j} = p_{2j} = a_j$ and costs $c_{1j} = a_j$ and $c_{2j} = 0$. We have two additional jobs~$n+1$ and $n+2$, with $p_{1(n+1)} = p_{1(n+2)} = p_{2(n+1)} = p_{2(n+2)} = K$, $c_{1(n+1)} = c_{2(n+2)} = 1$, and $c_{2(n+1)} = c_{1(n+2)} = 2$, where $K>0$ is chosen such that
	\begin{align}
	\frac 1 {1+\varepsilon} \cdot(K+A) &> K+\frac A 2, \label{eq:np-cond1}\\ 
	\frac 1 {1+\varepsilon} \cdot(K+A) &\leq K+\frac A 2 + 1, \label{eq:np-cond2}\\
	\frac 1 {1+\varepsilon} \cdot(K+A) &\leq 2\cdot K.  \label{eq:np-cond3}
	\end{align}
	Note that it is possible to choose~$K$ like this by our assumptions on~$\varepsilon$ and~$A$. For instance, one can check that $K \colonequals \lceil \frac {1-\varepsilon}{2\varepsilon} \cdot A - 1 - \frac 1 \varepsilon\rceil$ fulfills \eqref{eq:np-cond1}--\eqref{eq:np-cond3} and $K > 0$.
	
	The schedule~$\bar s$ where machine~$1$ performs only~$\{n+1\}$ and machine~$2$ performs the jobs~$\{1,\ldots,n,n+2\}$ has a cost of $f_1(\bar s) = 2$. This is the unique minimum in~$f_1$ over all schedules, so the schedule~$\bar s$ is not $(1,1+\varepsilon)$-approximated by any other schedule. Thus, it must be part of every one-exact $\varepsilon$-Pareto set. Moreover, $\bar s$ has a makespan of $f_2(\bar s) = K+A$, so, by Inequality~\eqref{eq:np-cond3}, it $(1,1+\varepsilon)$-approximates every schedule where jobs~$n+1$ and~$n+2$ are performed on the same machine. 
	
	\smallskip
	
	If the instance of $\partition$ is a NO-instance, any schedule where jobs~$n+1$ and $n+2$ are performed on different machines has a makespan of at least $K+\frac A 2 + 1$, so, by Inequality~\eqref{eq:np-cond2}, the one-element set~$\{\bar s\}$ is a one-exact $\varepsilon$-Pareto set.
	
	\smallskip
	
	If the instance of $\partition$ is a YES-instance, i.e., if there exists a partition~$(I_1, I_2)$ such that $\sum_{i \in I_1} a_i = \sum_{i \in I_2} a_i = \frac A 2$, then the schedule where machine~$1$ performs jobs~$\{n+1\} \cup I_1$ and machine~$2$ performs jobs~$\{n+2\} \cup I_2$ has a makespan of $K+ \frac A 2$. By Inequality~\eqref{eq:np-cond1}, this schedule is not $(1,1+\varepsilon)$-approximated by~$\bar s$, so any one-exact $\varepsilon$-Pareto set must contain at least two solutions. Therefore, it is \textsf{NP}-hard to distinguish between the two cases $|P^*| = 1$ and $|P^*| \geq 2$. \qed
\end{proof}

\subsection{Algorithm for Biobjective Problems}

We now provide an algorithm that computes a one-exact $\varepsilon$-Pareto set~$P$ that is not larger than twice the cardinality of a smallest one-exact $\varepsilon$-Pareto set~$P^*$. The algorithm is formally stated in Algorithm~\ref{alg:twoapproxAlgo} and an illustration of its behavior is given in Figure~\ref{fig:twoapproxAlgo}. The following lemma collects several invariants that hold during the execution of the algorithm.























\begin{algorithm}[!ht]
	\SetKw{Compute}{compute}
	\SetKw{Break}{break}
	\SetKw{Breakrepeat}{break(repeat)}
	\SetKw{Breakwhilerepeat}{break(while,repeat)}
	\SetKwInOut{Input}{input}\SetKwInOut{Output}{output}
	\SetKwComment{command}{right mark}{left mark}
	
	\Input{an instance $(X,f)$ of a biobjective minimization problem, $\varepsilon>0$, an algorithm for $\dualrestrict^1$}
	
	\Output{a one-exact $\varepsilon$-Pareto set for $(X,f)$}
	
	\BlankLine
	
	$P \leftarrow \emptyset$
	
	$\delta \leftarrow \sqrt[4]{1+\varepsilon} -1$
	
	$S_2 \leftarrow 2^M$
	
	\Repeat{
		
		$x \leftarrow \dualrestrict^1_\delta(S_2)$
		
		\lIf{$x = \textnormal{NO}$}{\Breakrepeat}
		
		$S_2 \leftarrow \frac {1}{(1+\delta)^2}\cdot f_2(x)$
		
		$x^{\textnormal{next}} \leftarrow \dualrestrict^1_\delta(S_2)$
		
		\lIf{$\xnext = \textnormal{NO}$}{$P \leftarrow P \cup \{x\}$ and \Breakrepeat}
		
		\While {$f_1(\xnext) = f_1(x)$}{
			
			$x \leftarrow x^{\textnormal{next}}$
			
			$S_2 \leftarrow \frac {1}{(1+\delta)^2}\cdot f_2(x)$
			
			$x^{\textnormal{next}} \leftarrow \dualrestrict^1_\delta(S_2)$
			
			\lIf{$\xnext = \textnormal{NO}$}{$P \leftarrow P \cup \{x\}$ and \Breakwhilerepeat}
		}
		$P \leftarrow P \cup \{x\}$
		
		$S_2 \leftarrow \frac {1}{1+\varepsilon}\cdot f_2(x)$		
	}
	
	\Return $P$
	
	\caption{A $(1,1+\varepsilon)$-approximation for biobjective optimization problems} \label{alg:twoapproxAlgo}
\end{algorithm}

\begin{figure}[ht!]
	\begin{center}
		\begin{tikzpicture}[scale=1]
		
		\draw[fill,gray!20] (6.5,9.6) -- (6.7,9.6) -- (7,9) -- (6.7,8) -- (7,7) -- (6.7,6.5) -- (7,6) -- (6.7,4.5)--(7,4) -- (6.7,3) -- (7,2.5) -- (6.7,2) -- (7,1.5) -- (6.7,0.8) -- (6.5,0.8);		
		\draw[fill,gray!50] (6.7,9.6) -- (6,10.6) -- (4.5,10.3)--(4,10.6) -- (3.5,10.3) -- (3,10.6) -- (2,10.3) -- (1,10.6) -- (0,10.3) -- (0,9.6);
		\draw[fill,gray!50] (6.5,0) -- (6.7,0) -- (7,0.3) -- (6.7, 0.9) -- (6.5,0.9);
		
		
		\fill[gray!20] (1,0) rectangle (6.5,9.6);
		
		\fill[gray!50] (0,0) rectangle (1,9.6);
		\fill[gray!50] (1,0) rectangle (1.5,6.0);
		\fill[gray!50] (1.5,0) rectangle (2.5,4.8);
		\fill[gray!50] (2.5,0) rectangle (3,3.9);
		\fill[gray!50] (3,0) rectangle (4,2.7);
		\fill[gray!50] (4,0) rectangle (6.5,0.9);

		\draw (1.0,9.0) circle (3pt) node[above right] {$f( x^{(1)})$};
		\draw (1.0,8.1) circle (3pt) node[above right] {$f( x^{(2)})$};
		\fill (1.0,7.2) circle (3pt) node[above right] {$f( x^{(3)})$};
		\draw (1.5,6.3) circle (3pt) node[above right] {$f( x^{(4)})$};	
		\fill (2.5,5.1) circle (3pt) node[above right] {$f( x^{(5)})$};
		\draw (3.0,4.2) circle (3pt) node[above right] {$f( x^{(6)})$};
		\draw (4.0,3.0) circle (3pt) node[above right] {$f( x^{(7)})$};
		\fill (4.0,2.1) circle (3pt) node[above right] {$f( x^{(8)})$};
		
		\foreach \x in {9.6, 7.8, 6.9, 6.0, 4.8, 3.9, 2.7, 1.8, 0.9}{
			\draw[-,thin] (0,\x) -- (7,\x) ;
		}
		
		\draw[dashed, line width=0.3pt] (1,0) -- (1,9.6);
		\draw[dashed, line width=0.3pt] (1.5,0) -- (1.5,6);
		\draw[dashed, line width=0.3pt] (2.5,0) -- (2.5,4.8);	
		\draw[dashed, line width=0.3pt] (3,0) -- (3,3.9);	
		\draw[dashed, line width=0.3pt] (4,0) -- (4,2.7);	
		
		\draw[->] (-0.2,0) -- (7,0) node[below right] {$f_1$};
		\draw[->] (0,-0.2) -- (0,10.6) node[above left] {$f_2$}; 
		
		\draw[-] (1,0.1) -- (1,-0.1) node[below] {$f_1(x_1)$};
		\draw[-] (2.5,0.1) -- (2.5,-0.1) node[below] {$f_1(x_2)$};
		\draw[-] (4,0.1) -- (4,-0.1) node[below] {$f_1(x_3)$};

		\draw[-] (0.1,9.6) -- (-0.1,9.6) node[left] {$2^M \equalscolon S_2^{(1)}$};
		\draw[-] (0.1,7.8) -- (-0.1,7.8) node[left] {$\frac{1}{(1+\delta)^2} f_2( x^{(1)}) \equalscolon S_2^{(2)}$};
		\draw[-] (0.1,6.9) -- (-0.1,6.9) node[left] {$\frac{1}{(1+\delta)^2} f_2( x^{(2)}) \equalscolon S_2^{(3)}$};
		\draw[-] (0.1,6.0) -- (-0.1,6.0) node[left] {$\frac{1}{(1+\delta)^2} f_2( x^{(3)}) \equalscolon S_2^{(4)}$};
		\draw[-] (0.1,4.8) -- (-0.1,4.8) node[left] {$\frac{1}{1+\varepsilon} f_2( x^{(3)}) \equalscolon S_2^{(5)}$};
		\draw[-] (0.1,3.9) -- (-0.1,3.9) node[left] {$\frac{1}{(1+\delta)^2} f_2( x^{(5)}) \equalscolon S_2^{(6)}$};
		\draw[-] (0.1,2.7) -- (-0.1,2.7) node[left] {$\frac{1}{1+\varepsilon} f_2( x^{(5)}) \equalscolon S_2^{(7)}$};
		\draw[-] (0.1,1.8) -- (-0.1,1.8) node[left] {$\frac{1}{(1+\delta)^2} f_2( x^{(7)}) \equalscolon S_2^{(8)}$};
		\draw[-] (0.1,0.9) -- (-0.1,0.9) node[left] {$\frac{1}{(1+\delta)^2} f_2( x^{(8)}) \equalscolon S_2^{(9)}$};
		\draw[-] (0.1,0.3) -- (-0.1,0.3) node[left] {$2^{-M}$};
		
		\end{tikzpicture}
		\caption{An illustration of Algorithm~\ref{alg:twoapproxAlgo} in the objective space (on a logarithmic scale). Each~$ x^{(i)}$ is a solution to $\dualrestrict^1_\delta( S_2^{(i)})$ for $i = 1,\ldots,8$, where $(1+\delta)^4 = (1+ \varepsilon)$. The dark gray area does not contain any feasible point. Any feasible point in the light gray area is $(1,1+\varepsilon)$-dominated by $f( x^{(i)})$ for some~$i \in \{1,\ldots,8\}$. The solution~$x^{(4)}$ is discarded since any solution that is $(1,1+\varepsilon)$-approximated by~$x^{(4)}$ is also $(1,1+\varepsilon)$-approximated by $x^{(3)}$ or $x^{(5)}$.  The solution~$x^{(6)}$ is discarded since any solution that is $(1,1+\varepsilon)$-approximated by~$x^{(6)}$ is also $(1,1+\varepsilon)$-approximated by~$x^{(5)}$ or~$x^{(7)}$. The solutions~$x^{(1)}$, $ x^{(2)}$, and $x^{(7)}$ are discarded since they are dominated by $x^{(3)}$, $ x^{(3)}$, and $x^{(8)}$, respectively. $\dualrestrict_\delta(S_2^{(9)})$ returns NO, so the algorithm returns $\{x^{(3)}, x^{(5)}, x^{(8)}\}$. \label{fig:twoapproxAlgo}}
	\end{center}
\end{figure}
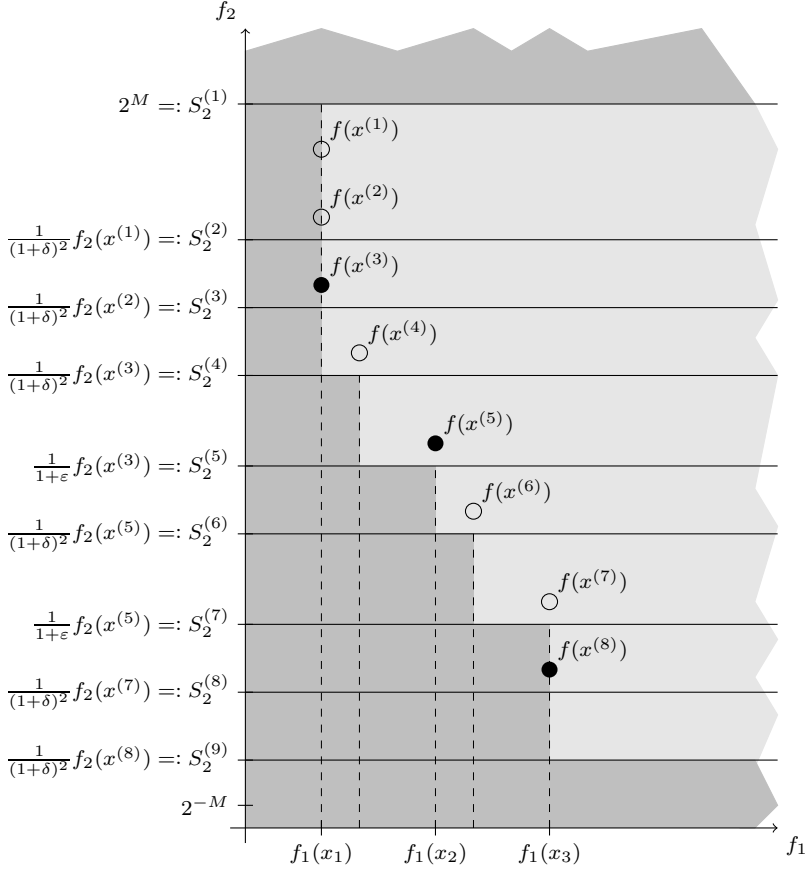

\begin{lemma} \label{lem:twoapproxlemma}
	When performing line~10 in any iteration of the while/repeat loops in Algorithm~\ref{alg:twoapproxAlgo}, the following properties hold:
	\begin{enumerate}[(a)]
		\item $x \neq \textnormal{NO}$ and $\xnext \neq \textnormal{NO}$,
		\item $S_2 = \frac 1 {(1+\delta)^2} \cdot f_2(x)$,
		\item $f_1(\xnext) \leq \opt_1(S_2)$,
		\item $f_2(\xnext) \leq \frac 1 {1+\delta} \cdot f_2(x)$,
		\item $f_1(\xnext) \geq f_1(x)$,
		\item The solutions in $P\cup \{x\}$ do $(1,1+\varepsilon)$-approximate all solutions~$x' \in X$ with $f_2(x') \geq \frac 1 {1+\varepsilon} \cdot f_2(x)$,
		\item $f_2(x) \leq \frac 1 {(1+\delta)^3} \cdot \min \{f_2(\bar x) : \bar x \in P\}$.
	\end{enumerate}
\end{lemma}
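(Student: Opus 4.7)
The plan is to prove all seven invariants together by induction on the number of times the algorithm reaches line~10, counting each pass through the outer repeat loop and the inner while loop. The base case is the first reach in the first outer iteration, when $P=\emptyset$ and $x$ was produced by $\dualrestrict^1_\delta(2^M)$; the inductive step naturally splits into two scenarios, corresponding to whether the current~$x$ originated from line~5 (fresh outer iteration) or from line~11 (inner-loop replacement).

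Properties (a)--(d) are essentially bookkeeping. Claim~(a) follows from the explicit breaks in lines~6, 9, and~14, which rule out $\textnormal{NO}$-values at line~10. Claim~(b) holds because lines~7 and~12 set $S_2 = \frac{1}{(1+\delta)^2} f_2(x)$ immediately before $\xnext$ is computed, and neither~$x$ nor~$S_2$ is modified between that assignment and line~10. Claims~(c) and~(d) then follow directly from the specification of $\dualrestrict^1_\delta(S_2)$ applied to the call that produced~$\xnext$, combined with the identity $(1+\delta) S_2 = \frac{1}{1+\delta} f_2(x)$ coming from~(b).

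The central structural step is~(e). The plan is to denote by~$S_2^x$ the parameter of the $\dualrestrict^1_\delta$-call that produced the current~$x$. Since that call returned a feasible solution, $f_2(x) \leq (1+\delta) S_2^x$, hence $S_2^x \geq \frac{1}{1+\delta} f_2(x)$. Combined with~(d), this yields $f_2(\xnext) \leq \frac{1}{1+\delta} f_2(x) \leq S_2^x$, so $\xnext$ is a feasible candidate for the earlier $\dualrestrict$-problem that defined~$x$, and therefore $f_1(x) \leq \opt_1(S_2^x) \leq f_1(\xnext)$. A small but necessary detail is that this argument must be verified uniformly in each case in which~$x$ could have originated (first reach from line~5 in the first outer iteration, first reach from line~5 in a later outer iteration where $S_2^x = \frac{1}{1+\varepsilon} f_2(x_{\text{prev}})$, or an inner-loop update from line~11).

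Finally, (f) and (g) are established as inductive invariants. For (g), the key observation is that a solution is added to~$P$ only in lines~9, 14, or~15, and in the only case where the algorithm continues afterwards (line~15), line~16 sets $S_2 \leftarrow \frac{1}{1+\varepsilon} f_2(x_{\text{added}})$, so the next call to $\dualrestrict$ returns an $x$ with $f_2(x) \leq (1+\delta) S_2 = \frac{1+\delta}{1+\varepsilon} f_2(x_{\text{added}}) = \frac{1}{(1+\delta)^3} f_2(x_{\text{added}})$ using $(1+\delta)^4 = 1+\varepsilon$; inside the while loop, $x$ is replaced only by $\xnext$ with strictly smaller $f_2$ (by~(d)), so~(g) is preserved, and chaining the same observation shows that the most recently added element minimizes~$f_2$ over~$P$. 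For~(f), the plan is to strengthen it into a companion invariant stating that immediately after each line~15, $P$ itself already covers every feasible solution with $f_2 \geq \frac{1}{1+\varepsilon} f_2(x_{\text{added}})$. Then, for any $x' \in X$ with $f_2(x') \geq \frac{1}{1+\varepsilon} f_2(x)$, one splits on whether $f_2(x') \geq S_2^x$ (covered by~$P$ by the inductive hypothesis) or $f_2(x') < S_2^x$ (covered by~$x$ itself, since $f_1(x) \leq \opt_1(S_2^x) \leq f_1(x')$ while $f_2(x) \leq (1+\varepsilon) f_2(x')$ holds by assumption). Inside the while loop, $\xnext$ weakly dominates the previous~$x$ (equal $f_1$, smaller $f_2$), so previous coverage is preserved and the same case-split handles the newly relevant range of~$f_2(x')$. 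The main obstacle I foresee is matching the contraction factor $(1+\delta)^2$ used inside the while loop with the factor $1+\varepsilon$ used between outer iterations so that no feasible point slips through an uncovered gap --- this is exactly what the choice $\delta = \sqrt[4]{1+\varepsilon} - 1$ is engineered to ensure, and the arithmetic must be tracked carefully in both scenarios.
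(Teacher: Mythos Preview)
Your proposal is correct and follows essentially the same approach as the paper: direct bookkeeping for (a)--(d), the key $\dualrestrict$-parameter argument $f_2(\xnext)\le\frac{1}{1+\delta}f_2(x)\le S_2^x$ for (e), and inductive tracking across while- and repeat-transitions for (f) and (g), including the dominance of $x$ by $\xnext$ inside the while loop and the $(1+\delta)\cdot\frac{1}{1+\varepsilon}=\frac{1}{(1+\delta)^3}$ arithmetic. The only differences are presentational---you package everything into one induction on reaches of line~10 and make the companion invariant on $P$ after line~15 explicit, whereas the paper proves (a)--(e) non-inductively and keeps that invariant implicit---but the substance is the same.
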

\begin{proof}
	\begin{enumerate} [(a)]
		\item As soon as solving an instance of $\dualrestrict_\delta^1$ yields $\textnormal{NO}$ during the execution of Algorithm~\ref{alg:twoapproxAlgo}, the repeat loop breaks immediately, so line~10 is not reached anymore.
		\item Before line~10, either line~7 or line~12 is executed.
		\item Before line~10, either line~8 or line~13 is executed. Therefore we have $\xnext = \dualrestrict_\delta^1(S_2)$, which implies $f_1(\xnext) \leq \opt_1(S_2)$.
		\item  Again, we have  $\xnext = \dualrestrict_\delta^1(S_2)$. This implies that $f_2(\xnext) \leq (1+\delta) \cdot S_2$. Using~(b), we obtain that $f_2(\xnext) \leq (1+\delta) \cdot \frac 1 {(1+\delta)^2} \cdot f_2(x) = \frac 1 {1+\delta} \cdot f_2(x)$.
		\item Since the only way feasible solutions are generated in Algorithm~\ref{alg:twoapproxAlgo} is by solving $\dualrestrict_\delta^1$, $x$ is a solution to $\dualrestrict^1_\delta(S)$ for some parameter $S \in \Q$, where $f_1(x) \leq \opt_1(S)$ and $f_2(\tilde x) \leq (1+\delta) \cdot S$. Due to this and~(c), we have
		$$f_2(\xnext) \leq \frac 1 {1+\delta} \cdot f_2(x) \leq \frac 1 {1+\delta} \cdot (1+\delta) \cdot S = S,$$
		so we can conclude that $f_1(x) \leq \opt_1(S) \leq f_1(\xnext)$.	 
		\item Consider an iteration of the inner while loop and suppose that (a)--(f) hold at the beginning of this iteration. We show that~(f) also holds at the end of this iteration (provided that the algorithm does not terminate during the iteration). At the beginning of the iteration, we have $f_1(\xnext) = f_1(x)$, so~(d) implies that $\xnext$ strictly dominates~$x$. Therefore, also $P \cup \{\xnext\}$ approximates all solutions~$x' \in X$ with $f_2(x') \geq \frac 1 {1+\varepsilon} \cdot f_2(x)$. Moreover, using~(c) and~(e), we know that
		$$f_1(\xnext) \leq \opt_1(S_2) = \opt_1(\frac 1 {(1+\delta)^2}\cdot f_2(x)) \leq \opt_1(\frac 1 {1+\varepsilon} \cdot f_2(x)),$$
		so $\xnext$ also approximates all solutions $x' \in X$ with $\frac 1 {1+\varepsilon} \cdot f_2(x) > f_2(x') \geq \frac 1 {1+\varepsilon} \cdot f_2(\xnext)$. During the iteration of the while loop, line~11 is performed, which implies that~(f) holds at the end of the iteration.
		
		Now consider an iteration of the outer repeat loop and assume that (a)--(f) hold in line~15 of this iteration. We show that~(f) holds in line~10 of the next iteration (given that line~10 is reached). After performing line~16, we know that~$P$ approximates every solution $x' \in X$ with $f_2(x') \geq \frac 1 {1+\varepsilon} \cdot f_2(x) = S_2$. This implies that~(f) holds after line~5 of the next iteration and, thus, also in line~10 of the next iteration.
		
		\item If~(d) holds at the beginning of a fixed iteration of the while loop, then the value of~$f_2(x)$ decreases during the iteration. Therefore,~(g) holding at the beginning of the iteration together with the fact that~$P$ remains unchanged during the while loop imply that~(g) also holds at the end of the iteration.
		
		If~(g) holds in line~15 of an iteration of the repeat loop, then, after line~16, we have that $\min\{f_2(\bar x) : \bar x \in P\} = f_2(x)$, so $S_2 = \frac 1 {1+\varepsilon} \cdot \min\{f_2(\bar x) : \bar x \in P\}$ holds after line~16. After line~5 in the next iteration, we thus have
		\begin{align*}
		f_2(x) \leq (1+\delta) \cdot S_2 = \frac 1 {(1+\delta)^3} \cdot \min\{f_2(\bar x) : \bar x \in P\}.\tag*{\qed}
		\end{align*}
	\end{enumerate}
\end{proof}

The following two results establish a bound on the number of instances of $\dualrestrict^1_\delta$ solved in the algorithm.

\begin{lemma} \label{lem:minstepsize}
	Every time the value of $S_2$ is changed from some value~$S_2^{\textnormal{old}}$ to a new value~$S_2^{\textnormal{new}}$ during the execution of Algorithm~\ref{alg:twoapproxAlgo}, we have
	$$S_2^{\textnormal{new}} \leq \frac{1}{1+\delta} \cdot S_2^{\textnormal{old}}.$$
\end{lemma}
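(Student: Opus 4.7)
The plan is to go through Algorithm~\ref{alg:twoapproxAlgo} line by line and identify the three places where $S_2$ is modified after its initialization: line~7, line~12, and line~16. For each of these, I would show that the update shrinks $S_2$ by at least a factor of $1+\delta$. The common thread will be the defining property of $\dualrestrict^1_\delta$: whenever $\dualrestrict^1_\delta(S_2)$ returns a solution $y\neq\textnormal{NO}$, that solution satisfies $f_2(y)\leq (1+\delta)\cdot S_2$. Combined with the specific multiplicative factor prepended in each assignment (either $\frac{1}{(1+\delta)^2}$ or $\frac{1}{1+\varepsilon}=\frac{1}{(1+\delta)^4}$), this will immediately give the claimed bound.

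First I would handle line~7. At the moment of this assignment, the current value $S_2^{\textnormal{old}}$ is exactly the argument with which $x$ was just obtained in line~5 as $x=\dualrestrict^1_\delta(S_2^{\textnormal{old}})$, and we reach line~7 only if $x\neq\textnormal{NO}$. Hence $f_2(x)\leq (1+\delta)\,S_2^{\textnormal{old}}$, and the new value $S_2^{\textnormal{new}}=\frac{f_2(x)}{(1+\delta)^2}\leq\frac{S_2^{\textnormal{old}}}{1+\delta}$. Next I would handle line~12, which is essentially the same argument: on entering the while-loop body, the current $S_2^{\textnormal{old}}$ is precisely the argument used to compute $\xnext$ in line~8 or in the previous iteration's line~13; since we did not break at line~14, $\xnext\neq\textnormal{NO}$, so $f_2(\xnext)\leq(1+\delta)\,S_2^{\textnormal{old}}$. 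After line~11 we have $x=\xnext$, so the same $\frac{1}{(1+\delta)^2}$-rescaling in line~12 yields $S_2^{\textnormal{new}}\leq \frac{1}{1+\delta}S_2^{\textnormal{old}}$.

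The slightly different case is line~16. Here the assignment is $S_2\leftarrow\frac{1}{1+\varepsilon}f_2(x)$, but the value of $S_2$ just before line~16 is not the output of a $\dualrestrict$ call; rather, $S_2$ was last set in either line~7 or line~12, both of which make $S_2^{\textnormal{old}}=\frac{1}{(1+\delta)^2}f_2(x)$ for the current $x$. Since $(1+\delta)^4=1+\varepsilon$, the ratio is
\[
\frac{S_2^{\textnormal{new}}}{S_2^{\textnormal{old}}}\;=\;\frac{\tfrac{f_2(x)}{(1+\delta)^4}}{\tfrac{f_2(x)}{(1+\delta)^2}}\;=\;\frac{1}{(1+\delta)^2}\;\leq\;\frac{1}{1+\delta},
\]
so the inequality holds here as well.

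I do not anticipate any genuine obstacle: the lemma is essentially a bookkeeping verification. The only subtle point is to make sure that for each update line, the identification of $S_2^{\textnormal{old}}$ is correct, i.e., to track that the $S_2$ in the scaling factor matches the $S_2$ that was fed into the most recent $\dualrestrict$ call that produced the solution being rescaled. Once that correspondence is laid out clearly (perhaps invoking Lemma~\ref{lem:twoapproxlemma}(b) for lines~7 and~12), the conclusion in all three cases is a one-line calculation.
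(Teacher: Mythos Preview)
Your proposal is correct and follows essentially the same three-case decomposition as the paper's proof. The only cosmetic difference is that for line~12 the paper invokes Lemma~\ref{lem:twoapproxlemma}(b) and~(d) rather than appealing directly to the $\dualrestrict$ guarantee on $\xnext$, and for line~16 the paper cites Lemma~\ref{lem:twoapproxlemma}(b) to identify $S_2^{\textnormal{old}}=\frac{1}{(1+\delta)^2}f_2(x)$ rather than tracing it back to line~7 or~12 as you do; both routes amount to the same one-line computations.
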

\begin{proof}
	We distinguish three cases: If the value of~$S_2$ is changed from $S_2^{\textnormal{old}}$ to $S_2^{\textnormal{new}}$ in line~7, we have
	$$S_2^{\textnormal{new}} = \frac{1}{(1+\delta)^2} \cdot f_2(x) \leq \frac{1}{(1+\delta)^2} \cdot (1+\delta) \cdot S_2^{\textnormal{old}} = \frac{1}{1+\delta} \cdot  S_2^{\textnormal{old}},$$
	where the inequality is due to line~5.

	If the value of~$S_2$ is changed from $S_2^{\textnormal{old}}$ to $S_2^{\textnormal{new}}$ in line~12, Lemma~\ref{lem:twoapproxlemma}~(b) and~(d) imply that
	$$S_2^{\textnormal{new}} = \frac{1}{(1+\delta)^2} \cdot f_2(x) = \frac{1}{(1+\delta)^2} \cdot f_2(\xnext) \leq \frac{1}{1+\delta} \cdot  S_2^{\textnormal{old}}.$$
	
	If the value of~$S_2$ is changed from $S_2^{\textnormal{old}}$ to $S_2^{\textnormal{new}}$ in line~16, we have $S_2^{\textnormal{new}} = \frac 1 {1+\varepsilon} \cdot f_2(x)$. Since Lemma~\ref{lem:twoapproxlemma}~(b) yields that $S_2^{\textnormal{old}}= \frac 1 {(1+\delta)^2} \cdot f_2(x)$, this implies that
	\begin{align*}
	S_2^{\textnormal{new}} = \frac{1}{(1+\delta)^2} \cdot  S_2^{\textnormal{old}} < \frac{1}{1+\delta} \cdot  S_2^{\textnormal{old}}.\tag*{\qed}
	\end{align*}
\end{proof}

\begin{proposition}\label{prop:runtime_factor_2}
	Algorithm~\ref{alg:twoapproxAlgo} terminates after solving $\mathcal{O}(\frac M \varepsilon)$ many instances of $\dualrestrict^1_\delta$. 
\end{proposition}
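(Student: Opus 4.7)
The plan is to bound the number of $\dualrestrict^1_\delta$ calls made by Algorithm~\ref{alg:twoapproxAlgo} via a geometric argument on the parameter $S_2$. Every call to $\dualrestrict^1_\delta$ occurs at line~5, line~8, or line~13 using the current value of $S_2$, and between any two consecutive calls $S_2$ is updated exactly once --- in line~7, line~12, or line~16. By Lemma~\ref{lem:minstepsize}, each such update shrinks $S_2$ by at least a factor of $1+\delta$, so the sequence $S_2^{(1)}, S_2^{(2)}, \ldots, S_2^{(k)}$ of parameter values used in the successive calls is strictly geometrically decreasing, with $S_2^{(1)} = 2^M$.

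To bound $k$, I would next argue that $S_2$ cannot shrink indefinitely without forcing a $\textnormal{NO}$ answer and terminating the algorithm. Indeed, if $\dualrestrict^1_\delta(S_2)$ returns a feasible solution $x$, then $f_2(x) \leq (1+\delta)\cdot S_2$, and by Assumption~\ref{assum:M} every feasible point satisfies $f_2(x) \geq 2^{-M}$, which forces $S_2 \geq 2^{-M}/(1+\delta)$. Hence all but (possibly) the last value in the sequence satisfy this lower bound. Combined with the geometric decrease $S_2^{(i+1)} \leq S_2^{(i)}/(1+\delta)$, this yields $2^{-M}/(1+\delta) \leq S_2^{(k-1)} \leq 2^M (1+\delta)^{-(k-2)}$, and hence $k = \mathcal{O}(M/\log(1+\delta))$.

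Finally, since $\delta = \sqrt[4]{1+\varepsilon}-1$, we have $\log(1+\delta) = \tfrac{1}{4}\log(1+\varepsilon) = \Theta(\varepsilon)$ in the relevant regime, yielding $k = \mathcal{O}(M/\varepsilon)$ as claimed. The only real care required is in the bookkeeping: one must verify that every $S_2$-update in the nested repeat/while structure is indeed covered by Lemma~\ref{lem:minstepsize} and that no $\dualrestrict$ call is omitted from the count. Both follow directly from tracing through the pseudocode, so the argument reduces cleanly to the geometric-decrease estimate above.
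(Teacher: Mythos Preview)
Your argument is correct and follows essentially the same route as the paper's proof: both count one $\dualrestrict^1_\delta$ call per assignment to $S_2$, invoke Lemma~\ref{lem:minstepsize} to get a geometric decrease of $S_2$ by a factor of at least $1+\delta$ per step, and then use the range $[2^{-M},2^M]$ together with $\log(1+\delta)=\Theta(\varepsilon)$ to obtain the $\mathcal{O}(M/\varepsilon)$ bound. Your lower-bound justification via $f_2(x)\geq 2^{-M}$ is in fact slightly more careful than the paper's (which simply asserts that $S_2<2^{-M}$ forces a $\textnormal{NO}$), but the overall structure is the same.
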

\begin{proof}
	As soon as solving an instance of $\dualrestrict^1_\delta$ yields $\textnormal{NO}$, Algorithm~\ref{alg:twoapproxAlgo} terminates.
	Note that $\dualrestrict^1_\delta(S_2)$ is solved once whenever~$S_2$ is assigned a new value. We show that the number of times the value of~$S_2$ is changed is bounded by $\lfloor\frac{2M}{\log(1+\delta)}\rfloor + 1 = \mathcal{O}(\frac M \varepsilon)$. The initial value assigned to~$S_2$ is~$2^M$. $\dualrestrict^1_\delta(S_2)$ yields $\textnormal{NO}$ if $S_2 < 2^{-M}$, so if $\dualrestrict^1_\delta(S_2)$ is solved for some $S_2 < 2^{-M}$, the algorithm is guaranteed to terminate. But this is the case after at most $\lfloor\frac{2M}{\log(1+\delta)}\rfloor + 1$ many changes of the value assigned to~$S_2$ due to Lemma~\ref{lem:minstepsize}.
	\qed
\end{proof}

\noindent The correctness of the algorithm is established by the following proposition.

\begin{proposition}\label{prop:correctness_factor_2}
	Algorithm~\ref{alg:twoapproxAlgo} returns a one-exact $\varepsilon$-Pareto set.
\end{proposition}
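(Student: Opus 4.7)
The plan is to verify correctness by a case analysis on the three places where Algorithm~\ref{alg:twoapproxAlgo} can exit---the break statements at lines~6, 9, and~14---and to show that, in each case, the returned set $P$ $(1,1+\varepsilon)$-approximates every $x'\in X$. The key technical inputs are Lemma~\ref{lem:twoapproxlemma}(f) and the semantics of a NO answer from $\dualrestrict^1_\delta(S_2)$, which guarantees that no feasible $x'\in X$ satisfies $f_2(x')\le S_2$. A further numerical observation used repeatedly is that $(1+\delta)^2=\sqrt{1+\varepsilon}<1+\varepsilon$ whenever $\varepsilon>0$, so $\tfrac{1}{(1+\delta)^2}>\tfrac{1}{1+\varepsilon}$.

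First I would set up a simple invariant by induction on the repeat-iteration index: at line~5 of each iteration (except possibly the first, which is handled separately), the current $P$ $(1,1+\varepsilon)$-approximates every $x'\in X$ with $f_2(x')\ge S_2$. The inductive step combines Lemma~\ref{lem:twoapproxlemma}(f), which (as its proof records) holds at line~15 because (f) is preserved through each pass of the while loop, with the reset $S_2\leftarrow\tfrac{1}{1+\varepsilon}f_2(x)$ at line~16, where $x$ is the solution just added to $P$.

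With this invariant at hand, the three exit cases become short. \textbf{Line~6}: NO means every $x'\in X$ has $f_2(x')>S_2$, so the invariant finishes the argument; in the first iteration one instead has $S_2=2^M$ and Assumption~\ref{assum:M} forces $X=\emptyset$, so $P=\emptyset$ trivially suffices. \textbf{Line~9}: let $S_2^{\mathrm{start}}$ denote the value of $S_2$ at the entry of the current repeat iteration. Any $x'\in X$ that is not already covered by $P$ satisfies $f_2(x')<S_2^{\mathrm{start}}$, hence $f_1(x')\ge\opt_1(S_2^{\mathrm{start}})\ge f_1(x)$ by the definition of $x$ at line~5; the NO at line~8 yields $f_2(x')>\tfrac{1}{(1+\delta)^2}f_2(x)>\tfrac{1}{1+\varepsilon}f_2(x)$, so $x$ $(1,1+\varepsilon)$-approximates $x'$. \textbf{Line~14}: Lemma~\ref{lem:twoapproxlemma}(f) gives that $P\cup\{x\}$ already covers every $x'$ with $f_2(x')\ge\tfrac{1}{1+\varepsilon}f_2(x)$, while the NO at line~13 rules out feasible $x'$ with $f_2(x')\le\tfrac{1}{(1+\delta)^2}f_2(x)$, and these two conditions jointly cover all of $X$.

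The main obstacle I anticipate is the bookkeeping in the third case, because the $x$ inserted into $P$ at line~14 has just been overwritten at line~11; invariant~(f) must therefore be invoked for the \emph{current} $x$ rather than for the $x$ that was active when the enclosing while iteration began. This is precisely what the inner-loop portion of the proof of Lemma~\ref{lem:twoapproxlemma} delivers, showing that (f) persists after every execution of the while-loop body. Combining the three cases then shows that the returned $P$ is a one-exact $\varepsilon$-Pareto set.
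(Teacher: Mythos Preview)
Your proof is correct and follows essentially the same approach as the paper: a case analysis on the three exit points (lines~6, 9, 14), using Lemma~\ref{lem:twoapproxlemma}(f) together with the semantics of a NO answer from $\dualrestrict^1_\delta$. Your explicit invariant at line~5 and your direct argument for the line~9 case (rather than appealing to~(f)) are minor organizational differences, but the key ideas are identical.
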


\begin{proof}
	Algorithm~\ref{alg:twoapproxAlgo} terminates as soon as solving $\dualrestrict^1_\delta(S_2)$ yields $\textnormal{NO}$ for some~$S_2$. We do a case distinction on where this happens in Algorithm~\ref{alg:twoapproxAlgo} and show that the solutions in the set~$P$ returned by the algorithm $(1,1+\varepsilon)$-approximate all solutions~$x' \in X$ with $f_2(x') \geq S_2$ (i.e., all solutions~$x'\in X$). If $\dualrestrict^1_\delta(S_2)$ yields $\textnormal{NO}$ in line~5 of the first iteration of the repeat loop, this implies that there does not exist any feasible solution at all. If $\dualrestrict^1_\delta(S_2)$ yields $\textnormal{NO}$ in line~5 of some other iteration of the repeat loop, Lemma~\ref{lem:twoapproxlemma}~(f) holds in line~10 of the previous iteration, in which, in particular, lines~15 and~16 are executed. Therefore, the solutions in the set~$P$ returned by Algorithm~\ref{alg:twoapproxAlgo} approximate all solutions~$x' \in X$ with $f_2(x') \geq \frac 1 {1+\varepsilon} \cdot f_2(x) = S_2$. If $\dualrestrict^1_\delta(S_2)$ yields $\textnormal{NO}$ in line~8 or line~13 of some iteration (of the repeat loop or while loop, respectively), we can show that Lemma~\ref{lem:twoapproxlemma}~(f) holds at this moment by using the same argumentation as in the proof of Lemma~\ref{lem:twoapproxlemma}~(f). Therefore and since $x$ is added to $P$ in these two cases, the solutions in the returned set~$P$ approximate all solutions~$x' \in X$ with $f_2(x') \geq \frac 1 {1+\varepsilon} \cdot f_2(x) > \frac 1 {(1+\delta)^2} \cdot f_2(x) = S_2$.
	\qed 
\end{proof}

It remains to bound the cardinality of the returned one-exact $\varepsilon$-Pareto set~$P$. To this end, we first show in Lemma~\ref{lem:mindistance} that any two solutions in~$P$ always differ by at least a factor of $(1+\delta)^3$ with respect to their $f_2$-values.

\begin{lemma}\label{lem:mindistance}
	Let~$P$ be the set returned by Algorithm~\ref{alg:twoapproxAlgo}. For all $x_1,x_2 \in P$ with $x_1 \neq x_2$ and $f_2(x_1) \geq f_2(x_2)$, we have
	$$f_2(x_1) \geq (1+\delta)^3 \cdot f_2(x_2),$$
	where $(1+\delta)^4 = 1+\varepsilon$.
\end{lemma}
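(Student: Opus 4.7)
The plan is to prove the bound first for pairs of solutions that are consecutive in the order of insertion into $P$, and then to lift it to arbitrary pairs by transitivity.

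First I would identify where elements are added to $P$. Insertions happen only in lines~9, 14, and~15, and lines~9 and~14 both terminate the outer repeat loop (line~14 additionally breaks out of the inner while loop). Hence at most one inserted solution can come from lines~9 or~14, and it is necessarily the last one. In particular, if a solution $y$ enters $P$ and some further solution $y'$ is inserted afterwards, then $y$ must have been inserted via line~15.

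For such a consecutive pair $(y,y')$, I would trace the algorithm from the moment of $y$'s insertion. Line~16 sets $S_2\leftarrow \frac{1}{1+\varepsilon}\cdot f_2(y)$, and the next iteration of the repeat loop starts with the call $x\leftarrow \dualrestrict^1_\delta(S_2)$ in line~5. If this returns NO, the algorithm terminates with no further insertion, contradicting the existence of $y'$. Otherwise, the returned $\tilde x\in X$ satisfies
\[
f_2(\tilde x)\;\leq\;(1+\delta)\cdot S_2\;=\;\frac{1+\delta}{1+\varepsilon}\cdot f_2(y)\;=\;\frac{1}{(1+\delta)^3}\cdot f_2(y),
\]
using $(1+\delta)^4=1+\varepsilon$. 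From this point on until the next insertion, the variable $x$ can only be overwritten by $\xnext$ in line~11 of the inner while loop, and Lemma~\ref{lem:twoapproxlemma}(d) ensures that every such overwrite strictly decreases $f_2(x)$. Consequently, whichever of lines~9, 14, or~15 eventually performs the next insertion, the inserted $y'$ satisfies $f_2(y')\leq f_2(\tilde x)\leq \frac{1}{(1+\delta)^3}\cdot f_2(y)$, which is the desired bound for consecutive insertions.

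Transitivity then finishes the argument. Iterating the consecutive bound along the insertion order gives a strict factor-$(1+\delta)^3$ gap between the $f_2$-values of any two insertions, so the $f_2$-values of elements in $P$ are pairwise distinct and strictly ordered by insertion time. The hypothesis $f_2(x_1)\geq f_2(x_2)$ together with $x_1\neq x_2$ therefore forces $x_1$ to be inserted strictly before $x_2$, and the chained bound yields $f_2(x_1)\geq (1+\delta)^3\cdot f_2(x_2)$. I expect the only delicate point to be the traversal of the inner loop in the second paragraph: one must verify that no code path between the line-5 call immediately after $y$'s insertion and the next insertion into $P$ can raise $f_2(x)$ above $f_2(\tilde x)$, which is exactly the monotonicity content of Lemma~\ref{lem:twoapproxlemma}(d) and hence requires no separate work.
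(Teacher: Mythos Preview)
Your argument is correct. The underlying computation---the bound $f_2(\tilde x)\leq(1+\delta)\cdot\frac{1}{1+\varepsilon}\cdot f_2(y)=\frac{1}{(1+\delta)^3}f_2(y)$ obtained from line~16 followed by line~5, together with the monotone decrease of $f_2(x)$ through the while loop via Lemma~\ref{lem:twoapproxlemma}(d)---is exactly what the paper relies on. The paper, however, has already packaged this computation as invariant~(g) of Lemma~\ref{lem:twoapproxlemma}, which asserts $f_2(x)\leq\frac{1}{(1+\delta)^3}\min\{f_2(\bar x):\bar x\in P\}$ at line~10; the proof here simply invokes~(g) and argues that the same reasoning extends it to the insertion points in lines~9 and~14. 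Because~(g) already compares the current~$x$ against \emph{all} of~$P$, the paper avoids your consecutive-pair plus transitivity step entirely. Your route is slightly more self-contained (it only needs part~(d)), at the cost of re-deriving inline what~(g) already states.
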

\begin{proof}
	Note that Lemma~\ref{lem:twoapproxlemma}~(g) holds each time a solution is added to~$P$ in line~15 of Algorithm~\ref{alg:twoapproxAlgo}. If some solution~$x \in X$ is added to~$P$ in line~9 or line~14, $\dualrestrict^1_\delta(S_2)$ must have yielded $\textnormal{NO}$ in line~8 or line~13, respectively. We can show that Lemma~\ref{lem:twoapproxlemma}~(g) holds at this moment by using the same argumentation as in the proof of Lemma~\ref{lem:twoapproxlemma}~(g).
	\qed
\end{proof}

Lemma~\ref{lem:solutionquality} establishes that solutions~$x \in P$ are almost efficient in the sense that no other solution with the same or a better $f_1$-value is better by a factor of $(1+\delta)^2$ or more in $f_2$. 

\begin{lemma} \label{lem:solutionquality}
	Let~$P$ be the set returned by Algorithm~\ref{alg:twoapproxAlgo}. For any $x \in P$, there does not exist any feasible solution~$x' \in X$ with $f_1(x') \leq f_1(x)$ and $f_2(x') \leq \frac 1 {(1+\delta)^2} \cdot f_2(x)$.
\end{lemma}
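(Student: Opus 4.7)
My plan is to exploit the fact that every $x$ added to the set $P$ arrives there immediately after a call to $\dualrestrict^1_\delta(S_2)$ in which the threshold is exactly $S_2 = \frac{1}{(1+\delta)^2}\cdot f_2(x)$. A case distinction on the three lines where $P$ is updated makes this precise: in line~9 the relevant query is the one made in line~8 (with $S_2$ set in line~7); in line~14 the relevant query is the one made in line~13 (with $S_2$ set in line~12); and in line~15 the relevant query is either the line~8 call (if the while loop was never entered) or the last line~13 call (if it was), in both sub-cases with the threshold equal to $\frac{1}{(1+\delta)^2}\cdot f_2(x)$ for the current value of~$x$.

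Once this invariant is noted, I would split on the outcome of that $\dualrestrict^1_\delta(S_2)$ query. If the query returned NO, then by the specification of $\dualrestrict$ there is no feasible solution $x'$ with $f_2(x')\leq S_2 = \frac{1}{(1+\delta)^2}\cdot f_2(x)$ at all, which is strictly stronger than the claim of the lemma; this disposes of the line~9 and line~14 cases. If instead the query returned a solution $\xnext$ (the situation in line~15), then $x$ was added only because the while loop condition failed, i.e.\ $f_1(\xnext) \neq f_1(x)$; combined with Lemma~\ref{lem:twoapproxlemma}(e), which applies at the line~10 check, this yields $f_1(\xnext) > f_1(x)$. Assume for contradiction that there exists $x' \in X$ with $f_1(x') \leq f_1(x)$ and $f_2(x') \leq \frac{1}{(1+\delta)^2}\cdot f_2(x) = S_2$. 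Then $x'$ satisfies the hard constraint in the $\dualrestrict^1_\delta(S_2)$ instance, so $\opt_1(S_2) \leq f_1(x') \leq f_1(x)$. But by the definition of $\dualrestrict$, the returned solution satisfies $f_1(\xnext) \leq \opt_1(S_2) \leq f_1(x)$, contradicting $f_1(\xnext) > f_1(x)$.

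There is no genuine obstacle here: once the ``most recent'' $\dualrestrict^1_\delta$ query associated with a given $x \in P$ is identified and the threshold is seen to be $\frac{1}{(1+\delta)^2}\cdot f_2(x)$, the lemma is essentially a direct reading of the $\dualrestrict$ specification together with part~(e) of Lemma~\ref{lem:twoapproxlemma}. The only care required lies in the line~15 case, where one must handle both the ``while loop never entered'' and ``while loop entered at least once'' sub-cases when identifying the relevant query, and be explicit that Lemma~\ref{lem:twoapproxlemma}(e) is invoked at the while-loop condition check which upgrades the inequality $f_1(\xnext)\neq f_1(x)$ to $f_1(\xnext)>f_1(x)$.
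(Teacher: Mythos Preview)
Your proof is correct and follows essentially the same route as the paper's: a case split on where $x$ was added to~$P$, with the line~9/14 cases handled by the $\textnormal{NO}$ answer of $\dualrestrict^1_\delta\bigl(\tfrac{1}{(1+\delta)^2}f_2(x)\bigr)$, and the line~15 case handled via $f_1(\xnext)>f_1(x)$ together with $f_1(\xnext)\leq\opt_1(S_2)$. The paper packages the line~15 argument by invoking Lemma~\ref{lem:twoapproxlemma}(b),(c),(e) at the line~10 check rather than re-deriving the threshold and the $\opt_1$ bound directly, but the content is identical.
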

\begin{proof}
	Given $x \in P$, consider the case that~$x$ is added to~$P$ in line~15 of Algorithm~\ref{alg:twoapproxAlgo}. At this point in the algorithm, we know that $f_1(\xnext) \neq f_1(x)$. Thus, Lemma~\ref{lem:twoapproxlemma}~(b), (c), and~(e) imply that $f_1(x) < f_1(\xnext) \leq \opt_1(S_2) = \opt_1(\frac 1 {(1+\delta^2)} \cdot f_2(x))$, so any solution~$x' \in X$ with $f_2(x') \leq \frac 1 {(1+\delta^2)} \cdot f_2(x)$ must have $f_1(x') > f_1(x)$.
	
	Now consider the case that~$x$ is added to~$P$ in line~9 or line~14 of Algorithm~\ref{alg:twoapproxAlgo}. In this case, we have $\dualrestrict_\delta^1(S_2) = \textnormal{NO}$ for $S_2 = \frac 1 {(1+\delta^2)} \cdot f_2(x)$, so there does not exist any solution~$x' \in X$ with $f_2(x') \leq \frac 1 {(1+\delta^2)} \cdot f_2(x)$.
	\qed
\end{proof}

\begin{proposition}\label{prop:size_factor_2}
	Let~$P$ be the set returned by Algorithm~\ref{alg:twoapproxAlgo} and let~$P^*$ be a smallest one-exact $\varepsilon$-Pareto set. Then $$|P| \leq 2 \cdot |P^*|.$$	
\end{proposition}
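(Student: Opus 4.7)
The plan is to show that each solution in $P^*$ can be ``charged'' with at most two solutions of $P$, so that $|P^*|\ge \lceil|P|/2\rceil$. Writing $P=\{x_1,\ldots,x_k\}$ with $f_2(x_1)>f_2(x_2)>\cdots>f_2(x_k)$, Lemma~\ref{lem:mindistance} gives the separation
\[
f_2(x_i)\ \ge\ (1+\delta)^3\cdot f_2(x_{i+1})\qquad\text{for all }i,
\]
so in particular $f_2(x_i)\ge (1+\delta)^6\cdot f_2(x_{i+2})$. Since $P^*$ is a one-exact $\varepsilon$-Pareto set, every $x_i\in P$ is $(1,1+\varepsilon)$-approximated by some $y_i\in P^*$, i.e., $f_1(y_i)\le f_1(x_i)$ and $f_2(y_i)\le(1+\varepsilon)\cdot f_2(x_i)=(1+\delta)^4\cdot f_2(x_i)$.

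The main step will be to apply Lemma~\ref{lem:solutionquality} to obtain a matching lower bound on $f_2(y_i)$. Since $f_1(y_i)\le f_1(x_i)$, that lemma forbids $f_2(y_i)\le \tfrac{1}{(1+\delta)^2}\cdot f_2(x_i)$, and therefore
\[
\frac{f_2(x_i)}{(1+\delta)^2}\ <\ f_2(y_i)\ \le\ (1+\delta)^4\cdot f_2(x_i).
\]
Call this interval $W_i$. Using the separation, for indices $i$ and $i+2$ one finds
\[
(1+\delta)^4\cdot f_2(x_{i+2})\ \le\ \frac{(1+\delta)^4}{(1+\delta)^6}\cdot f_2(x_i)\ =\ \frac{f_2(x_i)}{(1+\delta)^2},
\]
so the upper endpoint of $W_{i+2}$ does not exceed the lower endpoint of $W_i$, i.e., $W_i\cap W_{i+2}=\emptyset$. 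Consequently $y_i\ne y_{i+2}$ for every admissible $i$.

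Restricting to the odd indices $1,3,5,\ldots,2\lceil k/2\rceil-1$, the solutions $y_1,y_3,y_5,\ldots$ are therefore pairwise distinct elements of $P^*$, which yields $|P^*|\ge \lceil |P|/2\rceil\ge |P|/2$ and hence $|P|\le 2\cdot|P^*|$, as claimed. The only non-routine part is the window argument combining the $f_2$-separation of Lemma~\ref{lem:mindistance} with the near-efficiency statement of Lemma~\ref{lem:solutionquality}; once the windows are set up, disjointness is a direct calculation using $(1+\delta)^4=1+\varepsilon$.
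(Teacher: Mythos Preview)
Your proof is correct and follows essentially the same route as the paper's: both combine the $(1+\delta)^3$ separation from Lemma~\ref{lem:mindistance} with the near-efficiency bound from Lemma~\ref{lem:solutionquality} to show that no single element of $P^*$ can $(1,1+\varepsilon)$-approximate three elements of $P$, yielding $|P^*|\ge\lceil|P|/2\rceil$. One small remark: your inference from ``$y_i\neq y_{i+2}$ for all $i$'' to ``$y_1,y_3,y_5,\ldots$ pairwise distinct'' is not valid as a purely logical step, but your window computation actually shows the stronger fact that $W_j$ lies entirely below $W_i$ whenever $j\ge i+2$ (so $f_2(y_1)>f_2(y_3)>f_2(y_5)>\cdots$), which does give pairwise distinctness; it would be worth saying this explicitly.
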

\begin{proof}
	First, we show that no solution~$x'\in X$ can $(1,1+\varepsilon)$-approximate more than two solutions in the returned set~$P$. 
	Let $x_1,x_2,x_3 \in P$ be three pairwise different solutions in~$P$ such that $f_2(x_1) \geq f_2(x_2) \geq f_2(x_3)$. Let $x' \in X$ be an arbitrary feasible solution that $(1,1+\varepsilon)$-approximates~$x_1$ and~$x_2$. We show that~$x'$ does not $(1,1+\varepsilon)$-approximate~$x_3$. Note that Lemma~\ref{lem:mindistance} implies that
	$$f_2(x_3) \leq \frac 1 {(1+\delta)^3} \cdot f_2(x_2) \leq \frac 1 {(1+\delta)^6} \cdot f_2(x_1).$$
	Since~$x_1$ is $(1,1+\varepsilon)$-approximated by~$x'$, we have $f_1(x') \leq f_1(x_1)$, so Lemma~\ref{lem:solutionquality} implies that
	$$f_2(x') > \frac 1 {(1+\delta)^2} \cdot f_2(x_1).$$
	Combining the two inequalities above yields
	$$f_2(x') > (1+\delta)^4 \cdot f_2(x_3) = (1+\varepsilon)\cdot f_2(x_3),$$
	i.e., $x_3$ is not $(1,1+\varepsilon)$-approximated by~$x'$. An illustration of this is given in Figure~\ref{fig:twoapprox}.
	Note that the above arguments also imply that no solution~$x_0\in P$ with $x_0 \neq x_1$ and $f_2(x_0) \geq f_2(x_1)$ can be approximated by~$x'$. If this was the case, then~$x'$ would not approximate~$x_2$.
	
	Now let~$P^*$ be an arbitrary minimum-cardinality one-exact $\varepsilon$-Pareto set. Then, for any $x \in P$, there exists some $x' \in P^*$ that $(1,1+\varepsilon)$-approximates~$x$. Thus, since any~$x'\in P^*$ can approximate at most two solutions in~$P$, there have to be at least $\left\lceil \frac {|P|} 2 \right\rceil$ many elements in~$P^*$, so $|P| \leq 2 \cdot \left\lceil \frac {|P|} 2 \right\rceil \leq 2 \cdot |P^*|$.\qed
\end{proof}

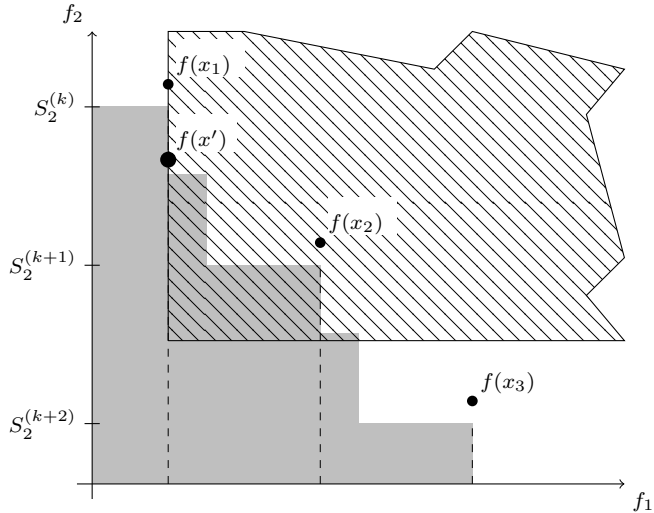
\begin{figure}[ht!]
	\begin{center}
		\begin{tikzpicture}[scale=1.0]
		
		\fill[gray!50] (0,0) rectangle (1,5);
		\fill[gray!50] (1,0) rectangle (1.5,4.1);
		\fill[gray!50] (1.5,0) rectangle (3,2.9);
		\fill[gray!50] (3,0) rectangle (3.5,2.0);
		\fill[gray!50] (3.5,0) rectangle (5,0.8);
		
		\draw[hatchspread = 6 pt, pattern=my north west lines] (1,1.9) -- (1,6) -- (2,6) -- (4.5,5.5) -- (5,6) -- (7,5.5) -- (6.5,4.9) -- (7,3) -- (6.5,2.5) -- (7,1.9) -- (1,1.9);
		
		\fill[white] (1.1,5.4) rectangle (2.0,5.9);
		\fill[white] (3.1,3.3) rectangle (4.0,3.8);
		
		\fill[white] (1.1,4.4) rectangle (1.9,4.9);
		
		\fill (1,5.3) circle (2pt) node[above right] {$f(x_1)$};
		\fill (3,3.2) circle (2pt) node[above right] {$f(x_2)$};
		\fill (5,1.1) circle (2pt) node[above right] {$f(x_3)$};
		
		\fill (1,4.3) circle (3pt) node[above right] {$f(x')$};


		\draw[dashed, line width=0.3pt] (1,0) -- (1,5);
		\draw[dashed, line width=0.3pt] (3,0) -- (3,2.9);
		\draw[dashed, line width=0.3pt] (5,0) -- (5,0.8);	
		
		\draw[->] (-0.2,0) -- (7,0) node[below right] {$f_1$};
		\draw[->] (0,-0.2) -- (0,6) node[above left] {$f_2$}; 
		

		\draw[-] (0.1,5) -- (-0.1,5) node[left] {$S_2^{(k)}$};
		\draw[-] (0.1,2.9) -- (-0.1,2.9) node[left] {$S_2^{(k+1)}$};
		\draw[-] (0.1,0.8) -- (-0.1,0.8) node[left] {$S_2^{(k+2)}$};
		
		\end{tikzpicture}
		\caption{Any feasible solution $x' \in X$ can $(1,1+\varepsilon)$-approximate at most two solutions returned by Algorithm~\ref{alg:twoapproxAlgo}. The solution~$x_i$ is a solution to $\dualrestrict^1_\delta(S_2^{(i)})$ for $i = 1,2,3$, where $(1+\delta)^4 = (1+ \varepsilon)$. The gray area does not contain any feasible point due to Lemma~\ref{lem:solutionquality} and the definition of $\dualrestrict$, so $f(x')$ has to lie outside of this region. The hatched area is the area that is $(1,1+\varepsilon)$-dominated by $f(x')$. \label{fig:twoapprox}}
	\end{center}
\end{figure}

\noindent
Propositions~\ref{prop:runtime_factor_2},~\ref{prop:correctness_factor_2}, and~\ref{prop:size_factor_2} directly yield the following theorem:

\begin{theorem}
	Algorithm~\ref{alg:twoapproxAlgo} computes a one-exact $\varepsilon$-Pareto set~$P$ of cardinality $|P| \leq 2 \cdot |P^*|$ solving $\mathcal{O}(\frac M \varepsilon)$ many instances of $\dualrestrict_\delta^1$, where $(1+\delta)^4 = 1+ \varepsilon$.\qed
\end{theorem}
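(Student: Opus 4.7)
The plan is straightforward because the theorem is simply the conjunction of the three propositions that immediately precede it, so I would assemble them. Proposition~\ref{prop:correctness_factor_2} supplies that the returned set $P$ is a one-exact $\varepsilon$-Pareto set, Proposition~\ref{prop:size_factor_2} supplies the cardinality bound $|P|\leq 2\cdot|P^*|$, and Proposition~\ref{prop:runtime_factor_2} supplies the bound $\mathcal{O}(M/\varepsilon)$ on the number of $\dualrestrict^1_\delta$ calls. The relation $(1+\delta)^4=1+\varepsilon$ is built into line~2 of Algorithm~\ref{alg:twoapproxAlgo} via $\delta=\sqrt[4]{1+\varepsilon}-1$, so no additional argument is needed beyond pointing at that line.

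If I were verifying the chain of dependencies underneath, I would walk through it in the same order used in the paper. The foundation is Lemma~\ref{lem:twoapproxlemma}, from which part~(f) drives the correctness argument and part~(g) drives the spacing that underlies the size bound. Correctness then follows by a case distinction on where $\dualrestrict^1_\delta$ first answers $\textnormal{NO}$, using~(f) to certify that every feasible solution with $f_2$-value at least the current threshold has already been $(1,1+\varepsilon)$-approximated by some element of $P$. The running-time bound comes from Lemma~\ref{lem:minstepsize}, which guarantees that $S_2$ strictly shrinks by a factor of at least $1+\delta$ per update, combined with the observation that once $S_2<2^{-M}$ the routine must return $\textnormal{NO}$; this caps the number of updates and hence of $\dualrestrict^1_\delta$ calls at $\mathcal{O}(M/\log(1+\delta))=\mathcal{O}(M/\varepsilon)$.

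For the size guarantee, I would combine Lemmas~\ref{lem:mindistance} and~\ref{lem:solutionquality} exactly as in the proof of Proposition~\ref{prop:size_factor_2}: any two retained solutions are separated by a factor of at least $(1+\delta)^3$ in $f_2$, and no feasible solution can both match a retained $x$ in $f_1$ and beat it by a factor $(1+\delta)^2$ in $f_2$. Applying these to three retained solutions $x_1,x_2,x_3$ with $f_2(x_1)\geq f_2(x_2)\geq f_2(x_3)$ shows that a common $(1,1+\varepsilon)$-approximator of $x_1$ and $x_2$ cannot simultaneously approximate $x_3$, since the required inequalities $f_2(x')>(1+\delta)^{-2}f_2(x_1)$ and $f_2(x')\leq(1+\delta)^4 f_2(x_3)\leq(1+\delta)^{-2}f_2(x_1)$ are incompatible. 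Hence every element of $P^*$ covers at most two elements of $P$, yielding $|P|\leq 2|P^*|$.

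The main (and essentially only) obstacle in the assembly is the exponent bookkeeping: one must check that the choice $(1+\delta)^4=1+\varepsilon$ is strong enough so that the composition of the spacing factor $(1+\delta)^3$ from Lemma~\ref{lem:mindistance} with the $(1+\delta)^2$ slack from Lemma~\ref{lem:solutionquality} still leaves a strict gap against the $(1+\varepsilon)=(1+\delta)^4$ approximation tolerance. Since this is exactly the balance that motivated the choice of $\delta$ in line~2, the conjunction of the three propositions produces the theorem with no further work, so the proof is completed by a single sentence invoking Propositions~\ref{prop:runtime_factor_2},~\ref{prop:correctness_factor_2}, and~\ref{prop:size_factor_2}.
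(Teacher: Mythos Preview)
Your proposal is correct and matches the paper exactly: the theorem is stated with a \qed{} and is prefaced by the sentence ``Propositions~\ref{prop:runtime_factor_2},~\ref{prop:correctness_factor_2}, and~\ref{prop:size_factor_2} directly yield the following theorem,'' so the one-sentence assembly you describe is precisely the paper's proof. Your optional walk-through of the underlying dependencies (Lemma~\ref{lem:twoapproxlemma}~(f),(g), Lemma~\ref{lem:minstepsize}, and the combination of Lemmas~\ref{lem:mindistance} and~\ref{lem:solutionquality}) also accurately mirrors how the three propositions are established in the paper.
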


\subsection{Available Efficient Routine for \restrict}
Diakonikolas and Yannakakis~\cite{Diakonikolas+Yannakakis:approx-pareto-sets} show that, for biobjective optimization problems, the subproblems $\dualrestrict^1$ 
and $\restrict^2$ 
are polynomially equivalent: An answer for an instance of $\dualrestrict^1_\delta$ can be found by solving $\mathcal{O}(M)$~instances of $\restrict^2_\delta$ in a binary search and vice versa. 

They also give an algorithm that computes an $\varepsilon$-Pareto set~$P_\varepsilon$ whose cardinality is not larger than twice the cardinality of a smallest $\varepsilon$-Pareto set~$P_\varepsilon^*$. This algorithm is based on routines for both of these subproblems. In order to compute $P_\varepsilon$, the algorithm solves $\mathcal{O}(\frac M \varepsilon)$ instances of $\dualrestrict^1_\delta$ as well as $\mathcal{O}(\frac M \varepsilon)$ instances of $\restrict^2_\delta$, where $(1+\delta)^3 = 1+\varepsilon$. This a-priori bound can be refined a posteriori to the output-sensitive bound of $\mathcal{O}(|P_\varepsilon|) = \mathcal{O}(|P^*_\varepsilon|)$ many instances of each subproblem that are solved. If only one of the two routines is directly available (such as, e.g., for the min-cost makespan scheduling problem), the polynomial reduction between the two subproblems can be used to solve the other problem in the algorithm by Diakonikolas and Yannakakis, resulting in $\mathcal{O}(\frac {M^2} \varepsilon)$ (or a posteriori $\mathcal{O}(M \cdot |P^*_\varepsilon|)$) solved instances of the subproblem for which a routine is available.

\smallskip

We now show how this algorithm can be slightly modified such that it even computes a one-exact $\varepsilon$-Pareto set in the same a-priori asymptotic running time. The cardinality of the one-exact $\varepsilon$-Pareto set~$Q$ computed by this modified algorithm satisfies $|Q| \leq 2\cdot |P^*|$, where~$P^*$ is a smallest one-exact $\varepsilon$-Pareto set, i.e., the modified algorithm yields the same size guarantee as Algorithm~\ref{alg:twoapproxAlgo}. Since it requires a routine for $\restrict^2$ in addition to a routine for $\dualrestrict^1$, the number of solved subproblems might be in the order of $\frac {M^2} \varepsilon$ (if this routine for $\restrict^2$ consists of simply applying the reduction to $\dualrestrict^1$), which is by a factor of $M$ larger than the number of solved subproblems in Algorithm~\ref{alg:twoapproxAlgo}.

For some common problems such as biobjective $\shortestpath$, however, the best known  algorithms for $\restrict$ and for $\dualrestrict$ have similar asymptotic running times \cite{Horvath+Kis:Dual-RSPP,Ergun+etal:RSPP,Lorenz+Raz:RSPP}. For other problems such as minimum-cost maximum matching, a routine for $\restrict$ is available (see, e.g.,~\cite{Grandoni+etal:new-approaches}), but no specific routine for $\dualrestrict$ is known, i.e., we can only solve $\dualrestrict$ via the reduction to $\restrict$. In these cases, our modification of the algorithm by Diakonikolas and Yannakakis offers an output-sensitive bound on the running time. It solves $\mathcal{O}(|P^*|)$ many subproblems if both routines are available and $\mathcal{O}(M \cdot |P^*|)$ many subproblems if only a routine for $\restrict$ is available. Algorithm~\ref{alg:twoapproxAlgo} solves $\mathcal{O}(\frac M \varepsilon)$ and $\mathcal{O}(\frac {M^2} \varepsilon)$ many subproblems, respectively, in these cases, which is equal to the a-priori bound for the modified algorithm, but might be much larger than the a-posteriori bound for some problems.

The modified algorithm is stated in Algorithm~\ref{alg:modifiedDiakonikolas}. Its proof of correctness is similar to the proof for the original algorithm by Diakonikolas and Yannakakis~\cite{Diakonikolas+Yannakakis:approx-pareto-sets} and is given in the appendix.
\begin{algorithm}[!ht]
	\SetKw{Compute}{compute}
	\SetKw{Break}{break}
	\SetKwInOut{Input}{input}\SetKwInOut{Output}{output}
	\SetKwComment{command}{right mark}{left mark}
	
	\Input{an instance $(X,f)$ of a biobjective minimization problem, $\varepsilon>0$, an algorithm for $\dualrestrict^1$, an algorithm for $\restrict^2$}
	
	\Output{a one-exact $\varepsilon$-Pareto set for $(X,f)$}
	
	\BlankLine
	
	\lIf{$\restrict^2_1(2^M) = \textnormal{NO}$}{halt}
	
	$\delta \leftarrow \sqrt[3]{1+\varepsilon} - 1$
	
	$\xleft \leftarrow \dualrestrict^1_1(2^M)$
	
	$\tilde{x}^{(1)} \leftarrow \restrict^2_\delta(2^M)$
	
	$S_2^{(1)} \leftarrow (1+\delta) \cdot f_2(\tilde{x}^{(1)})$
	
	$x^{(1)} \leftarrow \dualrestrict^1_\delta(S_2^{(1)})$
	
	$B_1^{(1)} \leftarrow f_1(x^{(1)}) - 2^{-2M}$
	
	$Q \leftarrow \{x^{(1)}\}$
	
	$i \leftarrow 1$
	
	\While{$B_1^{(i)} \geq f_1(\xleft)$}{
		
		$\tilde{x}^{(i+1)} \leftarrow \restrict^2_\delta(B_1^{(i)})$
		
		$S_2^{(i+1)} \leftarrow \frac{1+\varepsilon}{1+\delta} \cdot \max\{S_2^{(i)}, \frac 1 {1+\delta} \cdot f_2(\tilde{x}^{(i+1)})\}$
		
		$x^{(i+1)} \leftarrow \dualrestrict^1_\delta(S_2^{(i+1)})$
		
		$B_1^{(i+1)} \leftarrow f_1(x^{(i+1)}) - 2^{-2M}$
		
		$Q \leftarrow Q \cup \{x^{(i+1)}\}$
		
		$i \leftarrow i+1$}
	\Return $Q$
	
	\caption{An alternative $(1,1+\varepsilon)$-approximation for biobjective optimization problems.} \label{alg:modifiedDiakonikolas}
\end{algorithm}


\begin{theorem}\label{thm:modalgo}
	Algorithm~\ref{alg:modifiedDiakonikolas} computes a one-exact $\varepsilon$-Pareto set~$Q$ of cardinality~$|Q| \leq 2 \cdot |P^*|$ solving $\mathcal{O}(|P^*|)$ many instances of $\dualrestrict_\delta^1$ and  $\restrict^2_\delta$ , where $(1+\delta)^3 = 1+ \varepsilon$.
\end{theorem}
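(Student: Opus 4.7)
The overall strategy is to verify the three claims—(i) $Q$ is a one-exact $\varepsilon$-Pareto set, (ii) $|Q|\le 2\cdot|P^*|$, and (iii) the total number of subproblem calls is $\mathcal{O}(|P^*|)$—in sequence, following the template of the original proof of Diakonikolas and Yannakakis~\cite{Diakonikolas+Yannakakis:approx-pareto-sets}. Since each pass through the while loop solves exactly one instance each of $\restrict^2_\delta$ and $\dualrestrict^1_\delta$ and adds exactly one element to $Q$, claim~(iii) becomes an immediate corollary of~(ii) once~(i) establishes that the loop terminates after $|Q|$ iterations.

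For correctness, I would first show by induction that the sequence $x^{(1)},x^{(2)},\ldots$ satisfies $f_1(x^{(1)})>f_1(x^{(2)})>\cdots\ge f_1(\xleft)$, using the definition $B_1^{(i)}=f_1(x^{(i)})-2^{-2M}$ and the $f_1$-separation property from Assumption~\ref{assum:M}. The central invariant would then be: after iteration~$i$, the current $Q$ $(1,1+\varepsilon)$-approximates every feasible $x'\in X$ with $f_1(x')>B_1^{(i)}$. For the base case I would combine $f_2(\tilde x^{(1)})\le(1+\delta)\opt_2(2^M)$ (guaranteed by $\restrict^2_\delta$) with the identity $(1+\delta)^3=1+\varepsilon$ and with the definition $S_2^{(1)}=(1+\delta)f_2(\tilde x^{(1)})$ to show that $x^{(1)}$ covers every feasible solution that maximizes $f_2$ on the Pareto frontier up to the $(1+\varepsilon)$ factor in the second objective. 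The inductive step exploits the $\max$ in the update rule for $S_2^{(i+1)}$: the first argument of the $\max$ propagates coverage from iteration~$i$, while the second, together with the $f_2$-bound from $\restrict^2_\delta(B_1^{(i)})$, ensures that the slice of solutions with $B_1^{(i+1)}<f_1(x')\le B_1^{(i)}$ is also approximated. Termination at $B_1^{(i)}<f_1(\xleft)$ combined with the fact that $\xleft$ attains the minimum $f_1$-value on $X$ then implies that $Q$ covers all feasible solutions.

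For the cardinality bound I would adapt the packing argument from Proposition~\ref{prop:size_factor_2}: it suffices to show that no single $x^*\in X$ can $(1,1+\varepsilon)$-approximate three distinct elements of $Q$. Given three consecutive members $x^{(i)},x^{(i+1)},x^{(i+2)}$ of $Q$, the key quantitative fact is that consecutive $f_2$-values grow by a factor of at least $(1+\delta)^2=1+\varepsilon/(1+\delta)$, which I would extract by chasing the coefficient $(1+\varepsilon)/(1+\delta)$ in the update rule for $S_2^{(i+1)}$ and using the $\dualrestrict^1_\delta$ guarantee $f_2(x^{(i+1)})\le(1+\delta)S_2^{(i+1)}$ together with a lower bound on $f_2(x^{(i+1)})$ derived from the fact that $\dualrestrict^1_\delta(S_2^{(i)})$ already returned $x^{(i)}$ and $f_1(x^{(i+1)})<f_1(x^{(i)})$. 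From this spread, any $x^*$ with $f_1(x^*)\le f_1(x^{(i+2)})$ and $(1+\varepsilon)f_2(x^*)\ge f_2(x^{(i)})$ would violate the defining inequality $f_1(x^{(i+1)})\le\opt_1(S_2^{(i+1)})$ at step $i+1$, a contradiction. The main obstacle is precisely this spread estimate: the choice $(1+\delta)^3=1+\varepsilon$ in the algorithm is tuned so that the bound is tight, and extracting the factor $2$ rather than $3$ on $|P^*|$ requires carefully tracking which of the two arguments in the $\max$ determines $S_2^{(i+1)}$ in each iteration, which is where the bulk of the technical work will lie.
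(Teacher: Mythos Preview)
Your correctness argument for part~(i) matches the paper's: the paper explicitly says that showing $Q$ is a one-exact $\varepsilon$-Pareto set ``follows exactly the same steps as the proof of Theorem~\ref{thm:smallestoneexact}'', i.e., the inductive coverage invariant you sketch. Likewise, deriving~(iii) from~(ii) is immediate.

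The gap is in your size-bound argument. The packing claim---that no single $x^*$ can $(1,1+\varepsilon)$-approximate three elements of $Q$---does not go through with the parameters of Algorithm~\ref{alg:modifiedDiakonikolas}. Your asserted spread ``consecutive $f_2$-values grow by at least $(1+\delta)^2$'' is not what the algorithm guarantees: from $f_1(x^{(i+1)})<f_1(x^{(i)})\le\opt_1(S_2^{(i)})$ you only get $f_2(x^{(i+1)})>S_2^{(i)}\ge f_2(x^{(i)})/(1+\delta)$, which is not even monotone growth. What \emph{does} grow by $(1+\delta)^2$ is the sequence $S_2^{(i)}$. Chasing that through the packing argument for three consecutive indices $i,i+1,i+2$ gives only
\[
f_2(x^*)>S_2^{(i+1)}\ge(1+\delta)^2 S_2^{(i)}\ge(1+\delta)\,f_2(x^{(i)}),
\]
which falls short of the needed $(1+\varepsilon)\,f_2(x^{(i)})=(1+\delta)^3 f_2(x^{(i)})$ by a factor $(1+\delta)^2$. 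The $\max$ in the update rule does not close this gap; the $\restrict^2$ branch gives an \emph{upper} bound on $S_2^{(i+1)}$ in terms of $f_2(\tilde x^{(i+1)})$, not a sharper lower bound. So the packing route that worked for Algorithm~\ref{alg:twoapproxAlgo} (where $(1+\delta)^4=1+\varepsilon$ was specifically tuned for it) does not transfer.

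The paper takes a different route: it compares $Q$ directly against the greedy optimum $P^*=\{x_*^{(1)},\dots,x_*^{(|P^*|)}\}$ produced by Algorithm~\ref{alg:smallestoneexact} and proves by induction on $k$ the interleaving statement ``if $|Q|\ge 2k-1$ then $|P^*|\ge k$, and if $|Q|\ge 2k$ then $f_1(x_*^{(k)})\ge f_1(x^{(2k)})$''. The inductive step hinges on the $\restrict^2$ call: one shows $f_1(\tilde x^{(2k-1)})\le B_1^{(2k-2)}<f_1(x_*^{(k-1)})$, so $\tilde x^{(2k-1)}$ is not covered by $x_*^{(1)},\dots,x_*^{(k-1)}$; the greedy element that covers it then satisfies $f_2(x_*^{(i)})\le(1+\varepsilon)f_2(\tilde x^{(2k-1)})\le S_2^{(2k)}$, whence $f_1(x_*^{(k)})\ge\opt_1(S_2^{(2k)})\ge f_1(x^{(2k)})$. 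This argument uses the \emph{structure} of the greedy $P^*$, not merely that it is a one-exact $\varepsilon$-Pareto set, and that is what lets $(1+\delta)^3=1+\varepsilon$ suffice.
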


Similar to the proof of correctness for the original algorithm for computing $\varepsilon$-Pareto sets by Diakonikolas and Yannakakis~\cite{Diakonikolas+Yannakakis:approx-pareto-sets}, the proof of Theorem~\ref{thm:modalgo} is based on comparing the cardinality of the set computed by Algorithm~\ref{alg:modifiedDiakonikolas} to the cardinality of a smallest one-exact $\varepsilon$-Pareto set. This smallest one-exact $\varepsilon$-Pareto set is assumed to be computed by a greedy procedure similar to the ones given by Diakonikolas and Yannakakis~\cite{Diakonikolas+Yannakakis:approx-pareto-sets}, Koltun and Papadimitriou~\cite{Koltun+Papadimitriou:approx-dom-repr}, and Vassilvitskii and Yannakakis~\cite{Vassilvitskii+Yannakakis:trade-off-curves}. The greedy procedure is based on an (exact) routine for $\constrained$ and is given as Algorithm~\ref{alg:smallestoneexact} in the appendix. The fact that $\constrained^1$ and $\constrained^2$ are polynomially equivalent via the same reduction as for $\dualrestrict^1_\delta$ and $\restrict^2_\delta$ yields the following corollary:

\begin{corollary}\label{cor:constrained}
	For a biobjective optimization problem, it is possible to compute a smallest one-exact $\varepsilon$-Pareto set in fully polynomial time if a polynomial-time algorithm for $\constrained^1$ or $\constrained^2$ is available. 
\end{corollary}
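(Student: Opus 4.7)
The plan is to combine two ingredients: the greedy procedure stated as Algorithm~\ref{alg:smallestoneexact} in the appendix, which computes a smallest one-exact $\varepsilon$-Pareto set using a routine for $\constrained^1$, and a polynomial-time equivalence between $\constrained^1$ and $\constrained^2$ that mirrors the equivalence between $\dualrestrict^1_\delta$ and $\restrict^2_\delta$ established by Diakonikolas and Yannakakis~\cite{Diakonikolas+Yannakakis:approx-pareto-sets}.

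If a polynomial-time algorithm for $\constrained^1$ is given, I would simply invoke Algorithm~\ref{alg:smallestoneexact}. By construction, this greedy procedure produces a smallest one-exact $\varepsilon$-Pareto set while making only polynomially many calls to $\constrained^1$, so its overall running time is fully polynomial.

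If instead only a polynomial-time algorithm for $\constrained^2$ is available, I would reduce each call $\constrained^1(B_2)$ to $\mathcal{O}(M)$ calls of $\constrained^2$ via a binary search on the value of~$f_1$. For a current guess~$b$, a call to $\constrained^2(b)$ returns a feasible solution that minimizes~$f_2$ subject to $f_1 \leq b$ (or answers NO); then $b \geq \opt_1(B_2)$ holds if and only if the returned $f_2$-value is at most~$B_2$. By Assumption~\ref{assum:M}, all $f_1$-values lie in $[2^{-M}, 2^M]$ and any two distinct such values differ by at least~$2^{-2M}$, so the binary search terminates with the exact optimum after $\mathcal{O}(M)$ steps. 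Feeding this reduction into Algorithm~\ref{alg:smallestoneexact} yields a fully polynomial algorithm in this case as well, at the price of only an additional factor of $\mathcal{O}(M)$ in the number of subproblems solved.

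The main technical point is ensuring that the binary search returns an \emph{exact} optimum rather than an approximation; this is precisely what Assumption~\ref{assum:M} delivers by forcing distinct $f_1$-values to be well-separated, so a search interval of width below~$2^{-2M}$ can contain at most one attainable objective value. The analogous reduction works in the opposite direction, so the two cases are genuinely symmetric, which completes the proposal.
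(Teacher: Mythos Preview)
Your approach is essentially the paper's: run Algorithm~\ref{alg:smallestoneexact} and use the polynomial equivalence between $\constrained^1$ and $\constrained^2$ (the $\delta=0$ limit of the $\dualrestrict^1_\delta$/$\restrict^2_\delta$ reduction) to supply whichever routine is missing. One small slip: Algorithm~\ref{alg:smallestoneexact} takes \emph{both} $\constrained^1$ and $\constrained^2$ as input, so even in your first case you cannot ``simply invoke'' it --- you must also simulate the $\constrained^2$ calls via the binary-search reduction you describe (and symmetrically in the second case); with that correction your argument is complete and matches the paper.
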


Note that $\constrained$ can, in particular, be solved efficiently if all feasible solutions are given explicitly in the input of a biobjective optimization problem. Another class of problems where $\constrained$ is efficiently solvable are biobjective linear programs. Thus, in both of these cases, a smallest one-exact $\varepsilon$-Pareto set can be found in fully polynomial time. 

\subsection{Impossibility Result for Three or More Objectives}
In contrast to the biobjective case, we now demonstrate that no polynomial-time generic algorithm based on $\dualrestrict$ can produce a constant-factor approximation on the size of a smallest one-exact $\varepsilon$-Pareto set for general problems with more than two objective functions.

\begin{theorem}
	For any $\varepsilon>0$ and any positive integer~$n\in\mathbb{N}_+$, there does not exist an algorithm that computes a one-exact $\varepsilon$-Pareto set~$P$ such that $|P| < n \cdot |P^*|$ for every 3-objective minimization problem and generates feasible solutions only via solving $\dualrestrict_\delta$ for values of~$\delta$ such that $\frac 1 \delta$ is polynomial in the encoding length of the input.
\end{theorem}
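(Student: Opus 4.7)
The plan is to lift the construction from the proof of Theorem~\ref{thm:generictwo} to three objectives so as to obtain, for each given $\varepsilon>0$ and each $n\in\N_+$, a pair of 3-objective minimization instances $I_A$ and $I_B$ satisfying two properties: \emph{(i)} $|P^*(I_B)|\geq n\cdot |P^*(I_A)|$, and \emph{(ii)} every $\dualrestrict_\delta(S_2,S_3)$-call with $\frac{1}{\delta}$ polynomial in the encoding length admits a common answer that is simultaneously valid on $I_A$ and on $I_B$. Once these are established, the conclusion follows by the same adversary argument as in Theorem~\ref{thm:generictwo}: any algorithm generating feasible solutions only through such $\dualrestrict$-calls sees identical oracle responses on the two instances and therefore returns the same set~$P$ on both; since~$P$ must be a one-exact $\varepsilon$-Pareto set of~$I_B$, we obtain $|P|\geq |P^*(I_B)|\geq n\cdot |P^*(I_A)|$, contradicting the supposed guarantee $|P|<n\cdot |P^*|$ when the algorithm is run on $I_A$.

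To obtain~\emph{(i)}, I would place the~$n$ extra solutions $h_1,\ldots,h_n$ of~$I_B$ on a staircase in the $(f_2,f_3)$-plane that shares a common first-coordinate value, choosing consecutive step sizes larger than $(1+\varepsilon)$ in each of~$f_2,f_3$ and letting the two coordinates vary in opposite directions along the staircase. This makes the hidden points pairwise $(1,1+\varepsilon,1+\varepsilon)$-incomparable. The solutions common to~$I_A$ and~$I_B$ are arranged so that (a)~none of them $(1,1+\varepsilon,1+\varepsilon)$-approximates any~$h_i$, which forces every one-exact $\varepsilon$-Pareto set of~$I_B$ to include, for each~$i$, a distinct representative for~$h_i$; and (b)~in~$I_A$ they admit a one-exact $\varepsilon$-Pareto set whose cardinality is at most $|P^*(I_B)|/n$, yielding the desired factor-$n$ gap.

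To obtain~\emph{(ii)}, I would adopt the decoy idea from Theorem~\ref{thm:generictwo} in the new dimension: for each hidden point~$h_i$ I would include in both~$I_A$ and~$I_B$ a decoy~$d_i$ that shares the first coordinate of~$h_i$ and differs from it by an additive shift of~$1$ in each of~$f_2$ and~$f_3$. By Assumption~\ref{assum:M}, the coordinates of~$h_i$ may be chosen exponentially large in the encoding length, so an additive shift of~$1$ corresponds to a multiplicative perturbation well within the factor~$1+\delta$ for every~$\delta$ with $\frac{1}{\delta}$ polynomial. Consequently,~$d_i$ is always a valid relaxed answer to a $\dualrestrict_\delta$-call in which~$h_i$ would be the correct strict answer, and since the decoys are present in both instances, the oracle may respond identically in both of them; distinguishing~$I_A$ from~$I_B$ would require a call with $\frac{1}{\delta}$ exponentially large, exactly as in Theorem~\ref{thm:generictwo}.

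The main obstacle will be engineering the visible solutions (the decoys together with suitable anchor points) so that the two demands of~\emph{(i)} balance simultaneously: the decoys must lie close enough to their respective hidden points in the $(f_2,f_3)$-plane to shadow them in the oracle, yet the collection of visible solutions in~$I_A$ must admit a one-exact $\varepsilon$-Pareto set of cardinality at most $|P^*(I_B)|/n$, and no visible solution may $(1,1+\varepsilon,1+\varepsilon)$-approximate any hidden point of~$I_B$. This tension, which already appears in the biobjective construction of Theorem~\ref{thm:generictwo} but can be resolved there because only a single hidden point is involved, is addressed in 3D by iterating the biobjective gadget along the staircase and using the second ``free'' coordinate as a separator between gadgets, so that the net effect of adding the~$n$ hidden points of~$I_B$ is to inflate the minimum-cardinality one-exact $\varepsilon$-Pareto set by a factor of exactly~$n$ while leaving the $\dualrestrict_\delta$-transcript unchanged for all polynomial~$\frac{1}{\delta}$.
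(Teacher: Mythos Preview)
Your high-level plan (build two instances that are indistinguishable under $\dualrestrict_\delta$ for polynomial $1/\delta$, yet whose smallest one-exact $\varepsilon$-Pareto sets differ in size by a factor~$n$) matches the paper's, but the concrete construction you sketch does not produce the factor-$n$ gap, and the paper's construction differs from yours at precisely the point you flag as ``the main obstacle''.

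You place the hidden points $h_1,\dots,h_n$ (and decoys $d_i=h_i+(0,1,1)$) on a staircase in the $(f_2,f_3)$-plane with a common $f_1$-value. Since adding $n$ solutions can raise $|P^*|$ by at most~$n$, the requirement $|P^*(I_A)|\le |P^*(I_B)|/n$ forces $|P^*(I_A)|=1$, i.e.\ a single anchor~$a$ must cover every $d_i$. Because the $d_i$ are spread by more than $(1+\varepsilon)$ in \emph{both} $f_2$ and $f_3$, such an anchor must have $f_2(a)\le(1+\varepsilon)\min_i f_2(d_i)$ and $f_3(a)\le(1+\varepsilon)\min_i f_3(d_i)$. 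But then $a$ also $(1,1+\varepsilon,1+\varepsilon)$-covers every hidden point except possibly the two extremes: for any interior~$i$, both $f_2(h_i)$ and $f_3(h_i)$ exceed the corresponding minima by more than a factor $(1+\varepsilon)$, so $f_\ell(a)<(1+\varepsilon)f_\ell(h_i)$ for $\ell=2,3$. Hence $|P^*(I_B)|\le 3$ rather than $n+1$, and the adversary argument yields no super-constant lower bound. ``Iterating the biobjective gadget with the free coordinate as separator'' does not help either: separating copies in $f_3$ forces each copy to contribute its own anchor to $|P^*(I_A)|$, and the ratio collapses to~$2$.

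The paper resolves the tension with a different geometry. The decoys $x_1,\dots,x_n$ lie on a staircase in the $(f_1,f_2)$-plane (so $f_1$ varies along them) while all share the \emph{same} $f_3$-value; the single anchor $x_0$ has the smallest $f_1$, the smallest $f_2$, and $f_3(x_0)$ exactly at the $(1+\varepsilon)$-boundary relative to that common level. Each hidden point $x'_i$ coincides with its decoy in $f_1$ and $f_2$ and differs only by $-1$ in $f_3$, pushing it just past the boundary. Thus $\{x_0\}$ covers every decoy ($|P^*(I_1)|=1$) but no hidden point ($|P^*(I_2)|\ge n+1$), and indistinguishability requires $1/\delta$ exponential in the encoding of~$f_3(x_0)$. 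The idea you are missing is to keep all decoys at one common level in the coordinate where the boundary trick is applied, and to run the staircase through the other two coordinates---including the exact one~$f_1$.
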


\begin{proof}
	Given $\varepsilon>0$ and $n\in\mathbb{N}_+$, we construct two instances~$I_1$ and~$I_2$ with $I_1 = (\{x_0,x_1,\ldots,x_n\},f)$ and $I_2 = (\{x_0,x_1,\ldots,x_n,x'_1,\ldots,x'_n\},f)$, where
	$$f(x_i) = \left(
	\begin{array}{c}
	f_1(x_0) + n - i\\
	(1+\varepsilon)^{2i} \cdot f_2(x_0)\\
	\frac 1 {1+\varepsilon} \cdot f_3(x_0)\\
	\end{array}
	\right) \text{and }
	f(x'_i) = \left(
	\begin{array}{c}
	f_1(x_0) + n - i\\
	(1+\varepsilon)^{2i} \cdot f_2(x_0)\\
	\frac 1 {1+\varepsilon} \cdot f_3(x_0)-1\\
	\end{array}
	\right), i = 1,\ldots,n,$$
	(the values~$f_1(x_0)$, $f_1(x_0)$, and~$f_2(x_0)$ are defined later).
	Then the solution~$x_0$ $(1,1+\varepsilon,1+\varepsilon)$-approximates~$x_i$ for $i=1,\ldots,n$, but $x_0$ does not $(1,1+\varepsilon,1+\varepsilon)$-approximate~$x'_i$ for any~$i$ due to the $f_3$-values. Also, no two solutions from the set $\{x_1,\ldots,x_n,x'_1,\ldots,x'_n\}$ approximate each other except for, possibly, $x_i$ and~$x'_i$ for $i = 1,\ldots, n$.
	Thus, the set~$\{x_0\}$ is a one-exact $\varepsilon$-Pareto set in instance~$I_1$, but any one-exact $\varepsilon$-Pareto set in instance~$I_2$ consists of at least $n+1$~solutions. Hence, an algorithm that computes a one-exact $\varepsilon$-Pareto set~$P$ with $|P| < n \cdot |P^*|$ has to be able to distinguish between~$I_1$ and~$I_2$, i.e., detect the existence of at least one~$x'_i$, using only $\dualrestrict$. 
	
	Following a similar argument as in the proof of Theorem~\ref{thm:generictwo}, one can show that, in order to distinguish between the instances~$I_1$ and~$I_2$, an algorithm has to solve $\dualrestrict_\delta^1$ for some~$\delta$ with $\frac 1 \delta > \frac 1 {1+\varepsilon} \cdot f_2(x_0) - 1$.
	Since~$f_2(x_0)$ can be exponential in the encoding length of the input, the claim follows.
	\qed
\end{proof}

\section{Conclusion and Future Research}
This article addresses the task of computing approximate Pareto sets for multiobjective optimization problems. In particular, we strive for such approximate Pareto sets that are exact in one -- without loss of generality the first -- objective function and obtain an approximation guarantee of $1+\varepsilon$ in all other objectives. We show the existence of such $(1,1+\varepsilon, \ldots, 1+\varepsilon)$-approximate Pareto sets of polynomial cardinality under mild assumptions on the considered multiobjective problem. Our main results address the relation between such a so-called one-exact $\varepsilon$-Pareto set and a singleobjective auxiliary problem, the so-called $\dualrestrict$ problem. Interestingly, this auxiliary problem has been considered in the literature before, but its full potential has not been revealed so far. In fact, we prove equivalence of computing a $(1,1+\varepsilon, \ldots, 1+\varepsilon)$-approximate Pareto set in polynomial time and solving $\dualrestrict$ in polynomial time. This result complements the seminal work of Papadimitriou and Yannakakis~\cite{Papadimitriou+Yannakakis:multicrit-approx}, who characterize the class of problems for which a $(1+\varepsilon, \ldots, 1+\varepsilon)$-approximate Pareto set is polynomial-time computable using the so-called $\gap$ problem. With respect to the approximation quality, one cannot hope for a polynomial-time computable approximate Pareto set that is exact in more than one objective function since this would imply the polynomial-time solvability of a related biobjective problem. In this sense, the factor of $(1,1+\varepsilon, \ldots, 1+\varepsilon)$ obtained here is best possible. Additionally, we provide an algorithm that approximates the cardinality of a smallest one-exact $\varepsilon$-approximate Pareto set by a factor of~$2$ for biobjective problems and show that this factor is best possible. Finally, we demonstrate that, using $\dualrestrict$, it is not possible to obtain any constant-factor approximation on the cardinality for problems with more than two objectives efficiently.

\smallskip
\enlargethispage{1.8\baselineskip}

It should be pointed out that our work provides a general method for computing polynomially-sized one-exact $\varepsilon$-Pareto sets by using an algorithm for the $\dualrestrict$ problem.  If applied to multiobjective $\spanningtree$, our work imposes the first polynomial-time algorithm for computing a one-exact $\varepsilon$-Pareto set and, thus, yields the best possible approximation guarantee for this problem. For multiobjective $\shortestpath$, our general algorithms have running times that are competitive with the running time of the specialized algorithm of~\cite{Tsaggouris+Zaroliagis:mult-shortest-path}. For biobjective $\shortestpath$, our algorithm additionally provides a worst-case guarantee on the cardinality of the computed one-exact $\varepsilon$-Pareto set. Future research could focus on the design of additional problem-specific algorithms that compute one-exact $\varepsilon$-Pareto sets for certain multiobjective (combinatorial) optimization problems with faster running times than the general methods provided here.


%
%


\bibliographystyle{spmpsci}      
\bibliography{Literatur}   

%
%

\clearpage

\section*{Appendix}

    In Algorithm~\ref{alg:smallestoneexact}, we formally state an algorithm that computes a smallest one-exact $\varepsilon$-Pareto set if a subroutine for solving $\constrained$ is given. We prove its correctness in Theorem~\ref{thm:smallestoneexact}. Finally, we present a proof of Theorem~\ref{thm:modalgo}.

\begin{algorithm}[!ht]
	\SetKw{Compute}{compute}
	\SetKw{Break}{break}
	\SetKwInOut{Input}{input}\SetKwInOut{Output}{output}
	\SetKwComment{command}{right mark}{left mark}
	
	\Input{an instance $(X,f)$ of a biobjective minimization problem, $\varepsilon>0$, an algorithm for $\constrained^1$, an algorithm for $\constrained^2$}
	
	\Output{a one-exact $\varepsilon$-Pareto set for $(X,f)$}
	
	\BlankLine
	
	\lIf{$\constrained^2(2^M) = \textnormal{NO}$}{halt}
	
	$\xleft \leftarrow \constrained^1(2^M)$
	
	$\tilde{x}^{(1)} \leftarrow \constrained^2(2^M)$
	
	$B_2^{(1)} \leftarrow (1+\varepsilon) \cdot f_2(\tilde{x}^{(1)})$
	
	$x^{(1)} \leftarrow \constrained^1(B_2^{(1)})$
	
	$B_1^{(1)} \leftarrow f_1(x^{(1)}) - 2^{-2M}$
	
	$P^* \leftarrow \{x^{(1)}\}$
	
	$i \leftarrow 1$
	
	\While{$B_1^{(i)} \geq f_1(\xleft)$}{
		
		$\tilde{x}^{(i+1)} \leftarrow \constrained^2(B_1^{(i)})$
		
		$B_2^{(i+1)} \leftarrow (1+\varepsilon) \cdot f_2(\tilde{x}^{(i+1)})$
		
		$x^{(i+1)} \leftarrow \constrained^1(B_2^{(i+1)})$
		
		$B_1^{(i+1)} 
		\leftarrow f_1(x^{(i+1)}) - 2^{-2M}$
		
		$P^* \leftarrow P^* \cup \{x^{(i+1)}\}$
		
		$i \leftarrow i+1$}
	\Return $P^*$
	
	\caption{Greedy algorithm for computing a smallest one-exact $\varepsilon$-Pareto set  for biobjective optimization problems.} \label{alg:smallestoneexact}
\end{algorithm}

\begin{theorem}\label{thm:smallestoneexact}
	Algorithm~\ref{alg:smallestoneexact} computes a smallest one-exact $\varepsilon$-Pareto set~$P^*$ by solving $\mathcal{O}(|P^*|)$ instances of $\constrained^1$ and of $\constrained^2$.
\end{theorem}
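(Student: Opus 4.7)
The plan is to establish the three parts of the claim in order: (i) the returned set $P^*$ is a one-exact $\varepsilon$-Pareto set; (ii) no one-exact $\varepsilon$-Pareto set has fewer elements than $|P^*|$; (iii) the number of subproblem solves is $\mathcal{O}(|P^*|)$. Throughout, I write $P^* = \{x^{(1)},\ldots, x^{(k)}\}$ in the order produced by the algorithm and set $B_1^{(0)}\colonequals 2^M$ by convention, so that the case distinctions for the initialization and the loop body look identical. A preliminary observation is that $f_1(x^{(i)})\le B_1^{(i-1)}$ for every $i$, because $\tilde x^{(i)}=\constrained^2(B_1^{(i-1)})$ is itself a feasible candidate for $\constrained^1(B_2^{(i)})$; combined with $B_1^{(i)}=f_1(x^{(i)})-2^{-2M}$, this shows that the $f_1$-values of the $x^{(i)}$ are strictly decreasing, hence the solutions in $P^*$ are pairwise distinct, and the loop terminates (since $B_1^{(i)}$ decreases by at least $2^{-2M}$ per iteration and is lower-bounded by $f_1(\xleft)\ge 2^{-M}$).

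For (i), given any feasible $y\in X$, by Assumption~\ref{assum:M} and the loop termination condition $B_1^{(k)}<f_1(\xleft)\le f_1(y)\le 2^M=B_1^{(0)}$, there is a unique $i\in\{1,\ldots,k\}$ with $B_1^{(i)}<f_1(y)\le B_1^{(i-1)}$. From $f_1(y)\le B_1^{(i-1)}$ and the definition of $\tilde x^{(i)}$ as the $f_2$-minimizer under this budget, I obtain $f_2(y)\ge f_2(\tilde x^{(i)})$, hence $f_2(x^{(i)})\le B_2^{(i)}=(1+\varepsilon)\,f_2(\tilde x^{(i)})\le(1+\varepsilon)\,f_2(y)$. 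For the first objective, $f_1(y)>B_1^{(i)}=f_1(x^{(i)})-2^{-2M}$ combined with the $2^{-2M}$-separation from Assumption~\ref{assum:M} forces $f_1(y)\ge f_1(x^{(i)})$. So $x^{(i)}$ $(1,1+\varepsilon)$-approximates $y$.

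The step I expect to be the main obstacle is (ii), the matching lower bound on the cardinality. My plan is to use the auxiliary solutions $\tilde x^{(1)},\ldots,\tilde x^{(k)}$ as ``probe points.'' Let $P^o$ be any one-exact $\varepsilon$-Pareto set. For each $i$, choose some $y_i\in P^o$ that $(1,1+\varepsilon)$-approximates $\tilde x^{(i)}$. Then $f_1(y_i)\le f_1(\tilde x^{(i)})\le B_1^{(i-1)}$ and $f_2(y_i)\le(1+\varepsilon)\,f_2(\tilde x^{(i)})=B_2^{(i)}$, so $y_i$ is feasible for $\constrained^1(B_2^{(i)})$, which forces $f_1(x^{(i)})\le f_1(y_i)$. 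Chaining, $f_1(y_i)\le B_1^{(i-1)}<f_1(x^{(i-1)})\le f_1(y_{i-1})$ for $i\ge 2$, so the $y_i$ are pairwise distinct, giving $|P^o|\ge k=|P^*|$. The key insight that has to be gotten right is to probe with the $\tilde x^{(i)}$ rather than with the $x^{(i)}$ themselves: it is the tight $f_2$-minimizer subject to the previous iteration's $f_1$-budget that creates the obstruction every competing set must respect.

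Finally, (iii) is immediate: the initialization performs $\mathcal{O}(1)$ subproblem solves, and each pass through the while loop performs exactly one call to $\constrained^1$ and one call to $\constrained^2$ while appending one element to $P^*$. Hence the total number of solved instances is $\mathcal{O}(k)=\mathcal{O}(|P^*|)$ for each of the two subproblems.
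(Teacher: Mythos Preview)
Your proof is correct. Parts~(i) and~(iii) are essentially identical to the paper's argument, just organized a bit more uniformly via the convention $B_1^{(0)}=2^M$.

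Part~(ii) differs genuinely from the paper. The paper proves minimality by an exchange argument: it shows inductively that for every prefix $\{x^{(1)},\ldots,x^{(k)}\}$ there exists a smallest one-exact $\varepsilon$-Pareto set containing that prefix, by replacing in an optimal set the approximator of $\tilde x^{(k+1)}$ with $x^{(k+1)}$. Your approach is an injection: you exhibit the $\tilde x^{(i)}$ as witnesses that any one-exact $\varepsilon$-Pareto set must contain $k$ pairwise distinct solutions, because any approximator $y_i$ of $\tilde x^{(i)}$ satisfies $f_1(x^{(i)})\le f_1(y_i)\le B_1^{(i-1)}<f_1(x^{(i-1)})$, sandwiching the $f_1$-values of the $y_i$ into disjoint intervals. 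Your argument is shorter and avoids the induction; the paper's exchange argument, on the other hand, yields the slightly stronger structural statement that every prefix of the greedy output can be completed to an optimal set, which is sometimes useful but is not needed for the theorem as stated.
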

\begin{proof}
	First, note that
	\begin{align*}
	    B_1^{(i+1)} + 2^{-2M} & = f_1(x^{(i+1)}) = \opt_1(B_2^{(i+1)}) \leq \opt_1(f_2(\tilde{x}^{(i+1)})) \\ & \leq f_1(\tilde{x}^{(i+1)}) \leq B_1^{(i)}
	\end{align*}
	for $i = 1,2,\ldots$, where all the steps follow immediately from the algorithm and the definition of $\opt_1(\cdot)$. Thus, the termination condition $B_1^{(i)} < f_1(\xleft)$ is fulfilled after finitely many iterations and Algorithm~\ref{alg:smallestoneexact} returns a set~$P^*$ of finite cardinality. Moreover, we obtain the following statements:
	\begin{enumerate}[(a)]
		\item $x^{(1)}$ $(1,1+\varepsilon)$-approximates any solution~$x' \in X$ with $f_1(x') \geq f_1(x^{(1)})$. This is because in the $f_2$-component, we have
		$$f_2(x^{(1)}) \leq B_2^{(1)} = (1+\varepsilon) \cdot f_2(\tilde{x}^{(1)}) = (1+\varepsilon) \cdot \opt_2(2^M) \leq (1+\varepsilon) \cdot f_2(x').$$
		\item For $i = 2,\ldots,|P^*|$, the solution~$x^{(i)}$ $(1,1+\varepsilon)$-approximates any solution~$x' \in X$ with $f_1(x^{(i-1)}) > f_1(x') \geq f_1(x^{(i)})$ because in the $f_2$-component, we have
		\begin{align*}
		f_2(x^{(i)}) &\leq B_2^{(i)} = (1+\varepsilon) \cdot f_2(\tilde{x}^{(i)}) = (1+\varepsilon) \cdot\opt_2(B_1^{(i-1)})\\
		& \leq (1+\varepsilon) \cdot \opt_2(f_1(x^{(i-1)}))\leq (1+\varepsilon) \cdot f_2(x').
		\end{align*}
		\item There are no solutions~$x' \in X$ with $f_1(x') < f_1(x^{(|P^*|)})$ since otherwise we would have
		$$f_1(x') \leq B_1^{(|P^*|)} < f_1(\xleft) = \opt_1(2^M),$$
		where the strict inequality holds due to the termination condition of the algorithm. 
	\end{enumerate}
	Statements~(a)--(c) imply that the set~$P^*$ computed by Algorithm~\ref{alg:smallestoneexact} is a one-exact $\varepsilon$-Pareto set.
	
	We now show via induction that, for all $k \in \{1,\ldots,|P^*|\}$, there exists a smallest one-exact $\varepsilon$-Pareto set~$P^*_k$ such that $\{x^{(1)},\ldots, x^{(k)}\} \subseteq P^*_k$. This fact for $k = |P^*|$ then yields $|P^*| \leq |P^*_{|P^*|}|$, which completes the proof.
	
	In order to prove the claim for $k = 1$, consider a smallest one-exact $\varepsilon$-Pareto set~$P^*_0$. Let $x_1 \in P^*_0$ be a solution that $(1,1+\varepsilon)$-approximates $\tilde{x}^{(1)}$ and let $x' \in X$ be an arbitrary solution $(1,1+\varepsilon)$-approximated by $x_1$. In the second component, we have $f_2(x_1) \leq (1+\varepsilon) \cdot f_2(\tilde{x}^{(1)}) = B_2^{(1)}$, so, in the first component, we have
	$$f_1(x^{(1)}) = \opt_1(B_2^{(1)}) \leq f_1(x_1) \leq f_1(x').$$
	Now, (a) implies that $x^{(1)}$ also $(1,1+\varepsilon)$-approximates $x'$ and, thus, $x_1$ can be replaced by $x^{(1)}$ in~$P^*_0$.
	
	For the induction step $k \rightarrow k+1$, let $P^*_k$ be a smallest one-exact $\varepsilon$-Pareto set with $\{x^{(1)}, \ldots, x^{(k)}\} \subseteq P^*_k$. Let $x_{k+1} \in P^*_k$ be a solution that $(1,1+\varepsilon)$-approximates $\tilde{x}^{(k+1)}$ and note that $f_1(x_{k+1}) \leq f_1(\tilde{x}^{(k+1)}) \leq B_1^{(k)} < f_1(x^{(k)})$, so we must have $x_{k+1} \in P^*_k \setminus \{x^{(1)}, \ldots, x^{(k)}\}$. Now, let $x' \in X$ be an arbitrary solution that is $(1,1+\varepsilon)$-approximated by $x_{k+1}$ but not by any other solution in the set~$P^*_k$. In the first component, we then have
	$$f_1(x^{(k+1)}) = \opt_1(B_2^{(k+1)}) \leq f_1(x_{k+1}) \leq f_1(x')< f_1(x^{(k)}),$$
	where the strict inequality holds due to~(a) and~(b) and because~$x'$ is not $(1,1+\varepsilon)$-approximated by $x^{(1)}, \ldots, x^{(k)}$.
	Now,~(b) implies that $x^{(k+1)}$ also $(1,1+\varepsilon)$-approximates $x'$ and, thus, $x_{k+1}$ can be replaced by $x^{(k+1)}$ in~$P^*_k$.
	\qed
\end{proof}

\noindent	We now present the proof of Theorem~\ref{thm:modalgo}.
	\begin{proof}[of Theorem~\ref{thm:modalgo}]
	The proof is similar to the proof of the non-modified algorithm by Diakonikolas and Yannakakis~\cite{Diakonikolas+Yannakakis:approx-pareto-sets} and to the proof of Theorem~\ref{thm:smallestoneexact}. In fact, the first part of the proof, where it is shown that Algorithm~\ref{alg:modifiedDiakonikolas} correctly computes a one-exact $\varepsilon$-Pareto set~$Q$ follows exactly the same steps as the proof of Theorem~\ref{thm:smallestoneexact}, so we omit this part here.

	We prove the bound on the cardinality of~$Q$ by comparing~$Q$ to a smallest one-exact $\varepsilon$-Pareto set $P^* = \{x_*^{(1)}, \ldots, x_*^{(|P^*|)}\}$ computed by Algorithm~\ref{alg:smallestoneexact}.
	
	We prove that the following statement holds for all $k \in \N$ by induction on~$k$ (which immediately implies the claim): If $|Q| \geq 2k-1$, then $|P^*| \geq k$, and if $|Q| \geq 2k$, then $f_1(x_*^{(k)}) \geq f_1(x^{(2k)})$.
	
	For $k = 1$, it suffices to show that, if $Q \geq 2$, we have $f_1(x_*^{(1)}) \geq f_1(x^{(2)})$. In order to see this, note that	
	\begin{align*}
	f_2(x_*^{(1)}) &\leq (1+\varepsilon) \cdot \opt_2(2^M)
	\leq (1+\varepsilon) \cdot f_2(\tilde{x}^{(1)})
	= \frac {1+\varepsilon} {1+\delta} \cdot S_2^{(1)} \leq S_2^{(2)}.
	\end{align*}
	This implies that $f_1(x_*^{(1)}) \geq \opt_1(S_2^{(2)}) \geq f_1(x^{(2)})$.
	
	Now suppose that, for some $k \geq 2$, we have $|Q| \geq 2k-2$ and $f_1(x_*^{(k-1)}) \geq f_1(x^{(2k-2)})$. We first show that if $|Q| \geq 2k-1$, then $|P^*| \geq k$.
	Recall that the sequence $(f_1(x_*^{(1)}), \ldots, f_1(x_*^{(|P^*|)}))$ is strictly decreasing. Since the solution~$\xleft$ must be $(1,1+\varepsilon)$-approximated by some solution in $P^*$, we must have $f_1(x_*^{(|P^*|)}) \leq f_1(\xleft)$, but for $x_*^{(k-1)}$, we have
	$$f_1(x_*^{(k-1)}) \geq f_1(x^{(2k-2)}) > B_1^{(2k-2)} \geq f_1(\xleft),$$
	so $|P^*| > k-1$.
	
	Finally, we show that if $|Q| \geq 2k$, then $f_1(x_*^{(k)}) \geq f_1(x^{(2k)})$.	Note that we have $f_1(\tilde{x}^{(2k-1)}) \leq B_1^{(2k-2)} < f_1(x^{(2k-2)}) \leq f_1(x_*^{(k-1)})$ due to the induction hypothesis, so $\tilde{x}^{(2k-1)}$ is not $(1,1+\varepsilon)$-approximated by $x_*^{(1)},\ldots,x_*^{(k-1)}$. Let $x_*^{(i)} \in P^*$ be a solution that $(1,1+\varepsilon)$-approximates $\tilde{x}^{(2k-1)}$. In the second component, we then have
	$$f_2(x_*^{(i)}) \leq (1+\varepsilon) \cdot f_2(\tilde{x}^{(2k-1)}) \leq \frac {1+\varepsilon} {1+\delta} \cdot S_2^{(2k-1)} \leq S_2^{(2k)}.$$
	The above argument implies $i \geq k$, so
	$$f_1(x_*^{(k)}) \geq f_1(x_*^{(i)}) \geq \opt_1(S_2^{(2k)}) \geq f_1(x^{(2k)}),$$
	which finishes the proof.
	\qed
\end{proof}

\end{document}